\documentclass[12pt]{article}
\usepackage{graphicx}
\usepackage{amsmath,amsthm,amssymb,mathrsfs,mathtools}
\usepackage{unicode-math}
\usepackage[OT1]{fontenc}
\usepackage{etoc}
\usepackage{float}
\usepackage{titlesec}
\usepackage{multirow,array}
\usepackage{diagbox}
\usepackage[top=1.25in, bottom=1.25in, left=1.2in, right=1.2in]{geometry}
\usepackage{fancyhdr}

\titleformat{\section}
		{\scshape \large \bfseries\center}
         {\thesection}
        {0.5em}
        { }
        []
\titleformat{\subsection}
        {\normalfont\bfseries}
        {\thesubsection}
        {0.5em}
        {}
        []

\allowdisplaybreaks

\def\subsectiontitle{}
\def\subsubsectiontitle{}
\usepackage{tikz}
\usetikzlibrary{automata, positioning}

\usepackage{kpfonts}
\usepackage{inconsolata}

\usepackage{makecell}
\usepackage{cellspace}
\usepackage[shortlabels]{enumitem}

\usepackage[colorlinks,breaklinks=true]{hyperref}
\usepackage{breakcites}
\newcommand{\abs}[1]{\left| #1 \right|} 

\usepackage{xcolor}
\AtBeginDocument{%
	\hypersetup{%
    linkcolor={red!50!black},%
    citecolor={blue!50!black},%
    urlcolor={blue!80!black}%
}%
}%

\usepackage[nameinlink,noabbrev,sort,capitalise]{cleveref}
\makeatletter
\def\ps@pprintTitle{%
 \let\@oddhead\@empty
 \let\@evenhead\@empty
 \def\@oddfoot{\emph{Preliminary draft -- comments welcome}\hfill\emph{This draft: \today}}%
 \let\@evenfoot\@oddfoot}
\makeatother
\usepackage{etoolbox}
\patchcmd{\pprintMaketitle}
  {\fi\hrule}
  {\fi\ifvoid\extrainfobox\else\unvbox\extrainfobox\par\vskip10pt\fi\hrule}
  {}{}

\newsavebox\extrainfobox

\newtheorem{proposition}{Proposition}
\crefname{proposition}{Proposition}{Propositions}

\newtheorem{theorem}{Theorem}
\crefname{theorem}{Theorem}{Theorems}

\crefname{corollary}{Corollary}{Corollaries}

\newtheorem{lemma}{Lemma}
\crefname{lemma}{Lemma}{Lemmas}

\newtheorem{assumption}{Assumption}
\crefname{assumption}{Assumption}{Assumptions}

\crefname{axiom}{Axiom}{Axioms}

\newtheorem{definition}{Definition}
\crefname{definition}{Definition}{Definitions}

\theoremstyle{remark}
\newtheorem{remark}{Remark}
\crefname{remark}{Remark}{Remarks}

\theoremstyle{remark}

\crefname{claim}{Claim}{Claims}

\theoremstyle{definition}
\newtheorem{example}{Example}
\crefname{example}{Example}{Examples}

\crefname{problem}{Problem}{Problems}

\newcommand{\norm}[1]{\left\lVert #1 \right\rVert}
\usepackage{indentfirst}

\allowdisplaybreaks

\let\oldfootnote\footnote
\renewcommand\footnote[1]{\oldfootnote{\hspace{.4mm}#1}}

\makeatletter
\renewenvironment{proof}[1][\proofname] {\par\pushQED{\qed}\normalfont\topsep6\p@\@plus6\p@\relax\trivlist\item[\hskip\labelsep\bfseries#1\@addpunct{.}]\ignorespaces}{\popQED\endtrivlist\@endpefalse}
\makeatother
\let\oldFootnote\footnote
\newcommand\nextToken\relax

\renewcommand\footnote[1]{%
    \oldFootnote{#1}\futurelet\nextToken\isFootnote}

\newcommand\isFootnote{%
    \ifx\footnote\nextToken\textsuperscript{,}\fi}

\usepackage[american]{babel}
\usepackage[authoryear,round]{natbib}
\def\d{\mathrm{d}}
\def\E{\mathbb{E}}
\def\R{\mathbb{R}}

\def\bm{\symbf}

\titlespacing\section{0pt}{6pt}{4pt}
\titlespacing\subsection{0pt}{4pt}{2pt}
\titlespacing\subsubsection{0pt}{2pt}{2pt}
 \titlespacing*{\paragraph}{0pt}{1.25ex plus 1ex minus .2ex}{0.5em}

\begin{document}

\title{Bundling Complements\thanks{Stanford Graduate School of Business. Email: weijie.zhong@stanford.edu. I thank Roberto Corrao and Laura Doval for valuable comments and suggestions.}}
\date{}
\author{\makebox[.25\linewidth]{Weijie Zhong}}

\maketitle
\thispagestyle{empty}
\vspace{-1.5em}

\begin{abstract}
\noindent I develop a duality-based multi-dimensional screening framework with a geometric characterization of combinatorial preferences. For a mechanism to be optimal, the type distribution pins down \emph{required} directions of binding feasibility constraints, while the complementarity among bundles determines the \emph{covered} directions; optimality reduces to full coverage of required directions. I apply the framework to a one-parameter family in which every bundle containing a fixed \emph{core} of items earns a complementarity premium. Two thresholds organize the optimum: above a lower threshold the grand bundle must be offered; above a higher threshold a \emph{core-peripheral} menu --- a bundled core with optional add-ons that are not sold standalone --- is optimal. The tight distributional condition for finiteness of the higher threshold is \emph{inclusivity}, that the menu exclude no near-top buyer.
\end{abstract}

\bigskip

\noindent\textit{Keywords:} multi-dimensional screening, bundling, complementarity, optimal transport, mechanism design, core-peripheral mechanisms.

\smallskip

\noindent\textit{JEL Classification:} D42, D82, D86.

\newpage
\pagenumbering{arabic}
\setcounter{page}{1}

\section{Introduction}\label{sec:intro}

Many products and services are sold with a core component paired with optional add-ons unavailable standalone: AI assistants bundle premium reasoning models with a base subscription, enterprise vendors license security and compliance modules only atop their core platform, and cable operators have long sold premium channels only as add-ons to a basic tier. In each, an add-on is worth more paired with the core than alone --- a premium model inside a workflow that also runs faster base models for routine steps, a security module integrated with the platform whose data it monitors. Such cross-good interactions generate \emph{combinatorial preferences}: a bundle's value is not the sum of its parts, and the interaction --- complementarity, in the case these examples share --- drives the bundling decision.

This paper traces the comparative statics of the optimal screening mechanism as the strength of complementarity among goods varies. A monopoly seller sells multiple indivisible goods to a buyer with a multi-dimensional private type. A valuation map determines the buyer type's value for each bundle of goods, which can exhibit complementarity, substitutability or arbitrary combinatorial preferences. 

The methodology extends the optimal-transport duality of \citet{DDT2017} and \citet{KM2019} and surfaces a geometric structure of combinatorial preferences hidden in the dual cost. In the dual problem, the type distribution and preference jointly determine a \emph{differential virtual value} (positive where providing buyer surplus increases revenue, negative where it decreases). The differential virtual value pins down the \emph{required directions} of the dual transport --- the directions of binding feasibility constraints under an optimal mechanism. The allocation feasibility constraint attaches to each bundle a cone of \emph{covered directions} along which allocating the bundle maximizes the surplus differential. Optimality of a candidate menu reduces to a geometric coverage condition: every required direction must lie inside the covered cone of the allocated bundle. Under linear valuation, the problem pulls back to the primitive type space and the two sides fully separate --- required directions are determined solely by the type distribution, the covered cones solely by the valuation map.

This separation reframes the comparative-statics landscape. The additive-preferences literature traces how the optimum responds to the type distribution at fixed preferences --- working entirely through the distribution-to-required-directions side. The bundle-to-cone mapping opens the symmetric question, how the optimum responds to preferences, through the preference-to-cone side: a bundle's cone widens when its items exhibit complementarity and narrows when they exhibit substitutability.
I apply the framework to a one-parameter slice of combinatorial preferences. Complementarity is parameterized by a single scalar $\alpha > 0$ multiplying the value of every bundle that contains a fixed set of \emph{core items} $C \subseteq [N]$; two thresholds organize how the optimum responds as $\alpha$ varies. A lower threshold $\alpha_*$, reflecting the spirit of free disposal of the grand bundle, marks the level above which the grand bundle is forced into the optimal menu (\Cref{thm:grand-bundle-nec}). An upper threshold $\alpha^*$, derived from sufficient cone \emph{coverage}, marks the level above which a \emph{core-peripheral} menu is optimal: every menu option either excludes the buyer or sells a bundle containing the entire core (\Cref{thm:core-bundle}). I identify \emph{inclusivity} --- all top-value types being included by the core-peripheral mechanism --- as the binding distributional condition for finiteness of $\alpha^*$ (\Cref{thm:incl-necessity}). The gap between $\alpha_*$ and $\alpha^*$ --- generically nonempty --- is the intermediate regime in which the menu may carry both the grand bundle and à la carte options.

The paper's two contributions are the methodology, portable beyond the one-parameter slice studied here, and the comparative-statics characterization it enables. Applying the methodology identifies settings with mild, or explicit, primitive sufficient conditions for core-peripheral bundling. Pure bundling is the sharpest case: the certifying hypothesis of \Cref{thm:core-bundle} --- \emph{strict core regularity}, a stochastic-dominance condition requiring the menu's marginal-revenue order to survive partially unbundling the core --- reduces to the canonical Myersonian regularity of the bundle-value distribution, and pure bundling is optimal for all large $\alpha$ exactly when the distribution is inclusive (\Cref{thm:pure-bundle}). An explicit two-tier family with one add-on is solved under log-concave core marginals (\Cref{sec:apps-twotier}), and a tractable iid Beta family shows the conditions are met in higher dimensions, with closed-form solutions (\Cref{sec:apps-pure}). A 2-good iid uniform example illustrates the arc at all three regimes: the lower threshold $\alpha_* = 1/2$ is explicit, an upper-threshold bound $\alpha^* \lesssim 1.253$ is constructed by a dual-transport line family (\Cref{os:dual-transport}), and the intermediate regime --- between separate sales and pure bundling --- contains the additive optimum of \citet{Pavlov2011} as one slice.

The rest of the paper proceeds as follows. \Cref{sec:model-dual} formulates the problem and develops the dual representation, including the bundle-to-cone mapping and the saddle-point criterion. \Cref{sec:examples} illustrates the geometry through four canonical two-good examples. \Cref{sec:main} states the upper- and lower-threshold theorems and the necessity of the inclusivity condition. \Cref{sec:apps} develops the applications previewed above and the market segments the thresholds organize. \Cref{sec:conclusion} concludes. Proofs are in the appendix and the Online Supplement.

\subsection{Literature review}\label{sec:lit}

\paragraph{Optimal mechanisms under additive multi-dimensional preferences.} The structural foundations of multidimensional screening were laid by \citet{McAfeeMcMillan1988}, \citet{Armstrong1996}, and \citet{RochetChone1998}, who established the logic of exclusion, bunching, and convexity once the buyer has several dimensions of private information. A central feature of the additive multi-good optimum, emphasized by \citet{ManelliVincent2007} and traced in two-good detail by \citet{Pavlov2011}, is that randomization (lotteries over bundles) is intrinsic rather than a technical artifact: deterministic posted prices are generically suboptimal. \citet{manelli2006bundling} characterizes when deterministic \emph{bundle} pricing nevertheless solves the multi-good monopoly problem. \citet{Rochet2024} surveys recent developments.\footnote{The complexity of the additive optimum has also been studied through approximation lenses: \citet{Armstrong1999} shows that simple tariffs can be approximately optimal in high-dimensional product spaces; \citet{HartNisan2017,HartNisan2019} quantify the performance and required menu size of finite-menu simple mechanisms; and \citet{BriestEtAl2015} exhibit a regime in which lottery pricing dominates deterministic item pricing by an unbounded factor with three or more goods.}

\paragraph{Optimal-transport duality.} A methodological strand within this literature characterizes the additive multi-good optimum through a transport dual. \citet{DDT2017} and \citet{KM2019} establish the duality and the saddle-point characterization of the optimum against a dual cost $c$, applying the framework to a range of canonical settings. In these foundational developments the cost $c$ is taken as a primitive object, with the preference structure of the screening problem implicit in its definition. This paper decomposes the cost: under linearity, $c$ admits an explicit decomposition over its gradient polytope, with each vertex (a bundle's gradient) carrying its own cone of covered directions. The bundle-to-cone mapping represents this decomposition, surfacing the preference primitives as explicit geometric objects in the analysis.
\paragraph{Sufficient conditions for pure bundling.} A separate strand identifies sufficient conditions on the type distribution under which the additive optimum collapses to pure bundling or to separate sales. \citet{McAfeeMcMillanWhinston1989} show that bundling dominates separate sales when two non-complementary goods have stochastically independent values. \citet{GiannakopoulosKoutsoupias2018} give exact two-good sufficient conditions for pure bundling under uniform distributions. \citet{menicucci2015bundling} establishes pure-bundling optimality for two additive, independently distributed goods under non-negative virtual valuations. A parallel literature on robust mechanism design, set in additive or nearly additive environments, characterizes the optimum under various ambiguity sets: \citet{Carroll2017} shows separate sales are robust when only marginal distributions are known; \citet{DebRoesler2023} shows pure bundling is informationally robust when the prior is exchangeable across items; \citet{CheZhong2025} traces categorical bundling through a hierarchy of moment-based ambiguity sets. Relatedly, \citet{FrickIijimaIshii2026} show that as the seller's information about the buyer grows precise, pure bundling becomes asymptotically optimal and strictly beats separate sales. \citet{HartReny2015} observe that revenue can fail to be monotone in the buyer's value even in additive multi-good settings. These papers locate the bundling pressure entirely in the type distribution or in the seller's information about it --- preferences are additive, so the force comes from how the distribution tilts the marginal-value comparison toward the bundle. This paper introduces complementarity as a separate preference primitive and justifies the practically prevalent core-peripheral menu --- which nests pure bundling as a special case --- as the optimum under sufficient complementarity.

\paragraph{Combinatorial preferences and bundling extensions.} The closer comparators are papers that allow non-additive bundle valuations. \citet{Yang2023nested} develops a theory of nested bundling in a one-dimensional type-space setting amenable to the Myersonian approach. \citet{haghpanah2021pure} works in multi-dimensional types with general non-additive bundle valuations and identifies a monotone-ratio condition --- via a path decomposition reducing the problem to a family of one-dimensional restrictions --- under which pure bundling is optimal.\footnote{Neither setup satisfies the tangency condition of \cref{def:tangent}, so both characterizations are parallel to, rather than specializations of, \cref{thm:core-bundle}.} \citet{Thanassoulis2004} studies substitutes, and \citet{BikhchandaniMishra2024} shows that selling heterogeneous objects under symmetric additive values is equivalent to selling identical objects under decreasing marginal values. Relative to this literature and the additive-preferences literature above, the contribution here is to make complementarity a geometric primitive --- the bundle cone's covered directions --- that varies independently of the distribution, opening the preference-side comparative statics these papers, working at fixed preferences, do not pursue.

\section{Model and Dual Representation}\label{sec:model-dual}

This section formalizes the buyer--seller environment, develops the optimal-transport dual of the seller's problem, and isolates two technical conditions --- tangency and linearity --- on which the geometric analysis of later sections rests.

\subsection{Preferences, mechanisms, and feasibility}\label{sec:model}

The model takes the buyer's bundle valuations as the primary preference primitive. Consider $N$ goods and a single buyer. Let $[N] = \{1, \dots, N\}$ index goods and let $\mathcal{P} := 2^{[N]} \setminus \{\varnothing\}$ denote the full family of nonempty bundles. The buyer's primitive type is $x \in D \subset \R^d$, where $D$ is compact with piecewise-smooth boundary and density $f \in C^2(D)$ that is strictly positive on $D$.\footnote{Maintained for the necessity theorems (\Cref{thm:incl-necessity,thm:grand-bundle-nec}). The sufficiency certificates --- \Cref{thm:core-bundle} and its applications --- extend to $C^1$ densities positive on $\operatorname{int} D$ with integrable boundary behavior, covering the singular Beta family of \Cref{sec:apps-pure} (\Cref{os:single-cell} of the Online Supplement).} \footnote{The type-space dimension $d$ is decoupled from the number of goods $N$: the canonical case is $d = N$, but $d < N$ (a one-dimensional Myersonian type, say) and $D = Y$ are special cases of the same framework.} The seller offers a subfamily $\mathcal{F} \subseteq \mathcal{P}$ of bundles; let $K := \abs{\mathcal{F}}$. For each $B \in \mathcal{F}$, the buyer's valuation $v_B : D \to \R$ is smooth. Collect these into the valuation map
\[
v(x) := \bigl(v_B(x)\bigr)_{B \in \mathcal{F}} \in \R^K,
\]
whose image $Y := v(D)$ is the \emph{valuation manifold}. Separating $D$ and $Y$ isolates two distinct ingredients --- $D$ carries the distributional data, $Y$ the preference data --- innocuous in the additive benchmark, essential once preferences are combinatorial.

Two structural conventions are maintained throughout: the domain has a \emph{bottom type} $x_0 \in D$ with $x_0 \preceq x$ for every $x \in D$, and each valuation $v_B$ is nondecreasing, so that $v(x_0)$ is the coordinatewise-least point of $Y$ --- the point where individual rationality binds and the dual measure carries its atom. Every application takes $D$ to be a box and $x_0 = 0$.

The seller posts a \emph{menu} $M = \{(\sigma_\ell, p_\ell)\}_{\ell=0}^L$: option $\ell \ge 1$ offers a bundle lottery $\sigma_\ell \in \Delta(\mathcal{F})$ at price $p_\ell > 0$, and $\ell = 0$ is the outside option --- the empty bundle $\varnothing$ ($v_\varnothing \equiv 0$) at price $0$. A buyer of type $x$ takes the option of greatest surplus $\sum_B \sigma_\ell(B)\, v_B(x) - p_\ell$, ties broken in the seller's favor.\footnote{The highest-revenue option in the buyer's indifference set, with bundle offers preferred to exclusion.} Option $\ell$'s \emph{marginal-inclusion vector} $w_\ell$ records the probability of allocating each item: $w_{\ell,i} := \sum_{B \ni i} \sigma_\ell(B)$.

Equivalently, $M$ is a \emph{direct mechanism}: an allocation $q(\cdot) \in \mathcal{Q} := \{q \in \R_+^K : \mathbf{1}^\top q \le 1\}$ --- the simplex of bundle lotteries, $q_B$ the probability of bundle $B$ --- and a transfer $t(\cdot)$. By \citet{RochetChone1998}, incentive compatibility and individual rationality hold exactly when the indirect utility $u : Y \to \R_+$ is the restriction to $Y$ of a convex function on $\R^K$ whose subgradients all lie in $\mathcal{Q}$, and, for a measurable selection $q(y) \in \partial u(y) \cap \mathcal{Q}$,
\begin{equation}\label{eq:IC}
u(y) = \max_{y' \in Y} \{y \cdot q(y') - t(y')\}, \qquad t(y) = q(y) \cdot y - u(y).
\end{equation}
The design problem is then the choice of such a convex $u$.

\subsection{Revenue as a linear functional}\label{sec:revenue}

The expected-revenue integral mixes $u$ with its gradient through the IC formula \eqref{eq:IC}; turning it into a linear functional of $u$ alone requires one technical condition on the valuation map.

\begin{definition}[Tangency condition]\label{def:tangent}
The valuation map $v$ satisfies the \emph{tangency condition} if there exists a smooth vector field $a : D \to \R^d$ such that
\[
Jv(x)\, a(x) = v(x) \qquad \text{for every } x \in D,
\]
where $Jv(x)$ is the Jacobian of $v$ at $x$.
\end{definition}

Tangency is satisfied by every linear valuation, and by every homogeneous-of-degree-one valuation wherever it is differentiable (Euler's identity);\footnote{Nonlinear HD1 examples --- CES aggregators $\bigl(\sum_{i \in B} x_i^p\bigr)^{1/p}$, Leontief $\min_{i \in B} x_i$, Cobb--Douglas $\prod_{i \in B} x_i^{1/|B|}$ --- are tangent with $a(x) = x$ where differentiable; each is non-smooth on some diagonal or lower face (no nonlinear HD1 valuation is differentiable at the origin), so global smoothness on $D$ is special to the linear case the theorems use. Any shifted-HD1 valuation $v(x) = w(x - z_0)$ with $w$ HD1 is tangent with $a(x) = x - z_0$ --- e.g.\ the affine $v_B(x) = \sum_{i \in B}(x_i - c)$ with per-good baseline $c \ne 0$, tangent but not HD1. The shift moves only $a$: the dual measure keeps its atom at the bottom type $x_0$, where individual rationality pins $U$.} the linear case leads the rest of the paper.
Tangency does two things at once. First, it removes a notational ambiguity: $Y$ is at most $d$-dimensional inside $\R^K$, so the subdifferential $\partial u(y)$ in \eqref{eq:IC} is generally not a singleton and $q(y) \in \partial u(y)$ is a measurable selection; tangency makes the selection revenue-irrelevant (\Cref{lem:revenue-invariance}), so the standard abuse $q(y) = \nabla u(y)$ is harmless. Second, it underlies the divergence-theorem rewriting of revenue developed below.

Under \cref{def:tangent}, the transfer of any incentive-compatible mechanism implementing $u$ satisfies $t(x) = a(x) \cdot \nabla U(x) - U(x)$ at $\lambda_D$-a.e.\ $x$, where $U := u \circ v$ (\Cref{lem:revenue-invariance}, proved in \Cref{app:duality} by Rademacher's theorem and the convex chain rule). Revenue therefore depends on the mechanism only through $u$, and equals
\[
\int_D \bigl[a(x) \cdot \nabla U(x) - U(x)\bigr]\, f(x)\, \d x.
\]
The divergence theorem then produces a signed measure $\mu$ on $D$ defined by
\begin{align*}
\mu(A)
:={}&
\bm{1}_A(x_0)
+\int_{\partial D}\bm{1}_A(x) f(x)\,a(x)\cdot n(x)\,\d \sigma(x)\footnotemark \\
&\qquad
-\int_A \bigl[\nabla \cdot (f(x)a(x))+f(x)\bigr]\,\d x,
\end{align*}
\footnotetext{Here $\sigma$ denotes Lebesgue surface measure on $\partial D$. In the standing applications (box $D$, $a(x) = x$) the integrand $f\, a \cdot n$ vanishes on the bottom faces, so only the top faces $F_j := \{x_j = \bar v_j\}$ contribute; $\sigma_j$, the restriction of $\sigma$ to $F_j$, appears in the OS proofs.}
where $x_0$ is the bottom type of \cref{sec:model} and $n(x)$ is the outward normal on $\partial D$. Since $v(x_0)$ is the least point of $Y$ and $u$ is nondecreasing, $U = u \circ v$ attains its minimum over $D$ at $x_0$; individual rationality leaves no surplus to the lowest type, so it is without loss to normalize $U(x_0) = 0$. With the atom placed at $x_0$, $\mu$ depends only on $(f, a, D)$ and not on the utility being optimized; as $f$ integrates to one, $\mu(D) = 0$, so adding a constant to $U$ leaves $\int_D U\, \d\mu$ unchanged and the normalization selects a representative rather than constraining the admissible class (cf.\ \cref{app:duality}). Let $\nu := v_{\#} \mu$ be the pushforward of $\mu$ onto $Y = v(D)$. The signed measure $\nu$ --- the \emph{differential virtual value} of the screening problem --- encodes how marginal revenue flows from raising buyer surplus across the type space: its positive part identifies types where leaving the buyer high surplus increases revenue, its negative part types where it decreases.\footnote{In one dimension, $\nu([\theta, 1]) = \varphi(\theta) f(\theta)$, where $\varphi$ is Myerson's virtual valuation, so $\nu$ is exactly the \emph{differential} of Myerson's virtual revenue density, extended to multi-dimensional settings.} The seller's problem becomes the choice of an admissible utility from $\mathcal{U} := \{u : Y \to \R_+ : u = \bar u|_Y \text{ for some convex } \bar u : \R^K \to \R \text{ with } \partial \bar u(z) \subseteq \mathcal{Q} \text{ for all } z\}$:
\begin{equation}\label{eq:primal}
\sup_{u \in \mathcal{U}}\ \int_Y u(y)\,\d \nu(y).
\end{equation}
The seller's problem is to widen the surplus gap between revenue-rewarding and revenue-absorbing buyers subject to the feasibility cap on $\mathcal{Q}$; \cref{sec:duality} reformulates it as a transport problem whose cost is the support function of $\mathcal{Q}$. In $Y$ the two sides remain entangled --- $\nu$ depends on $(f, a, D)$ and $v$ jointly through the pushforward --- and \cref{sec:bundle-to-cone} separates them by specializing to linear valuations and pulling back to $D$.

\subsection{Transport representation and the saddle-point criterion}\label{sec:duality}

Reformulating \eqref{eq:primal} via the support function of $\mathcal{Q}$ converts the primal constraint into a geometric transport cost. Because $\mathcal{Q}$ is the simplex of bundle lotteries, its support function is the bundle-coordinate one-sided max:
\[
c(y, y')
:= \abs{(y - y')_+}_\infty
= \max\Bigl\{0,\ \max_{B \in \mathcal{F}} (y_B - y'_B)\Bigr\}
= \sup_{q \in \mathcal{Q}} q \cdot (y - y').
\]
A quantity $q$ \emph{covers} a displacement $y - y'$ when $q \cdot (y - y') = c(y, y')$ --- equivalently, when $q$ maximizes the surplus differential along $y - y'$ among all allocations in $\mathcal{Q}$. The set of displacements covered by $q$ is its \emph{cone of covered directions}, made explicit on $D$ in \cref{sec:bundle-to-cone}; complementing it, the dual program below identifies the \emph{required directions} along which the seller must transport mass between revenue-rewarding and revenue-absorbing types in $\nu$. As in \citet{DDT2017,KM2019}, the dual variable is a transport plan. Assume from here on that the valuation manifold $Y$ is convex --- as in the standing applications, where $D$ is a convex box and $v$ is linear, so $Y = VD$ is convex.\footnote{For nonconvex $Y$ --- possible for the nonlinear tangent valuations of \cref{def:tangent} --- pose the primal and the dual on $\operatorname{conv}(Y)$, with the admissibility constraints extended there and the objective unchanged; the proof in \Cref{app:duality} is written in that generality.} The dual program is
\begin{equation}\label{eq:dual}
\inf_{\gamma \in \Gamma_+(Y \times Y)}
\int_{Y \times Y} c(y, y')\, \d\gamma(y, y')
\quad \text{s.t.} \quad
\gamma_1 - \gamma_2 \succeq_{\mathrm{mcvx}} \nu,
\end{equation}
where $\Gamma_+(Y \times Y)$ denotes the finite positive Borel measures on $Y \times Y$, $\gamma_1$ and $\gamma_2$ are the marginals, and $\eta \succeq_{\mathrm{mcvx}} \nu$ (\emph{monotone-convex stochastic dominance}) means $\int \phi\, \d\eta \ge \int \phi\, \d\nu$ for every coordinatewise nondecreasing convex $\phi$ on $Y$.

\begin{theorem}[Strong duality]\label{thm:strong-duality}
The primal problem \eqref{eq:primal} and the dual problem \eqref{eq:dual} have the same value, and both attain their optima.
\end{theorem}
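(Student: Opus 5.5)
Strong duality will follow the route of \citet{DDT2017} and \citet{KM2019}: first weak duality by direct integration, then the absence of a gap by a minimax/Fenchel--Rockafellar argument on a compact convex set of utilities, and finally attainment of the dual by weak-$*$ compactness of an appropriately bounded set of transport plans. The specifics of the present setting are that $Y$ is compact and convex, the cost is $c(y,y')=\sup_{q\in\mathcal{Q}}q\cdot(y-y')$, and $\nu(Y)=\mu(D)=0$.

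\emph{Weak duality.} Take $u\in\mathcal{U}$ and a feasible $\gamma$. Since the subgradients of $\bar u$ lie in $\mathcal{Q}\subseteq\R^K_+$, the function $u$ is coordinatewise nondecreasing and convex, so the constraint gives $\int u\,\d\nu\le\int u\,\d(\gamma_1-\gamma_2)$. The subgradient bound gives
\[
u(y)-u(y')\le \sup_{q\in\mathcal{Q}}q\cdot(y-y')=c(y,y'),
\]
and integrating against $\gamma$ yields primal $\le$ dual. The same inequality, read with equalities, already contains the complementary-slackness (saddle-point) conditions used later in the paper.

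\emph{No gap and primal attainment.} Normalize $u(v(x_0))=0$, which is harmless because $\nu(Y)=0$. Since $u$ is $1$-Lipschitz in $\abs{(\cdot)_+}_\infty$, $\mathcal{U}$ is then a uniformly bounded, equicontinuous, convex family on the compact set $Y$. It is therefore compact in $C(Y)$ by Arzel\`a--Ascoli, and closed because convexity, monotonicity and the Lipschitz-type bound $u(y)-u(y')\le c(y,y')$ pass to uniform limits. For a compact convex $Y$, membership in $\mathcal{U}$ is equivalent to $u$ being convex, nondecreasing, nonnegative and satisfying this $c$-Lipschitz inequality; the needed extension is $\bar u(z)=\sup_{y'\in Y}[u(y')+c$-type affine minorants$]$, and this is the point where convexity of $Y$ is used. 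The linear objective $u\mapsto\int u\,\d\nu$ is continuous, so the primal optimum is attained.

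For the gap, write the primal as $\sup_u\inf_{\gamma\ge0,\ \eta}$ of a Lagrangian. The two constraints to dualize are the Lipschitz constraint $u(y)-u(y')\le c(y,y')$, whose multiplier is $\gamma\in\Gamma_+(Y\times Y)$, and the monotone-convex constraint, whose multiplier is the mcvx-dominance slack. Concretely I would apply a Sion/Ky Fan minimax theorem on $\mathcal{K}\times\Gamma$, where $\mathcal{K}$ is the compact cone of nondecreasing convex functions normalized at $v(x_0)$ and bounded by $\mathrm{diam}$, with the Lagrangian
\[
L(u,\gamma)=\int u\,\d\nu-\int\bigl[u(y)-u(y')-c(y,y')\bigr]\d\gamma .
\]
Taking $\inf_\gamma\sup_{u\in\mathcal{K}}$, the inner supremum is finite only if $\int u\,\d(\nu-\gamma_1+\gamma_2)\le 0$ for all nondecreasing convex $u$, which is exactly the constraint in \eqref{eq:dual}; it then equals $\int c\,\d\gamma$. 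The boundedness of $u$ must be handled by a truncation-and-limit argument.

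\emph{Dual attainment.} A minimizing sequence $\gamma_n$ can be chosen with bounded mass. Diagonal mass, where $c=0$, is irrelevant, so I would first restrict to plans with $\gamma(\{y=y'\})=0$. Mass near the diagonal is still not controlled by $\int c\,\d\gamma$, and this is the step I expect to be the main obstacle. My first attempt is to replace $\gamma_n$ by its composition into ``gluing'' chains; since $c$ satisfies the triangle inequality, this does not raise the cost and lets me bound the total mass by $\abs{\nu}(Y)$. Equivalently, I would impose $\gamma_1\perp\gamma_2$-type reductions as in \citet{KM2019}, Lemma on bounded plans. Given bounded mass, Prokhorov's theorem on the compact set $Y\times Y$ gives a weak-$*$ limit. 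The constraint is preserved because it is tested against continuous $\phi$, and lower semicontinuity of $\gamma\mapsto\int c\,\d\gamma$ holds since $c$ is continuous and nonnegative.
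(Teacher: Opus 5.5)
Your weak-duality argument and primal attainment are fine; they match Steps~0 and~2 of the paper. The gap is in the two remaining steps, and it is a single gap.

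\emph{The minimax step.} The set $\mathcal{K}$ cannot be both a cone and compact. Suppose it is bounded, as compactness requires (boundedness alone does not even give equicontinuity, so you would really need a Lipschitz cap $\mathcal{K}_L$). Then $\sup_{u \in \mathcal{K}}\int u\,\d(\nu - \gamma_1 + \gamma_2)$ is finite for every $\gamma$, so Sion returns only a penalized dual, not \eqref{eq:dual}. If instead $\mathcal{K}$ is the full cone, Sion has no compact side. Your ``truncation-and-limit'' as $L \to \infty$ must extract a limit from near-optimal plans $\gamma_L$, which needs a uniform mass bound --- exactly what dual attainment also needs.

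\emph{The mass bound.} Gluing does not supply it. The triangle inequality lets you replace $\gamma$ by a plan in $\Pi(\sigma^+, \sigma^-)$, $\sigma := \gamma_1 - \gamma_2$. But $\sigma$ only dominates $\nu$ in the monotone-convex order, so $\sigma^+(Y)$ is not controlled by $\abs{\nu}(Y)$. Concretely, add $t\,\delta_{(y,y')}$ to a feasible plan, with $y \ge y'$ coordinatewise and $\abs{y - y'}_\infty = t^{-2}$. This preserves feasibility, raises the cost by only $1/t$, and raises the mass (and $\sigma^+(Y)$) by $t$. It is not a cycle, so gluing does not cancel it. Minimizing sequences can therefore have unbounded mass, and nothing in your proposal rules this out.

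The missing idea is a perturbation (exact-penalty) estimate for the primal. Suppose a convex nondecreasing $u$ satisfies $u(y) - u(y') \le c(y,y') + \delta$. Then its one-sided envelope $\tilde u(y) := \inf_z\{u(z) + c(y,z)\}$ is primal-feasible (convex, nondecreasing, subgradients in $\mathcal{Q}$ by support-function calculus). It also satisfies $u - \delta \le \tilde u \le u$. This envelope is also the correct version of the extension you gesture at.

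The paper packages this as $V(g) - V(b) \le 2\norm{\nu}_{TV}\norm{g-b}_\infty$, where $V(g)$ is the primal value under the perturbed constraint $Au \le g$. It then invokes the Gretsky--Ostroy--Zame criterion, which gives no gap and dual attainment at once, with no compactness argument on plans.

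To keep your minimax route, use the same estimate but put compactness on the dual side:
\begin{enumerate}[label=(\roman*)]
\item Let $\gamma$ range over the weak-$*$ compact set $\{\gamma \ge 0 : \gamma(Y\times Y) \le M\}$, and let $u$ range over the full cone of normalized nondecreasing convex continuous functions.
\item Sion then equates the mass-capped dual, which is attained, with $\sup_u\bigl\{\int u\,\d\nu - M\max_{y,y'}[u(y)-u(y')-c(y,y')]\bigr\}$.
\item The envelope estimate shows this penalized primal equals the primal value once $M \ge 2\norm{\nu}_{TV}$.
\item Weak duality then gives equality of values, and the capped minimizer solves \eqref{eq:dual}.
\end{enumerate}
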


\noindent The proof, given in \Cref{app:duality}, adapts the conic-linear-programming argument of \citet{DDT2017} and \citet{KM2019} to the valuation manifold and one-sided max cost; convexity of the domain is what its function-space argument uses.\footnote{The saddle-point criteria below (\Cref{prop:saddle,prop:saddle-D}) consume only weak duality together with an explicitly constructed plan supported on $Y \times Y$, so no downstream result depends on the convexity assumption.}

The following complementary-slackness condition gives sufficient --- and generally not necessary --- conditions for a finite-menu mechanism and a transport plan to form a primal-dual saddle point; the proof is in \Cref{os:saddle-proofs} of the Online Supplement.

\begin{proposition}\label{prop:saddle}
Suppose $u \in \mathcal{U}$ is piecewise linear on a finite polyhedral partition $\{C_\ell\}_{\ell=0}^L$ of $Y$, with
\[
u(y) = a_\ell + q^\ell \cdot y \qquad \text{for } y \in C_\ell,
\]
where each quantity $q^\ell \in \mathcal{Q}$. Let $\gamma \in \Gamma_+(Y \times Y)$ be such that:
\begin{enumerate}[label=\textnormal{(\roman*)}]
\item for $\gamma$-almost every $(y, y')$, both $y$ and $y'$ lie in the closure of one linearity cell $C_\ell$;\footnote{This condition is not an independent restriction at optimality: when $u$ is primal-optimal, any dual-optimal $\gamma$ whose sink marginal is absolutely continuous satisfies it automatically --- see \Cref{rmk:within-cell-support}. I retain it in the statement because the criterion is applied to candidate pairs whose optimality is not known in advance.}
\item for $\gamma$-almost every $(y, y')$ and any cell $\ell$ with $y, y' \in \overline{C_\ell}$ (as provided by~(i)),
\[
u(y) - u(y') = q^\ell \cdot (y - y') = c(y, y');
\]
\item $\gamma_1 - \gamma_2 = \nu$.
\end{enumerate}
Then $(u, \gamma)$ is a primal-dual saddle point. In particular, $u$ solves the primal and $\gamma$ solves the dual.
\end{proposition}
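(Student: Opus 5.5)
The plan is a weak-duality sandwich: show that the primal value of $u$ equals the dual cost of $\gamma$, and that weak duality bounds every primal value above every dual cost. Weak duality does not need \Cref{thm:strong-duality}; it follows directly from the Lipschitz property of admissible utilities, so I will prove that step first.

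\textbf{Step 1 (weak duality).} Take any $\tilde u \in \mathcal{U}$ and any dual-feasible $\tilde\gamma$.

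First I need an inequality for $\tilde u$. Let $\bar u$ be its convex extension to $\R^K$, with $\partial \bar u \subseteq \mathcal{Q}$. Pick any subgradient $q \in \partial\bar u(y)$. The subgradient inequality then gives
\[
\tilde u(y) - \tilde u(y') \le q\cdot(y-y') \le \sup_{q'\in\mathcal{Q}} q'\cdot(y-y') = c(y,y').
\]

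Second, $\tilde u$ is convex and nondecreasing, because its subgradients lie in $\mathcal{Q} \subseteq \R^K_+$. So $\tilde u$ is an admissible test function $\phi$ for $\succeq_{\mathrm{mcvx}}$. The constraint $\tilde\gamma_1-\tilde\gamma_2 \succeq_{\mathrm{mcvx}} \nu$ therefore yields
\[
\int \tilde u\,\d\nu \le \int \tilde u\,\d\tilde\gamma_1 - \int \tilde u\,\d\tilde\gamma_2 = \int [\tilde u(y)-\tilde u(y')]\,\d\tilde\gamma(y,y') \le \int c\,\d\tilde\gamma.
\]
Splitting the integral in the middle step needs integrability. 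This holds because $\tilde u$ is Lipschitz with constant at most $1$ in the sup norm, $Y$ is compact so $\tilde u$ is bounded there, and $\tilde\gamma$ is finite.

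\textbf{Step 2 (zero gap for the candidate pair).} Condition (iii) makes $\gamma$ dual-feasible with equality, since $\nu \succeq_{\mathrm{mcvx}} \nu$ holds trivially. Hence
\[
\int u\,\d\nu = \int u\,\d\gamma_1 - \int u\,\d\gamma_2 = \int [u(y)-u(y')]\,\d\gamma(y,y').
\]
By (i), $\gamma$-a.e.\ pair $(y,y')$ lies in the closure of a common cell $\overline{C_\ell}$. By (ii), the integrand there equals $q^\ell\cdot(y-y') = c(y,y')$.

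The identity $u(y)-u(y') = q^\ell\cdot(y-y')$ uses continuity of $u$. Since $u$ is convex and finite, it is continuous, so the affine formula $a_\ell + q^\ell \cdot y$ extends from $C_\ell$ to its closure. Integrating gives $\int u\,\d\nu = \int c\,\d\gamma$.

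\textbf{Step 3 (conclusion).} Combine the two steps. For any feasible $\tilde u$ and $\tilde\gamma$,
\[
\int \tilde u\,\d\nu \le \int c\,\d\gamma = \int u\,\d\nu \le \int c\,\d\tilde\gamma,
\]
using Step 1 with $\gamma$ for the first inequality and Step 1 with $u$ for the last. This is exactly the saddle-point property, so $u$ solves the primal and $\gamma$ solves the dual.

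\textbf{Main obstacle.} The only delicate point is measure-theoretic bookkeeping. Condition (ii) must be applied on a measurable set of full $\gamma$-measure. The cell index is not unique on shared boundaries, but any qualifying $\ell$ works by the statement of (ii), so I would fix a measurable selection of $\ell$. Beyond that, the work is checking integrability and verifying that $u$ is admissible as a monotone-convex test function. Both follow from $\partial\bar u \subseteq \mathcal{Q} \subseteq \R_+^K$ and compactness of $Y$.
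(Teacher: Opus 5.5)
Your proposal is correct and follows the paper's proof: weak duality comes from the one-sided bound $u(y)-u(y')\le c(y,y')$ together with monotone-convex dual feasibility, and the zero gap $\int u\,\d\nu=\int c\,\d\gamma$ comes from conditions (i)--(iii), with no appeal to \Cref{thm:strong-duality}. The only cosmetic difference is that you read the one-sided bound and the test-function property directly off the convex extension $\bar u$ that the definition of $\mathcal{U}$ supplies, whereas the paper re-extends an arbitrary feasible utility by its one-sided envelope $\inf_{z}\{u'(z)+c(\cdot,z)\}$ to cover nonconvex $Y$; and the measurable selection of $\ell$ you worry about is unnecessary, since the identity $u(y)-u(y')=c(y,y')$ delivered by (ii) does not depend on which qualifying cell is used.
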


\noindent Geometrically, the complementary-slackness equality $q^\ell \cdot (y - y') = c(y, y')$ requires the displacement $y - y'$ to lie in the cone of directions that the active quantity $q^\ell$ covers.

\begin{remark}[The no-ironing condition]\label{rmk:no-ironing}
The dual program \eqref{eq:dual} constrains the plan only by the dominance $\gamma_1 - \gamma_2 \succeq_{\mathrm{mcvx}} \nu$; condition (iii) sharpens the dominance to equality, so the plan transports $\nu$ itself rather than a dominating rearrangement of it. This is the framework's \emph{no-ironing condition} --- the multi-dimensional analog of ruling out the ironing pathology of one-dimensional Myersonian screening.
\end{remark}

\subsection{Linear utility and the bundle-to-cone mapping}\label{sec:bundle-to-cone}

For the remainder of the paper, valuations are linear in the buyer's type. Linearity separates the two primitives of the problem: the type distribution and the buyer's preferences enter through distinct terms, and \cref{sec:main}'s construction exploits exactly that separation.\footnote{The construction extends to valuations tangent to a linear map at each type; the main theorems use linearity throughout.}

\begin{assumption}[Linearity]\label{assn:linear}
The valuation map $v$ is linear in $x$: $v(x) = Vx$ for a constant matrix $V \in \R^{K \times d}$ whose rows $V_B$ encode the bundle valuations $v_B(x) = V_B \cdot x$.
\end{assumption}

Under \cref{assn:linear}, the Jacobian $Jv \equiv V$ is constant, so the gradients $V^\top q$ on primitive types live in the polytope
\[
\mathcal{P}_V := V^\top \mathcal{Q} = \mathrm{conv}\bigl\{0,\, V_B^\top : B \in \mathcal{F}\bigr\} \subset \R^d.
\]
\begin{lemma}[Bundle-to-cone mapping]\label{lem:bundle-to-cone}
Under \cref{assn:linear}, the OT cost is translation-invariant on $D$,
\[
c(v(x), v(x')) \;=\; c_V(x - x'),
\qquad
\text{where}\ c_V(h) := \sup_{p \in \mathcal{P}_V} p \cdot h = \max\Bigl\{0,\, \max_{B \in \mathcal{F}} v_B(h)\Bigr\}.
\]
The gradient $V_B^\top$ attains the supremum at $h$ if and only if $h$ lies in the \emph{bundle cone}
\[
K_B^{\mathcal{F}} := \bigl\{h \in \R^d : v_B(h) \ge v_{B'}(h) \text{ for every } B' \in \mathcal{F} \cup \{\varnothing\}\bigr\},
\]
where $v_\varnothing \equiv 0$.
\end{lemma}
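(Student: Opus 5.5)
The plan is to work directly from the definition of $c$ as the support function of $\mathcal{Q}$ and push the linear map $V$ across the inner product. First, under \cref{assn:linear}, $v(x) - v(x') = V(x - x') = Vh$ with $h := x - x'$. Hence
\[
c(v(x), v(x')) = \sup_{q \in \mathcal{Q}} q \cdot Vh = \sup_{q \in \mathcal{Q}} (V^\top q) \cdot h = \sup_{p \in \mathcal{P}_V} p \cdot h,
\]
using $\mathcal{P}_V = V^\top \mathcal{Q}$. This depends on $(x, x')$ only through $h$, which gives translation invariance.

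Next I identify the extreme points. $\mathcal{Q}$ is the simplex $\operatorname{conv}\{0, e_B : B \in \mathcal{F}\}$, and linear images of convex hulls are convex hulls of images, so $\mathcal{P}_V = \operatorname{conv}\{0, V_B^\top\}$ as stated. A linear functional attains its supremum over a polytope at a vertex of the generating set. Therefore
\[
c_V(h) = \max\bigl\{0, \max_B V_B \cdot h\bigr\} = \max\bigl\{0, \max_B v_B(h)\bigr\},
\]
where $v_B(h) = V_B \cdot h$ extends each valuation linearly to all of $\R^d$. This agrees with the one-sided max formula for $c$ evaluated at $Vh$, which serves as a consistency check.

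For the covering characterization, note that $V_B^\top \cdot h = v_B(h)$. By the maximum formula for $c_V$, $V_B^\top$ attains the supremum if and only if $v_B(h) \ge 0 = v_\varnothing(h)$ and $v_B(h) \ge v_{B'}(h)$ for every $B' \in \mathcal{F}$. This is exactly the condition $h \in K_B^{\mathcal{F}}$. Both directions are immediate from the definition of the max over the finite set $\{0\} \cup \{v_{B'}(h)\}$.

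No step is a real obstacle. The only care needed is to state that $h$ ranges over all of $\R^d$, not just $D - D$, so that $v_B$ is interpreted as the linear form $V_B \cdot$. One should also remember to include the empty bundle (the zero vertex) in the cone definition, since that vertex is what makes the cost one-sided.
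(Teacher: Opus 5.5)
Your proposal is correct and follows essentially the same route as the paper: you push $V$ across the inner product to obtain $\sup_{p \in \mathcal{P}_V} p \cdot h$, read off the closed form from the generators $\{0\} \cup \{V_B^\top\}$, and characterize attainment by comparing $v_B(h)$ with $0$ and with every $v_{B'}(h)$. Your remark that $h$ ranges over all of $\R^d$, with $v_B$ read as the linear form $V_B \cdot h$, is a harmless clarification the paper leaves implicit.
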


\noindent The proof, in \cref{app:duality}, follows from support-function calculus on the polytope $\mathcal{P}_V$. The assignment $B \mapsto K_B^{\mathcal{F}}$ is the \emph{bundle-to-cone mapping}; the cone $K_B^{\mathcal{F}}$ is the set of directions $B$ covers against the alternatives in $\mathcal{F}$. It translates the saddle-point criterion of \cref{prop:saddle} into primitive coordinates.

\begin{proposition}[Bundle-cone saddle-point criterion]\label{prop:saddle-D}
Under \cref{assn:linear}, let finite menu $M = \{(\sigma_\ell, p_\ell)\}_{\ell=0}^L$ be incentive compatible and individually rational, with indirect utility $u_M$. If $\gamma \in \Gamma_+(D \times D)$ satisfies
\begin{enumerate}[label=\textnormal{(\roman*)}]
\item both points of $\gamma$-almost every $(x, x')$ lie in the closure of one cell $C_\ell$;
\item for $\gamma$-almost every $(x, x')$ and any cell $\ell$ with $x, x' \in \overline{C_\ell}$ (as provided by~(i)), the displacement $x - x'$ lies in every bundle cone $K_B^{\mathcal{F}}$ option $\ell$ may allocate ($B \in \operatorname{supp}\sigma_\ell$);
\item $\gamma_1 - \gamma_2 = \mu$;
\end{enumerate}
then $(u_M, v_\#\gamma)$ is a primal-dual saddle point; in particular $M$ is optimal.
\end{proposition}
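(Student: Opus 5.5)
The plan is to push everything forward through $v$ and apply \cref{prop:saddle} to the pair $(u_M, v_\#\gamma)$. I first set up the data \cref{prop:saddle} requires. The menu $M$ is finite, incentive compatible and individually rational, so its indirect utility is $u_M(y) = \max_\ell \{\bar\sigma_\ell \cdot y - p_\ell\}$. Here $\bar\sigma_\ell \in \mathcal{Q}$ is the vector of bundle probabilities of option $\ell$, and option $0$ gives $u_M \equiv 0$. This function is convex and piecewise linear, its subgradients lie in $\mathcal{Q}$, and it is nonnegative, so $u_M \in \mathcal{U}$. Its linearity cells $C_\ell$ in $Y$ are polyhedral, with $q^\ell = \bar\sigma_\ell$ and $a_\ell = -p_\ell$. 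I read the cells in the statement as the pullbacks $v^{-1}(C_\ell) \cap D$, which are again polyhedral because $v$ is linear. Since $x \in \overline{v^{-1}(C_\ell)}$ implies $v(x) \in \overline{C_\ell}$ by continuity, condition (i) for $\gamma$ gives condition (i) of \cref{prop:saddle} for $v_\#\gamma$.

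Condition (iii) is immediate from the fact that pushforward is linear and commutes with taking marginals. The marginals of $(v\times v)_\#\gamma$ are $v_\#\gamma_1$ and $v_\#\gamma_2$. Hence $(v_\#\gamma)_1 - (v_\#\gamma)_2 = v_\#(\gamma_1 - \gamma_2) = v_\#\mu = \nu$, and here "$v_\#\gamma$" means $(v\times v)_\#\gamma$. Positivity and finiteness carry over, so $v_\#\gamma \in \Gamma_+(Y\times Y)$.

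The substantive step is condition (ii). Take $(x,x')$ in the good set, with both points in the closure of cell $\ell$, and put $h = x - x'$. Linearity of $u_M$ on $\overline{C_\ell}$ gives $u_M(Vx) - u_M(Vx') = q^\ell \cdot V h = \sum_{B} \sigma_\ell(B)\, v_B(h)$. The right-hand side of the required equality is $c(Vx, Vx') = c_V(h)$ by \cref{lem:bundle-to-cone}. By hypothesis, $h \in K_B^{\mathcal{F}}$ for every $B \in \operatorname{supp}\sigma_\ell$. \Cref{lem:bundle-to-cone} then gives $v_B(h) = c_V(h)$ for each such $B$, and so $\sum_B \sigma_\ell(B) v_B(h) = c_V(h)\sum_B\sigma_\ell(B)$.

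The main obstacle is what I expect to be the hardest point: the total mass $\sum_B \sigma_\ell(B)$ need not equal $1$ if $\sigma_\ell$ were a sub-lottery. This is a problem only when $c_V(h) > 0$. In the menu formulation each $\sigma_\ell \in \Delta(\mathcal{F})$ is a full lottery for $\ell \ge 1$, so the mass is $1$. For the outside option $\ell = 0$, $\operatorname{supp}\sigma_0 = \{\varnothing\}$, and the hypothesis $h \in K_\varnothing^{\mathcal{F}}$ means $v_B(h) \le 0$ for all $B$, so $c_V(h) = 0 = q^0 \cdot Vh$. With both cases checked, the within-cell equality $u_M(y) - u_M(y') = q^\ell\cdot(y-y') = c(y,y')$ holds $v_\#\gamma$-a.e. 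I also need to handle a pair lying on a common boundary of two cells. There the hypothesis is stated for every admissible cell, and both affine pieces agree on the shared face, so the equality holds for any choice.

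With all three conditions verified, \cref{prop:saddle} gives that $(u_M, v_\#\gamma)$ is a saddle point. Hence $u_M$ is primal-optimal, and so $M$ is optimal.
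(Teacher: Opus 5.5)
Your proposal is correct and matches the paper's proof essentially step for step. You push $\gamma$ forward by $v\times v$; you get (i) from the cell correspondence $C_\ell = v^{-1}(\widetilde C_\ell)$ and continuity of $v$; you get (ii) by averaging the identities $V_B\cdot h = c_V(h)$ from \Cref{lem:bundle-to-cone} against the full lottery $\sigma_\ell$, with the outside option as the case $\ell=0$; and you get (iii) by commuting pushforward with marginals, then apply \Cref{prop:saddle}. Your remark that all cells whose closures contain both points give the same increment $u_M(y)-u_M(y')$ is a small rigor point the paper leaves implicit. The last step from $u_M$ to the menu $M$ goes through \Cref{lem:revenue-invariance}, as the paper notes.
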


\noindent The proof, given in \Cref{os:saddle-proofs} of the Online Supplement, reduces the criterion to \Cref{prop:saddle} through the valuation map. The proposition is the framework's methodological pivot: optimality of a candidate mechanism factors into two halves that the conditions separate. The \emph{required directions} --- those along which the seller must widen the surplus differential between revenue-rewarding and revenue-absorbing types --- are set entirely by the type distribution, and condition (iii)'s marginal identity $\gamma_1 - \gamma_2 = \mu$ pins them down. The \emph{covered directions} --- displacements along which the cell's allocation maximizes the surplus differential --- are set entirely by the menu's bundle cones, and condition (ii) confines each displacement to them. The cell partition is where the two meet: condition (i) couples them, and the mechanism is optimal whenever, cell by cell, the active option's cone covers every required direction.

\paragraph{Relation to \citet{DDT2017,KM2019}.} In the additive case --- $v_B(x) = \sum_{i \in B} x_i$ over the full family $\mathcal{F} = \mathcal{P}$ on a box-shaped $D$, so that $\mathcal{P}_V = [0,1]^N$ and the cost is the separable $c_V(h) = \sum_i (h_i)_+$ --- this proposition specializes to their duality. The combinatorial generalization extends $c$ from this one-sided $\ell^1$ cost to the one-sided max over arbitrary bundle values, and \cref{lem:bundle-to-cone} decomposes $c_V$ over the gradient polytope, each vertex (a bundle's gradient) carrying its own cone of covered directions. That geometric content --- absent when the cost is taken as a primitive --- is what exposes complementarity as a parameter scaling the gradients.

\paragraph{Standing core-peripheral form.} The main characterization results specialize to the \emph{core-peripheral form}
\begin{equation}\label{eq:standing-v}
v_B(x) = \alpha^{\bm{1}\{C \subseteq B\}} \sum_{i \in B} x_i,
\end{equation}
where $C \subseteq [N]$ is a fixed nonempty set of \emph{core items}, $P := [N] \setminus C$ the \emph{peripheral items}, and $\alpha > 0$ parameterizes the premium generated by complementarity: $\alpha > 1$ is complementary, $\alpha = 1$ additive, and $0 < \alpha < 1$ substitute-like. When $C = [N]$, only the grand bundle earns the $\alpha$-multiplier; when $\abs{C} < N$, every bundle containing the core earns it. The bundle cone specializes to
\[
K_B^{\mathcal{F}} = \left\{h \in \R^d : \alpha^{\bm{1}\{C \subseteq B\}} \sum_{i \in B} h_i \ge \alpha^{\bm{1}\{C \subseteq B'\}} \sum_{i \in B'} h_i \text{ for every } B' \in \mathcal{F} \cup \{\varnothing\}\right\},
\]
and complementarity widens the cones of core-containing bundles.

\section{Canonical Two-Item Examples}\label{sec:examples}

This section traces the bundle-to-cone geometry through four canonical two-good examples, each isolating one piece of the comparative-statics arc --- crossing the upper threshold $\alpha^*$, crossing the lower threshold $\alpha_*$, showing inclusivity's role in keeping $\alpha^*$ finite, and extending to partial bundling --- and closes by assembling the three regimes in the iid uniform benchmark.

All four share $D = [0,1]^2$, the pulled-back cost $c_\alpha(h) = \max\{0,\, v_1(h),\, v_2(h),\, v_{\{1,2\}}(h)\}$, and the source/sink split of $\mu$ (top-face mass on $F_i = \{x_i = 1\}$ and the origin atom positive, the Lebesgue interior negative). Each pairs a primitive-space cell diagram --- showing the required directions $h = x_\mathrm{src} - x_\mathrm{snk}$ the bundle cones must cover --- with displacement-space cone diagrams at two values of $\alpha$ (color convention in the figures). Throughout, cost and cones are computed against the full family $\mathcal{P} = \{\{1\}, \{2\}, \{1,2\}\}$ of nonempty bundles; a \emph{candidate menu} $M$ offers a subfamily of these as options, and the question is whether $M$ is certified optimal by \Cref{prop:saddle-D}.

\subsection{Upper threshold: pure bundling in the uniform benchmark}\label{sec:eg-pure}

The first example crosses the upper threshold in the simplest setting --- the symmetric two-good iid uniform environment with a pure-bundling menu --- identifying a sufficient threshold $\alpha^*$ above which pure bundling is optimal.

\begin{example}[Symmetric two-good uniform benchmark]\label{eg:uniform}
Let $D = [0,1]^2$, $f \equiv 1$, and $v(x) = (x_1, x_2, \alpha(x_1+x_2))$ with $\alpha > 0$, where the first two coordinates are singleton values and the third is the value of the grand bundle. The induced pullback of the one-sided max cost is
\[
c_\alpha(h) = \max\{0,\, h_1,\, h_2,\, \alpha(h_1 + h_2)\},
\]
and the transformed measure on primitive type space coincides with the iid additive uniform benchmark, $\mu = \delta_{(0,0)} + \sigma_{\{x_1=1\}} + \sigma_{\{x_2=1\}} - 3\, \mathcal{L}^2|_{[0,1]^2}$, so any change in the optimal mechanism arises from geometry alone.
\end{example}

\noindent The pure-bundling candidate menu $M$ offers the grand bundle $\{1,2\}$ alone, at price $\alpha\,\hat s$ with $\hat s = \sqrt{2/3}$. Its linearity cells are the two triangles
\[
C_0 = \{x \in D : x_1 + x_2 \le \hat s\},
\qquad
C_{\{1,2\}} = \{x \in D : x_1 + x_2 \ge \hat s\},
\]
separated by the anti-diagonal $\{x_1 + x_2 = \hat s\}$.

\paragraph{Required transport.} Inside $C_0$ the exclusion option is active, and the origin atom moves along nonnegative directions into the interior. Inside $C_{\{1,2\}}$ the grand bundle is the only active option, and the top-face mass moves along oblique directions into the interior: by symmetry, $F_2$-mass toward the lower-right (required directions in the second quadrant, $h_1 < 0,\ h_2 > 0$) and $F_1$-mass toward the upper-left (fourth quadrant, $h_1 > 0,\ h_2 < 0$). Panel (a) of \Cref{fig:ex1} marks in red the required directions the grand bundle's cone fails to cover at $\alpha = 1$.

\paragraph{The cones.} At $\alpha = 1$ the bundle's cone is exactly the first quadrant, so the oblique required directions (second and fourth quadrants) fall into the singleton cones, outside the menu's coverage, and \Cref{prop:saddle-D} fails. At $\alpha = 2$ the bundle cone has widened into both mixed-sign quadrants --- boundary rays $h_2 = -h_1/2$ and $h_1 = -h_2/2$ --- and now covers them, so pure bundling is certified (panels (b), (c) of \Cref{fig:ex1}).

\begin{figure}[htbp]
\centering
\input{figures/examples/ex1}

\caption{Pure bundling in the uniform benchmark. (a) Cells of the pure-bundling candidate, with transport arrows. (b) Direction cones at $\alpha=1$: required direction outside the bundle's cone. (c) Direction cones at $\alpha=2$: required direction inside the bundle's cone.}
\label{fig:ex1}
\end{figure}

Complementarity widens the cone at rate $1 - 1/\alpha$ in each mixed-sign quadrant; the explicit dual-transport line construction of \Cref{os:dual-transport} certifies pure bundling once $\alpha \ge \alpha^*_{\mathrm{unif}} := (9+4\sqrt 6)/15 \approx 1.253$ (sufficient, not known to be sharp).

\subsection{Lower threshold: necessity of the grand bundle}\label{sec:eg-grand}

This example identifies the \emph{lower} threshold above which the grand bundle must enter the menu \emph{at all}: in the environment of \Cref{eg:uniform}, separate sales cease to be certifiable once $\alpha > 1/2$ (\Cref{thm:grand-bundle-nec}).

Consider the separate-sales menu $M$ offering the two singletons $\{1\}, \{2\}$ in the uniform environment. By symmetry the optimal prices satisfy $p_1 = p_2 = p^\star$, and the indirect utility is $u(x) = \max\{0,\, x_1 - p^\star,\, x_2 - p^\star\}$. The three linearity cells are
\[
C_0 = \{x_1 \le p^\star,\ x_2 \le p^\star\},\quad
C_{\{1\}} = \{x_1 \ge p^\star,\ x_1 \ge x_2\},\quad
C_{\{2\}} = \{x_2 \ge p^\star,\ x_2 \ge x_1\},
\]
separated along the diagonal segment $E = \{x \in D : x_1 = x_2 \ge p^\star\}$ shown in panel (a) of \Cref{fig:ex4}.

\paragraph{Required transport along the diagonal.} The cells $C_{\{1\}}$ and $C_{\{2\}}$ each carry top-face mass (on $F_1$ and $F_2$) that must be moved into their interiors, and symmetry forces the flow pattern to be symmetric across $E$ --- in particular, the net flow across $E$ is zero: the obstruction is not a mass imbalance between the two selling cells but directional. Near the top corner $(1,1)$ the two faces together carry source mass of order $\delta$ in a $\delta$-neighborhood against interior absorption of order $\delta^2$, so most of the corner's source mass must travel an order-one distance; the deep sinks of the selling cells pool around the bottom vertex $(p^\star, p^\star)$ of $E$. Sources high on the diagonal are therefore paired with sinks near $(p^\star, p^\star)$, so $h = x_{\mathrm{src}} - x_{\mathrm{snk}} \propto (1,1)$ --- a $+45^\circ$ required direction.

\paragraph{The cones: where does the $+45^\circ$ direction live?} The one-sided cost at $h = (1,1)$ is
\[
c_\alpha(1,1) = \max\{0,\ 1,\ 1,\ 2\alpha\}
= \begin{cases} 1, & \alpha \le 1/2,\\ 2\alpha, & \alpha > 1/2.\end{cases}
\]
For $\alpha \le 1/2$ the singleton gradients $(1,0,0)$ and $(0,1,0)$ both cover the $+45^\circ$ direction, which sits on the shared boundary of their cones (panel (b), $\alpha = 0.4$: the bundle cone degenerates to the origin). For $\alpha > 1/2$ only the grand-bundle gradient covers it --- the $+45^\circ$ direction now lies in the \emph{interior} of the bundle cone and in neither singleton cone (panel (c), $\alpha = 0.6$: the bundle cone is a wedge around $+45^\circ$). Since $M$ offers only the singletons $\{1\}, \{2\}$, no offered bundle's cone covers the required direction and \Cref{prop:saddle-D} fails: separate sales cannot be certified.

\begin{figure}[htbp]
\centering
\tikzset{every picture/.style={line width=0.75pt}} 

\begin{tikzpicture}[x=0.75pt,y=0.75pt,yscale=-1,xscale=1]

\draw  [fill={rgb, 255:red, 248; green, 231; blue, 28 }  ,fill opacity=0.36 ] (226.96,55.49) -- (152.6,129.84) -- (46.05,129.84) -- (46.05,55.49) -- cycle ;
\draw  [fill={rgb, 255:red, 126; green, 211; blue, 33 }  ,fill opacity=0.43 ] (226.96,55.49) -- (226.96,236.39) -- (152.6,236.39) -- (152.6,129.84) -- cycle ;
\draw   (46.05,55.49) -- (226.96,55.49) -- (226.96,236.39) -- (46.05,236.39) -- cycle ;
\draw    (98.77,189.02) -- (49.9,230.79) ;
\draw [shift={(100.29,187.72)}, rotate = 139.48] [fill={rgb, 255:red, 0; green, 0; blue, 0 }  ][line width=0.08]  [draw opacity=0] (7.2,-1.8) -- (0,0) -- (7.2,1.8) -- cycle    ;
\draw [color={rgb, 255:red, 0; green, 0; blue, 0 }  ,draw opacity=1 ]   (165.38,231.31) -- (218.6,231.31) ;
\draw [shift={(163.38,231.31)}, rotate = 0] [fill={rgb, 255:red, 0; green, 0; blue, 0 }  ,fill opacity=1 ][line width=0.08]  [draw opacity=0] (7.2,-1.8) -- (0,0) -- (7.2,1.8) -- cycle    ;
\draw [color={rgb, 255:red, 0; green, 0; blue, 0 }  ,draw opacity=1 ]   (165.97,220.7) -- (218.2,213.4) ;
\draw [shift={(163.99,220.97)}, rotate = 352.05] [fill={rgb, 255:red, 0; green, 0; blue, 0 }  ,fill opacity=1 ][line width=0.08]  [draw opacity=0] (7.2,-1.8) -- (0,0) -- (7.2,1.8) -- cycle    ;
\draw [color={rgb, 255:red, 0; green, 0; blue, 0 }  ,draw opacity=1 ]   (165.82,211.18) -- (219,193) ;
\draw [shift={(163.93,211.82)}, rotate = 341.13] [fill={rgb, 255:red, 0; green, 0; blue, 0 }  ,fill opacity=1 ][line width=0.08]  [draw opacity=0] (7.2,-1.8) -- (0,0) -- (7.2,1.8) -- cycle    ;
\draw [color={rgb, 255:red, 0; green, 0; blue, 0 }  ,draw opacity=1 ]   (164.73,195.32) -- (220.2,161.4) ;
\draw [shift={(163.03,196.36)}, rotate = 328.56] [fill={rgb, 255:red, 0; green, 0; blue, 0 }  ,fill opacity=1 ][line width=0.08]  [draw opacity=0] (7.2,-1.8) -- (0,0) -- (7.2,1.8) -- cycle    ;
\draw [color={rgb, 255:red, 208; green, 2; blue, 27 }  ,draw opacity=1 ][line width=1.5]    (159.52,128.57) -- (224.19,63.91) ;
\draw [shift={(156.69,131.4)}, rotate = 315] [fill={rgb, 255:red, 208; green, 2; blue, 27 }  ,fill opacity=1 ][line width=0.08]  [draw opacity=0] (9.36,-2.34) -- (0,0) -- (9.36,2.34) -- cycle    ;
\draw  [color={rgb, 255:red, 248; green, 231; blue, 28 }  ,draw opacity=1 ][fill={rgb, 255:red, 248; green, 231; blue, 28 }  ,fill opacity=1 ] (305.9,76.21) .. controls (305.9,76.04) and (305.9,75.87) .. (305.9,75.7) .. controls (305.9,49.25) and (327.35,27.8) .. (353.8,27.8) .. controls (367.35,27.8) and (379.58,33.43) .. (388.3,42.47) -- (367.4,62.6) .. controls (363.96,59.03) and (359.14,56.82) .. (353.8,56.82) .. controls (343.37,56.82) and (334.92,65.27) .. (334.92,75.7) .. controls (334.92,75.77) and (334.92,75.83) .. (334.92,75.9) -- cycle ;
\draw  [color={rgb, 255:red, 126; green, 211; blue, 33 }  ,draw opacity=1 ][fill={rgb, 255:red, 126; green, 211; blue, 33 }  ,fill opacity=1 ] (387.61,42.02) .. controls (396.31,50.69) and (401.7,62.69) .. (401.7,75.95) .. controls (401.7,102.32) and (380.39,123.72) .. (354.05,123.85) -- (353.9,94.8) .. controls (364.26,94.75) and (372.65,86.33) .. (372.65,75.95) .. controls (372.65,70.73) and (370.53,66.01) .. (367.11,62.6) -- cycle ;
\draw  [color={rgb, 255:red, 155; green, 155; blue, 155 }  ,draw opacity=1 ][fill={rgb, 255:red, 155; green, 155; blue, 155 }  ,fill opacity=1 ] (354.34,123.91) .. controls (354.13,123.91) and (353.91,123.91) .. (353.7,123.91) .. controls (327.25,123.91) and (305.8,102.47) .. (305.8,76.01) .. controls (305.8,76) and (305.8,75.99) .. (305.8,75.99) -- (335.1,76) .. controls (335.1,76) and (335.1,76.01) .. (335.1,76.01) .. controls (335.1,86.28) and (343.43,94.61) .. (353.7,94.61) .. controls (353.78,94.61) and (353.87,94.61) .. (353.95,94.61) -- cycle ;
\draw [color={rgb, 255:red, 208; green, 2; blue, 27 }  ,draw opacity=1 ]   (355.12,74.6) -- (387.91,41.8) ;
\draw [shift={(353.7,76.01)}, rotate = 315] [fill={rgb, 255:red, 208; green, 2; blue, 27 }  ,fill opacity=1 ][line width=0.08]  [draw opacity=0] (7.2,-1.8) -- (0,0) -- (7.2,1.8) -- cycle    ;
\draw  [color={rgb, 255:red, 248; green, 231; blue, 28 }  ,draw opacity=1 ][fill={rgb, 255:red, 248; green, 231; blue, 28 }  ,fill opacity=1 ] (302.9,210.21) .. controls (302.9,210.04) and (302.9,209.87) .. (302.9,209.7) .. controls (302.9,183.25) and (324.35,161.8) .. (350.8,161.8) .. controls (362.38,161.8) and (373.01,165.91) .. (381.29,172.75) -- (362.6,195.4) .. controls (359.39,192.75) and (355.28,191.16) .. (350.8,191.16) .. controls (340.56,191.16) and (332.26,199.46) .. (332.26,209.7) .. controls (332.26,209.77) and (332.26,209.83) .. (332.26,209.9) -- cycle ;
\draw  [color={rgb, 255:red, 0; green, 122; blue, 255 }  ,draw opacity=1 ][fill={rgb, 255:red, 0; green, 122; blue, 255 }  ,fill opacity=1 ] (380.59,171.99) .. controls (383.66,174.4) and (386.42,177.17) .. (388.82,180.24) -- (365.8,198.2) .. controls (364.86,197) and (363.78,195.92) .. (362.58,194.98) -- cycle ;
\draw  [color={rgb, 255:red, 126; green, 211; blue, 33 }  ,draw opacity=1 ][fill={rgb, 255:red, 126; green, 211; blue, 33 }  ,fill opacity=1 ] (388.7,180.65) .. controls (394.97,188.75) and (398.7,198.91) .. (398.7,209.95) .. controls (398.7,236.32) and (377.39,257.72) .. (351.05,257.85) -- (350.9,228.8) .. controls (361.26,228.75) and (369.65,220.33) .. (369.65,209.95) .. controls (369.65,205.61) and (368.18,201.61) .. (365.71,198.42) -- cycle ;
\draw  [color={rgb, 255:red, 155; green, 155; blue, 155 }  ,draw opacity=1 ][fill={rgb, 255:red, 155; green, 155; blue, 155 }  ,fill opacity=1 ] (351.34,257.91) .. controls (351.13,257.91) and (350.91,257.91) .. (350.7,257.91) .. controls (324.25,257.91) and (302.8,236.47) .. (302.8,210.01) .. controls (302.8,210) and (302.8,209.99) .. (302.8,209.99) -- (332.1,210) .. controls (332.1,210) and (332.1,210.01) .. (332.1,210.01) .. controls (332.1,220.28) and (340.43,228.61) .. (350.7,228.61) .. controls (350.78,228.61) and (350.87,228.61) .. (350.95,228.61) -- cycle ;
\draw [color={rgb, 255:red, 208; green, 2; blue, 27 }  ,draw opacity=1 ]   (352.12,208.6) -- (385,175.71) ;
\draw [shift={(350.7,210.01)}, rotate = 315] [fill={rgb, 255:red, 208; green, 2; blue, 27 }  ,fill opacity=1 ][line width=0.08]  [draw opacity=0] (7.2,-1.8) -- (0,0) -- (7.2,1.8) -- cycle    ;
\draw   (46.05,129.84) -- (152.6,129.84) -- (152.6,236.39) -- (46.05,236.39) -- cycle ;

\draw (19.7,243.18) node [anchor=north west][inner sep=0.75pt]  [font=\small,color={rgb, 255:red, 155; green, 155; blue, 155 }  ,opacity=1 ]  {$\mu =+1$};
\draw (239.38,187.61) node [anchor=north west][inner sep=0.75pt]  [font=\small,color={rgb, 255:red, 155; green, 155; blue, 155 }  ,opacity=1 ]  {$\mu =+1$};
\draw (96.1,215.95) node [anchor=north west][inner sep=0.75pt]  [font=\small,color={rgb, 255:red, 155; green, 155; blue, 155 }  ,opacity=1 ]  {$\mu =-3$};
\draw (66.02,66.65) node [anchor=north west][inner sep=0.75pt]  [font=\small] [align=left] {$\displaystyle C_{\{1,2\}}$};
\draw (61.96,165.96) node [anchor=north west][inner sep=0.75pt]  [font=\small] [align=left] {$\displaystyle C_{0}$};
\draw (238.76,133.39) node [anchor=north west][inner sep=0.75pt]  [font=\small]  {$F_{1}$};
\draw (107.88,31.73) node [anchor=north west][inner sep=0.75pt]  [font=\small]  {$F_{2}$};
\draw (231.79,242.82) node [anchor=north west][inner sep=0.75pt]  [font=\small]  {$x_{1}$};
\draw (26.97,33.86) node [anchor=north west][inner sep=0.75pt]  [font=\small]  {$x_{2}$};
\draw (167.66,29.69) node [anchor=north west][inner sep=0.75pt]  [font=\small,color={rgb, 255:red, 155; green, 155; blue, 155 }  ,opacity=1 ]  {$\mu =+1$};
\draw (294,131.5) node [anchor=north west][inner sep=0.75pt]  [font=\small] [align=left] {(b) Cones at $\displaystyle \alpha =0.4$};
\draw (293,265.6) node [anchor=north west][inner sep=0.75pt]  [font=\small] [align=left] {(c) Cones at $\displaystyle \alpha =0.6$};
\draw (54.62,268.84) node [anchor=north west][inner sep=0.75pt]  [font=\small] [align=left] {(a) Cells and transports in D};

\end{tikzpicture}

\caption{Once $\alpha > 1/2$, the $+45^\circ$ required direction lies strictly inside the bundle cone---neither singleton gradient can certify it. Omitting the grand bundle from the menu is no longer optimal.}
\label{fig:ex4}
\end{figure}

The threshold $\alpha = 1/2$ is exactly where the grand bundle first becomes strictly preferred by an interior buyer --- weak free disposal, the lower-threshold condition of \Cref{sec:main-grand}.

\paragraph{Certifying separate sales below $\alpha_* = 1/2$.} Below the threshold the grand bundle is pointwise dominated, $\alpha(x_1+x_2) \le \max(x_1, x_2)$, so no buyer takes it and the cost degenerates to the unit-demand form $c_\alpha(h) = \max\{0, h_1, h_2\}$. Separate sales at the single-good revenue-maximizing price $p^\star = 1/\sqrt3$ is then certifiable. The dual plan is transparent: mass balance holds at $p^\star$ (its defining first-order condition, $\tfrac{\d}{\d p}\,p(1-p^2) = 0$); the origin atom feeds the exclusion cell $C_0 = [0, p^\star)^2$ along downward directions in the no-purchase cone; and each top face $F_i$ transports into its singleton cell $C_{\{i\}}$ along the cone $\{h : h_i \ge \max(0, h_j)\}$, the requisite Hall condition reducing to $p^\star \le 2/3$, which holds. So \Cref{prop:saddle-D} certifies separate sales. Consistently, the lottery deviation that could otherwise undercut it (a discounted split over the two goods) turns profitable only when the posted price exceeds $2/3$ --- above the uniform optimum.

\subsection{Without inclusivity, the upper threshold is infinite}\label{sec:eg-incl}

This example shows that finiteness of the upper threshold rests on a distributional condition: if the density places enough mass near $(1,1)$ that the crossing point $\hat s$ exceeds the top of each coordinate axis, no finite complementarity certifies pure bundling. The property that prevents this --- the exclusion cell sitting strictly below every top face --- is inclusivity (\cref{def:incl}).

Keep the valuation $v(x) = (x_1, x_2, \alpha(x_1 + x_2))$ and the pure-bundling menu $M$ (the grand bundle alone), but take a density with crossing point $\hat s > 1$ --- heavier near $(1,1)$ than the uniform, for which $\hat s = \sqrt{2/3} < 1$. The picture below depends only on $\hat s > 1$, not on the specific $f$.

\paragraph{Cells under inclusivity failure.} When $\hat s > 1$, the anti-diagonal $\{x_1 + x_2 = \hat s\}$ no longer connects opposite sides of $D$; it enters the square through the two top faces, at the points
\[
(\hat s - 1,\ 1) \in F_2,
\qquad
(1,\ \hat s - 1) \in F_1.
\]
The cell $C_{\{1,2\}}$ degenerates to the small upper-right triangle with vertices $(\hat s-1,1)$, $(1,\hat s-1)$, and $(1,1)$. The cell $C_0$ is everything else---crucially, $C_0$ now contains substantial segments of the top faces:
\[
F_1 \cap C_0 = \{1\} \times [0, \hat s-1],
\qquad
F_2 \cap C_0 = [0, \hat s-1] \times \{1\}.
\]
Panel (a) of \Cref{fig:ex3} shows this configuration.

\paragraph{The required direction no cone covers.} Focus on the positive top-face mass sitting in $F_i \cap C_0$. Inside $C_0$ the active option is exclusion, whose cone covers only coordinatewise-nonpositive displacements in $v$-space. So this mass cannot be consumed within $C_0$ --- any transport into the $C_0$ interior has a strictly positive singleton component the exclusion option cannot absorb. The mass must instead be moved into $C_{\{1,2\}}$. However, a source at $(1, \hat s-1) \in F_1 \cap \overline{C_0}$ paired with a sink $(u,v) \in C_{\{1,2\}}$ generates a displacement $h = (1-u,\ \hat s-1-v)$ with $h_1 + h_2 = \hat s - (u+v) \le 0$, with equality only on the anti-diagonal boundary. The required transport points along the negative anti-diagonal --- a $-45^\circ$ direction.

\paragraph{Why no $\alpha$ helps.} A direct computation gives $\text{bundle cone} \subseteq \{h : h_1 + h_2 \ge 0\}$ for every $\alpha > 0$ (equality as $\alpha \to \infty$): a displacement with $h_1 + h_2 < 0$ has $\alpha(h_1+h_2) < 0$, so the grand-bundle gradient cannot attain the cost. As $\alpha$ grows the bundle cone widens into both mixed-sign quadrants but is always capped by the $-45^\circ$ line (panels (b), (c) of \Cref{fig:ex3}, at $\alpha = 2$ and $\alpha = 10$), so the required $-45^\circ$ direction is never covered and $\alpha^*$ runs off to $+\infty$. Economically, inclusivity failure leaves an unexploited singleton margin at near-top types with sub-bundle sum --- a positive-surplus pocket that no widening of the bundle cone can reach.

\begin{figure}[htbp]
\centering
\tikzset{every picture/.style={line width=0.75pt}} 

\begin{tikzpicture}[x=0.75pt,y=0.75pt,yscale=-1,xscale=1]

\draw  [fill={rgb, 255:red, 0; green, 122; blue, 255 }  ,fill opacity=0.28 ] (249.25,49.65) -- (249.25,196.25) -- (102.5,49.5) -- cycle ;
\draw  [fill={rgb, 255:red, 0; green, 0; blue, 0 }  ,fill opacity=0.12 ] (249.25,230.55) -- (68.35,230.55) -- (68.35,49.65) -- (102.5,49.5) -- (249.25,196.25) -- cycle ;
\draw   (68.35,49.65) -- (249.25,49.65) -- (249.25,230.55) -- (68.35,230.55) -- cycle ;
\draw    (120.67,183.18) -- (71.79,224.95) ;
\draw [shift={(122.19,181.88)}, rotate = 139.48] [fill={rgb, 255:red, 0; green, 0; blue, 0 }  ][line width=0.08]  [draw opacity=0] (7.2,-1.8) -- (0,0) -- (7.2,1.8) -- cycle    ;
\draw [color={rgb, 255:red, 208; green, 2; blue, 27 }  ,draw opacity=1 ][line width=1.5]    (211.59,154.95) -- (247.88,191.07) ;
\draw [shift={(208.75,152.13)}, rotate = 44.86] [fill={rgb, 255:red, 208; green, 2; blue, 27 }  ,fill opacity=1 ][line width=0.08]  [draw opacity=0] (9.36,-2.34) -- (0,0) -- (9.36,2.34) -- cycle    ;
\draw  [color={rgb, 255:red, 248; green, 231; blue, 28 }  ,draw opacity=1 ][fill={rgb, 255:red, 248; green, 231; blue, 28 }  ,fill opacity=1 ] (314,78.37) .. controls (314,78.2) and (314,78.03) .. (314,77.86) .. controls (314,58.86) and (325.06,42.45) .. (341.09,34.7) -- (353.77,61) .. controls (347.5,64.02) and (343.18,70.44) .. (343.18,77.86) .. controls (343.18,77.93) and (343.18,77.99) .. (343.18,78.06) -- cycle ;
\draw  [color={rgb, 255:red, 0; green, 122; blue, 255 }  ,draw opacity=1 ][fill={rgb, 255:red, 0; green, 122; blue, 255 }  ,fill opacity=1 ] (341.82,34.48) .. controls (347.99,31.58) and (354.88,29.96) .. (362.15,29.96) .. controls (388.6,29.96) and (410.05,51.41) .. (410.05,77.86) .. controls (410.05,85.42) and (408.3,92.57) .. (405.18,98.92) -- (378.69,85.96) .. controls (379.89,83.51) and (380.56,80.77) .. (380.56,77.86) .. controls (380.56,67.69) and (372.32,59.45) .. (362.15,59.45) .. controls (359.35,59.45) and (356.7,60.07) .. (354.33,61.18) -- cycle ;
\draw  [color={rgb, 255:red, 126; green, 211; blue, 33 }  ,draw opacity=1 ][fill={rgb, 255:red, 126; green, 211; blue, 33 }  ,fill opacity=1 ] (405.43,98.12) .. controls (397.89,114.51) and (381.35,125.91) .. (362.15,126.01) -- (362,96.96) .. controls (369.55,96.92) and (376.06,92.43) .. (379.03,85.98) -- cycle ;
\draw  [color={rgb, 255:red, 155; green, 155; blue, 155 }  ,draw opacity=1 ][fill={rgb, 255:red, 155; green, 155; blue, 155 }  ,fill opacity=1 ] (362.43,126.07) .. controls (362.22,126.07) and (362.01,126.07) .. (361.8,126.07) .. controls (335.34,126.07) and (313.9,104.62) .. (313.9,78.17) .. controls (313.9,78.16) and (313.9,78.15) .. (313.9,78.15) -- (343.2,78.16) .. controls (343.2,78.16) and (343.2,78.17) .. (343.2,78.17) .. controls (343.2,88.44) and (351.52,96.77) .. (361.8,96.77) .. controls (361.88,96.77) and (361.96,96.77) .. (362.04,96.77) -- cycle ;
\draw [color={rgb, 255:red, 208; green, 2; blue, 27 }  ,draw opacity=1 ]   (363.21,79.58) -- (395.8,112.17) ;
\draw [shift={(361.8,78.17)}, rotate = 45] [fill={rgb, 255:red, 208; green, 2; blue, 27 }  ,fill opacity=1 ][line width=0.08]  [draw opacity=0] (7.2,-1.8) -- (0,0) -- (7.2,1.8) -- cycle    ;
\draw [color={rgb, 255:red, 208; green, 2; blue, 27 }  ,draw opacity=1 ]   (360.38,76.76) -- (327.83,44.2) ;
\draw [shift={(361.8,78.17)}, rotate = 225] [fill={rgb, 255:red, 208; green, 2; blue, 27 }  ,fill opacity=1 ][line width=0.08]  [draw opacity=0] (7.2,-1.8) -- (0,0) -- (7.2,1.8) -- cycle    ;
\draw [color={rgb, 255:red, 208; green, 2; blue, 27 }  ,draw opacity=1 ][line width=1.5]    (146.15,88.56) -- (109,51.88) ;
\draw [shift={(149,91.38)}, rotate = 224.64] [fill={rgb, 255:red, 208; green, 2; blue, 27 }  ,fill opacity=1 ][line width=0.08]  [draw opacity=0] (9.36,-2.34) -- (0,0) -- (9.36,2.34) -- cycle    ;
\draw  [color={rgb, 255:red, 248; green, 231; blue, 28 }  ,draw opacity=1 ][fill={rgb, 255:red, 248; green, 231; blue, 28 }  ,fill opacity=1 ] (314.4,204.37) .. controls (314.4,204.2) and (314.4,204.03) .. (314.4,203.86) .. controls (314.4,184.86) and (325.46,168.45) .. (341.49,160.7) -- (354.17,187) .. controls (347.9,190.02) and (343.58,196.44) .. (343.58,203.86) .. controls (343.58,203.93) and (343.58,203.99) .. (343.58,204.06) -- cycle ;
\draw  [color={rgb, 255:red, 0; green, 122; blue, 255 }  ,draw opacity=1 ][fill={rgb, 255:red, 0; green, 122; blue, 255 }  ,fill opacity=1 ] (331.16,167.68) .. controls (339.57,160.38) and (350.54,155.96) .. (362.55,155.96) .. controls (389,155.96) and (410.45,177.41) .. (410.45,203.86) .. controls (410.45,215.97) and (405.95,227.03) .. (398.54,235.47) -- (376.6,216.2) .. controls (379.49,212.91) and (381.25,208.59) .. (381.25,203.86) .. controls (381.25,193.53) and (372.88,185.16) .. (362.55,185.16) .. controls (357.86,185.16) and (353.57,186.88) .. (350.29,189.73) -- cycle ;
\draw  [color={rgb, 255:red, 126; green, 211; blue, 33 }  ,draw opacity=1 ][fill={rgb, 255:red, 126; green, 211; blue, 33 }  ,fill opacity=1 ] (398.16,235.86) .. controls (389.44,245.71) and (376.72,251.94) .. (362.55,252.01) -- (362.4,222.96) .. controls (367.97,222.93) and (372.98,220.48) .. (376.41,216.61) -- cycle ;
\draw  [color={rgb, 255:red, 155; green, 155; blue, 155 }  ,draw opacity=1 ][fill={rgb, 255:red, 155; green, 155; blue, 155 }  ,fill opacity=1 ] (362.83,252.07) .. controls (362.62,252.07) and (362.41,252.07) .. (362.2,252.07) .. controls (335.74,252.07) and (314.3,230.62) .. (314.3,204.17) .. controls (314.3,204.16) and (314.3,204.15) .. (314.3,204.15) -- (343.6,204.16) .. controls (343.6,204.16) and (343.6,204.17) .. (343.6,204.17) .. controls (343.6,214.44) and (351.92,222.77) .. (362.2,222.77) .. controls (362.28,222.77) and (362.36,222.77) .. (362.44,222.77) -- cycle ;
\draw [color={rgb, 255:red, 208; green, 2; blue, 27 }  ,draw opacity=1 ]   (363.61,205.58) -- (396.2,238.17) ;
\draw [shift={(362.2,204.17)}, rotate = 45] [fill={rgb, 255:red, 208; green, 2; blue, 27 }  ,fill opacity=1 ][line width=0.08]  [draw opacity=0] (7.2,-1.8) -- (0,0) -- (7.2,1.8) -- cycle    ;
\draw [color={rgb, 255:red, 208; green, 2; blue, 27 }  ,draw opacity=1 ]   (360.78,202.76) -- (328.23,170.2) ;
\draw [shift={(362.2,204.17)}, rotate = 225] [fill={rgb, 255:red, 208; green, 2; blue, 27 }  ,fill opacity=1 ][line width=0.08]  [draw opacity=0] (7.2,-1.8) -- (0,0) -- (7.2,1.8) -- cycle    ;

\draw (137.32,56.81) node [anchor=north west][inner sep=0.75pt]  [font=\small] [align=left] {$\displaystyle C_{\{1,2\}}$};
\draw (86.66,151.32) node [anchor=north west][inner sep=0.75pt]  [font=\small] [align=left] {$\displaystyle C_{0}$};
\draw (257.46,127.15) node [anchor=north west][inner sep=0.75pt]  [font=\small]  {$F_{1}$};
\draw (130.18,25.89) node [anchor=north west][inner sep=0.75pt]  [font=\small]  {$F_{2}$};
\draw (251.25,233.95) node [anchor=north west][inner sep=0.75pt]  [font=\small]  {$x_{1}$};
\draw (49.27,28.02) node [anchor=north west][inner sep=0.75pt]  [font=\small]  {$x_{2}$};
\draw (310.1,134.16) node [anchor=north west][inner sep=0.75pt]  [font=\small] [align=left] {(c) Cones at $\displaystyle \alpha =2$};
\draw (75.92,262) node [anchor=north west][inner sep=0.75pt]  [font=\small] [align=left] {(a) Cells and transports in D};
\draw (310.5,260.16) node [anchor=north west][inner sep=0.75pt]  [font=\small] [align=left] {(c) Cones at $\displaystyle \alpha =10$};

\end{tikzpicture}

\caption{When inclusivity fails ($\hat s > 1$), the required transport directions lie outside the bundle cone for every $\alpha$: the bundle cone's boundary at $\{h_1+h_2 \ge 0\}$ is a hard structural limit, not a complementarity issue.}
\label{fig:ex3}
\end{figure}

\subsection{Upper threshold in partial bundling: the core-peripheral case}\label{sec:eg-core}

The first three examples were symmetric. The last generalizes the upper-threshold construction to a core-peripheral environment, where good $1$ carries value on its own while good $2$ is valuable mainly in combination. Attach the complementarity multiplier to good $1$ and the grand bundle, leaving good $2$ at baseline:
\[
v(x) = \bigl(\alpha x_1,\ x_2,\ \alpha(x_1 + x_2)\bigr),
\qquad \alpha > 1.
\]
Good $1$ is the \emph{core} and good $2$ is the \emph{peripheral} good. The candidate menu now offers both the core good and the bundle,
\[
M = \bigl\{(\{1\},\, \alpha p_1),\ (\{1,2\},\, \alpha p_{\{1,2\}})\bigr\},
\]
with cost and cones still taken against the full family $\mathcal{P}$. The upper threshold in this partial-bundling case is the level above which $M$ is certified.

\paragraph{Cell structure and required transport.} The cell partition and source/sink decomposition parallel the pure-bundling case of \Cref{sec:eg-pure}, with one asymmetry: the menu now has an active singleton cell $C_{\{1\}} = \{x_1 \ge p_1,\ x_2 \le p_{\{1,2\}} - p_1\}$ in addition to exclusion $C_0$ and bundle $C_{\{1,2\}}$, with axis-aligned boundaries (panel (a) of \Cref{fig:ex2}). $F_1$-mass is split between $C_{\{1\}}$ (lower segment, fourth-quadrant displacements) and $C_{\{1,2\}}$ (upper segment, first-quadrant displacements --- in the bundle cell the cone requires $h_2 \ge 0$); $F_2$-mass routes entirely to $C_{\{1,2\}}$ along second-quadrant displacements --- the obstruction at $\alpha = 1$.

\paragraph{The cones.} The asymmetry between core and peripheral shows up cleanly in the cones (panels (b) and (c) of \Cref{fig:ex2}). The good-$1$ cone is the entire fourth quadrant $\{h_1 \ge 0,\ h_2 \le 0\}$ \emph{for every} $\alpha$: the multiplier scales the core good and the bundle symmetrically and cancels in their comparison. Complementarity instead widens only the bundle cone, into the second quadrant at the expense of the good-$2$ cone, with boundary ray $h_1 = -\tfrac{\alpha - 1}{\alpha}\,h_2$. At $\alpha = 1$ the bundle cone is the first quadrant and no menu-bundle's cone covers the required $F_2$-to-interior directions; once $\alpha$ is large enough that the widened bundle cone reaches them, the coverage obstruction disappears; \Cref{thm:core-bundle} turns this picture into a finite-threshold guarantee under its hypotheses --- restricted optimality (entering through cell mass balance), inclusivity, and strict core regularity (\Cref{def:align-reg}). As in \Cref{sec:eg-pure}, the bundle cone widens at rate $1 - 1/\alpha$, but one-sidedly: the core good's cone absorbs all fourth-quadrant transport for free, so only the peripheral direction is sensitive to $\alpha$.

\begin{figure}[htbp]
\centering
\input{figures/examples/ex2}

\caption{Core-peripheral bundling with asymmetric valuations $v=(\alpha x_1, x_2, \alpha(x_1+x_2))$ and menu $M$ offering $\{1\},\{1,2\}$. The bundle cone is sensitive to $\alpha$ only in the peripheral direction; the core-good cone is invariant.}
\label{fig:ex2}
\end{figure}

\paragraph{Three regimes in the benchmark.}\label{sec:eg-complete}
The first two examples bracket three regimes in the symmetric two-good iid uniform environment: below $\alpha_* = 1/2$ --- the level at which weak free disposal of the grand bundle first holds --- separate sales at $p^\star = 1/\sqrt3$ is certified (\Cref{sec:eg-grand}); above $\alpha^*_{\mathrm{unif}} \approx 1.253$ the line construction of \Cref{os:dual-transport} certifies pure bundling (\Cref{sec:eg-pure}); and in the intermediate range the optimum is characterized here only at the additive point $\alpha = 1$, where it is the mixed-bundling menu of \citet{Pavlov2011} --- the grand bundle priced against two singletons --- with no closed-form sufficient condition covering the rest of the range. The benchmark is illustrative; \Cref{thm:core-bundle,thm:grand-bundle-nec} carry the framework to general $N$, with a closed-form iid Beta benchmark in \Cref{sec:apps-pure}.

\section{Core-Peripheral Bundling}\label{sec:main}

This section turns the recurrent menu pattern described in the examples into a theorem. The argument is organized around \emph{two complementarity thresholds} attached to the same core $C$ and type distribution $f$. An upper threshold $\alpha^*$ is the level above which the optimal mechanism collapses onto core-containing bundles --- \emph{core-peripheral bundling} in the language of the introduction; \Cref{sec:main-core} states the sufficiency theorem (\Cref{thm:core-bundle}) that produces it. A lower threshold $\alpha_*$ is the level above which the grand bundle is forced into the menu; \Cref{sec:main-grand} states the matching necessity theorem (\Cref{thm:grand-bundle-nec}). Two further subsections support the pair: \Cref{sec:main-incl} establishes that inclusivity is the binding distributional condition for finiteness of $\alpha^*$, and \Cref{sec:main-algebra} explains why the core-peripheral family is the natural carrier of the premium. The applications of the two theorems --- pure bundling characterized, the iid Beta benchmark, and an explicit two-tier family --- are developed in \Cref{sec:apps}.

Throughout this section I take the \emph{standing form}. The type space is the box $D = \prod_{i \in [N]} [0, \bar v_i]$; the density $f$ is $C^2$ and strictly positive on $D$, with radial score $\Phi(x) := x \cdot \nabla \log f(x) + N + 1$ strictly positive; valuations take the core-peripheral form \eqref{eq:standing-v} for a fixed core $C$, $P := [N] \setminus C$; and the candidate is a \emph{core-peripheral mechanism} --- a menu of lotteries over the core-containing bundles $\mathcal{F}_C := \{B \in \mathcal{P} : C \subseteq B\}$, so each option $\sigma_\ell$ has marginal-inclusion vector $w_\ell := \E_{B \sim \sigma_\ell}[\mathbf 1_B]$ and a strictly positive price. Ties are broken seller-favorably (\Cref{sec:model}). The differential virtual value of \Cref{sec:revenue} splits into a boundary part --- the origin atom and the top faces $F_j := \{x_j = \bar v_j\}$ --- where granting surplus rewards the seller, and an interior part where it is absorbed:
\[
\mu = \mu^+ - \mu^-, \qquad \mu^+ = \delta_0 + \sum_{j} \bar v_j\, f|_{F_j}, \qquad \mu^- = \Phi f \cdot \lambda_D .
\]

\subsection{Upper threshold: sufficiency of core-peripheral bundling}\label{sec:main-core}

When is a core-peripheral mechanism optimal not merely among menus on $\mathcal{F}_C$ but against \emph{every} mechanism --- so that every active offer sells a bundle containing the entire core, or else excludes the buyer? \Cref{thm:core-bundle} gives conditions on the density $f$ and the candidate menu under which the answer is affirmative once complementarity exceeds a finite \emph{upper threshold} $\alpha^*$.

Beyond restricted optimality of the menu, the sufficiency theorem rests on two further conditions, \emph{inclusivity} (on the menu's prices) and \emph{core regularity} (on the menu--density pair, in a weak and a strict form), introduced in turn. Both are stated on the menu's cells: option $\ell$ wins on a cell $C_\ell \subseteq D$, and prices scale with the premium, $p_\ell = \alpha \hat s_\ell$, so the \emph{crossing points} $\hat s_\ell := p_\ell/\alpha$ and the cells $\{C_\ell\}_{\ell \ge 0}$ are $\alpha$-independent (\Cref{app:core}).

\begin{definition}[Inclusivity]\label{def:incl}
A candidate menu $M$ on $\mathcal{F}_C$, with exclusion cell $C_0$ (the no-purchase region), is \emph{inclusive} if no top face meets the exclusion cell: $F_j \cap C_0 = \varnothing$ for every $j \in [N]$.
\end{definition}

\noindent Inclusivity is the key economic condition: the menu's crossing points are low enough to \emph{include} every buyer who is near-maximal on a single coordinate. Were some $F_j$ absorbed by $C_0$, a positive mass of buyers with $x_j$ near $\bar v_j$ would buy nothing despite valuing item $j$ near its maximum, and an item-$j$ offer priced just below its standalone value would break in. \Cref{thm:incl-necessity} below makes the rationale formal: leaving a top face excluded is strictly suboptimal at every $\alpha > 0$. For pure bundling, inclusivity reduces to the single price inequality $\hat s \le \min_i \bar v_i$.

The second condition generalizes Myersonian regularity --- in one dimension, the statement that the revenue-rewarding part of the differential virtual value dominates the revenue-absorbing part in stochastic order. With several goods the same comparison is made, but only along the directions the cell's option can push surplus, and the menu carries the two objects that define them: the option's own value $s_\ell(x) := w_\ell \cdot x$, and its \emph{two-sided weight matrix}
\[
\mathsf{D}_\ell \;:=\; \begin{pmatrix} \operatorname{diag}\bigl(w_\ell|_P\bigr) \\[1mm] -\operatorname{diag}\bigl((\mathbf 1 - w_\ell)|_P\bigr) \end{pmatrix},
\]
the diagonal weights taken over the peripheral coordinates: $\mathsf{D}_\ell$ compares each peripheral item \emph{upward} at the rate the option includes it and \emph{downward} at the rate it excludes it.\footnote{A zero weight silences the corresponding comparison. The doubled comparison on a fractionally allocated item forces the two score laws' marginals there to coincide, which is what freezes those coordinates ($h_j = 0$) in the coupling of \Cref{os:align-transfer}; the permitted displacements are then exactly the directions the option covers (\Cref{lem:cone-forcing}).} The \emph{option score}
\[
Q_\ell(x) \;:=\; \bigl(\, s_\ell(x),\ \mathsf{D}_\ell\, x|_P \,\bigr)
\]
reads a type in exactly these directions, $x|_P$ denoting the restriction of $x \in \R^N$ to the peripheral coordinates; let $\nu^\pm_{\ell,0}$ be the law of $\mu^\pm_\ell := \mu^\pm|_{C_\ell}$ --- the cell's rewarding and absorbing masses --- under $Q_\ell$.

\begin{definition}[Weak core regularity]\label{def:weak-core}
The pair $(f, M)$ is \emph{weakly core-regular} if
\[
\nu^-_{\ell,0} \preceq_{\mathrm{FOSD}} \nu^+_{\ell,0}\footnotemark
\]
at every selling cell $\ell$.
\end{definition}

\footnotetext{That is, writing $\nu^\pm_{\ell,0}(A) := \mu^\pm_\ell\{x : Q_\ell(x) \in A\}$: $\int \varphi\, \d\nu^-_{\ell,0} \le \int \varphi\, \d\nu^+_{\ell,0}$ for every bounded Borel coordinatewise nondecreasing $\varphi$; taking $\varphi$ constant, the dominance forces the two cell masses to be equal.}

\noindent Weak core regularity is Myersonian regularity read cell by cell: the types the seller rewards with surplus must sit above the types from which surplus is withheld --- in willingness to pay for the offered option, and in the peripheral directions the option covers, but in no others. Its architectural role is also Myerson's. Mass balance is the price first-order condition (\Cref{lem:mb-foc}). The exact no-ironing condition is the marginal identity of \Cref{rmk:no-ironing}; weak core regularity is a primitive condition that implies it in the general setting --- upgrading the first-order candidate to a within-cell transport certificate against the restricted menu --- and becomes exact for pure bundling (\Cref{os:apps-sandwich}).

The strict form of core regularity stress-tests this order by \emph{partially unbundling} the option. For an unbundling probability $\varepsilon \in [0,1]$ and each nonempty $T \subseteq C$, the \emph{unbundling score}
\[
q_{\ell,T}^{\varepsilon}(x) \;:=\; \varepsilon \sum_{i \in T} x_i \;+\; (1-\varepsilon)\, s_\ell(x)
\]
is the value of the lottery that replaces the cell's own bundle lottery $\sigma_\ell$ with the core subbundle $T$ with probability $\varepsilon$. Replacing the option's value by the unbundled family turns the option score into
\[
Q_\ell^{\varepsilon}(x) \;:=\; \Bigl(\, \bigl(q_{\ell,T}^{\varepsilon}(x)\bigr)_{\varnothing \ne T \subseteq C},\ \ \mathsf{D}_\ell\, x|_P \,\Bigr),
\]
with $\nu^\pm_{\ell,\varepsilon}$ the law of $\mu^\pm_\ell$ under $Q_\ell^{\varepsilon}$.

\begin{definition}[Strict core regularity]\label{def:align-reg}
A weakly core-regular pair $(f, M)$ (\Cref{def:weak-core}) is \emph{strictly core-regular} if the dominance survives a common strictly positive \emph{unbundling slack}: for some $\varepsilon > 0$,
\[
\nu^-_{\ell,\varepsilon} \preceq_{\mathrm{FOSD}} \nu^+_{\ell,\varepsilon} \qquad \text{at every selling cell } \ell.\footnotemark
\]
\end{definition}

\footnotetext{Dominance at any positive $\varepsilon$ already implies the $\varepsilon = 0$ order (\Cref{os:align-gen,os:align-transfer}), so the strict form strengthens the weak one.}

\noindent What strictness adds is robustness: the order must survive replacing the option, with positive probability, by \emph{any} subbundle of the core --- and survive it with a uniform, strictly positive slack. At $C = [N]$ --- for the mass-balanced, inclusive pure-bundling menu --- core regularity aligns with the classical condition form for form: the weak form is exactly weak Myersonian regularity of the bundle value on the active range, and, for strictly inclusive prices, strict Myersonian regularity supplies the strict form (\Cref{os:apps-sandwich}, \Cref{sec:apps-pure}). 

\begin{theorem}[Optimality of core-peripheral bundling]\label{thm:core-bundle}
Fix a core $C \subseteq [N]$ and let $M = \{(\sigma_\ell, p_\ell)\}_{\ell=1}^L$ be a finite menu that is \emph{optimal in the restricted problem on $\mathcal{F}_C$}. If $M$ is \emph{inclusive} (\Cref{def:incl}) and $(f, M)$ is \emph{strictly core-regular} (\Cref{def:align-reg}), then there is a finite threshold $\alpha^*(C, M)$ such that $M$ is optimal for the unrestricted problem on $\mathcal{P}$ for every $\alpha \ge \alpha^*(C, M)$.\footnotemark
\end{theorem}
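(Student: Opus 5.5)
The plan is to certify $M$ through \Cref{prop:saddle-D}. I will build, for each selling cell $C_\ell$, a transport plan $\gamma_\ell$ from $\mu^+_\ell$ to $\mu^-_\ell$ supported in $\overline{C_\ell}\times\overline{C_\ell}$. The exclusion cell gets its own plan $\gamma_0$. Setting $\gamma=\sum_\ell\gamma_\ell$ then gives condition (iii), $\gamma_1-\gamma_2=\mu$, cell by cell, and condition (i) holds by construction. All remaining work goes into condition (ii): every displacement $h=x-x'$ charged by $\gamma_\ell$ must lie in $K_B^{\mathcal P}$ for every $B\in\operatorname{supp}\sigma_\ell$, computed against the \emph{full} family $\mathcal P$ once $\alpha$ is large.

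Because prices scale as $p_\ell=\alpha\hat s_\ell$, the cells and hence the measures $\mu^\pm_\ell$ do not depend on $\alpha$. So I will construct the plans once and only check the cone condition as $\alpha$ grows.

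\textbf{Step 1 (mass balance and the exclusion cell).} Restricted optimality of $M$ on $\mathcal F_C$ gives the price first-order conditions. Via \Cref{lem:mb-foc}, these are cell-wise mass balance: $\mu^+(C_\ell)=\mu^-(C_\ell)$ for every selling cell, and hence also for $C_0$. Inclusivity says no top face meets $C_0$, so $\mu^+|_{C_0}$ is just the origin atom $\delta_0$. I then send $\delta_0$ to $\mu^-|_{C_0}$, so every displacement has the form $0-x'$ with $x'\ge0$. Such an $h$ is coordinatewise nonpositive, so $v_B(h)\le0$ for every $B$ and every $\alpha>0$, and $h$ lies in the exclusion cone. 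This is exactly where inclusivity is used; \Cref{sec:eg-incl} shows that any top-face mass in $C_0$ would be stuck.

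\textbf{Step 2 (within-cell coupling from strict core regularity).} On a selling cell I use the FOSD $\nu^-_{\ell,\varepsilon}\preceq\nu^+_{\ell,\varepsilon}$. By Strassen's theorem it yields a monotone coupling of $\mu^+_\ell$ and $\mu^-_\ell$ along which, for every pair $(x,x')$:
\begin{itemize}
\item[(a)] $q^\varepsilon_{\ell,T}(h)\ge0$ for all nonempty $T\subseteq C$;
\item[(b)] $w_{\ell,j}h_j\ge0$ and $(1-w_{\ell,j})h_j\le0$ for each $j\in P$, so $h_j\ge0$ on items always included, $h_j\le0$ on items never included, and $h_j=0$ on fractional items.
\end{itemize}
Here I take the coupling from the cited OS transfer result and lift it from score space back to $D$ by disintegration. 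Both marginals live in $\overline{C_\ell}$, so condition (i) holds. Condition (b) is what \Cref{lem:cone-forcing} converts into "$h$ is covered by each $B\in\operatorname{supp}\sigma_\ell$ against competitors in $\mathcal F_C$": peripheral additions or removals relative to $B$ cannot raise value.

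\textbf{Step 3 (cone coverage against non-core-containing bundles, the main obstacle).} It remains to show $v_B(h)\ge v_{B'}(h)$ for $B'\notin\mathcal F_C$, that is, $\alpha\sum_{i\in B}h_i\ge\sum_{i\in B'}h_i$. The difficulty is that (a) only controls the sums $\varepsilon\sum_{i\in T}h_i+(1-\varepsilon)s_\ell(h)$, and $B'$ may carry peripheral items as well as part of the core. My plan has three parts.
\begin{itemize}
\item Split $B'=T\cup R$ with $T=B'\cap C\subsetneq C$ and $R\subseteq P$. Use (b) to bound $\sum_{i\in R}h_i\le\sum_{i\in R\cap B}h_i$, and from there by the peripheral part of $s_\ell(h)$.
\item Rewrite (a) as $s_\ell(h)\ge\frac{\varepsilon}{1-\varepsilon}\bigl(s_\ell(h)-\sum_{i\in T}h_i\bigr)$, which gives $\sum_{i\in T}h_i\ge-\frac{1-\varepsilon}{\varepsilon}\,s_\ell(h)$. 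Also $s_\ell(h)\ge0$ by the $\varepsilon=0$ order.
\item Show that $\sum_{i\in B}h_i$ and $s_\ell(h)$ agree up to the frozen coordinates ($h_j=0$). This reduces the target to $\alpha\,s_\ell(h)\ge\sum_{i\in T}h_i+\sum_{i\in R}h_i$.
\end{itemize}
The right-hand side must be bounded by a constant multiple of $s_\ell(h)$. The linear part comes from (a) and (b); the leftover term is a positive core sum $\sum_{i\in T}h_i$, which should be controlled by $s_\ell(h)$ plus the nonnegative terms $h_i$, $i\in C\setminus T$, if I apply (a) with $T'=C\setminus T$ as well.

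The degenerate case $s_\ell(h)=0$ needs separate handling. There (a) with the full family forces $\sum_{i\in T}h_i\ge0$ and, applied to complements, $\le0$, so the inequality becomes equality and holds for every $\alpha$. Once the comparison is a uniform multiple, any $\alpha^*\ge1+c(\varepsilon)$ with $c(\varepsilon)\sim(1-\varepsilon)/\varepsilon$ works, and finiteness follows. I expect the bookkeeping in this step, namely obtaining a bound uniform over pairs from FOSD that holds only in law, to be the delicate part. If the pointwise reduction fails, I would fall back to choosing the coupling to be comonotone in $q^\varepsilon_{\ell,T}$ simultaneously for all $T$, which the multivariate FOSD permits.
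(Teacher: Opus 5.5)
Your proposal is correct and is essentially the paper's proof. It has the same ingredients: mass balance from the price first-order condition, the atom-to-interior coupling on the exclusion cell via inclusivity, a Strassen coupling of $\mu^\pm_\ell$ ordered in every coordinate of $Q^\varepsilon_\ell$ (so (a) and (b) hold pairwise and simultaneously, and your comonotone fallback is not needed), and a cone check against non-core bundles giving $\alpha^*=1/\varepsilon$.

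Your per-bundle bound $\sum_{i\in B'}h_i\le s_\ell(h)-\sum_{i\in C\setminus T}h_i\le s_\ell(h)/\varepsilon$, obtained from (a) at the complementary core set, is the paper's core-reversal bound $c^P(h)\le s_\ell(h)+R_C(h)$, where retention is tested at $\{i\in C:h_i<0\}$. Note that $s_\ell(h)\ge 0$ already follows pairwise from (a) at $T=C$ together with (b), so you need neither a separate $\varepsilon=0$ coupling nor special handling of $s_\ell(h)=0$.
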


\footnotetext{Prices scale with $\alpha$: $M$ is the cell structure with $\alpha$-independent crossing points $\hat s_\ell = p_\ell/\alpha$, posting prices $\alpha\hat s_\ell$ at complementarity $\alpha$ (\Cref{app:core}). The assertion is that this $\alpha$-indexed family --- fixed cells $\{C_\ell\}$ and crossing points $\hat s_\ell$ --- is unrestricted-optimal once $\alpha \ge \alpha^*(C, M)$.}

\paragraph{Proof sketch.} The certificate is a dual transport plan (\Cref{prop:saddle-D}): in each cell it carries revenue-rewarding boundary mass onto revenue-absorbing interior mass along displacements the cell's option \emph{covers}. Optimality is therefore a coverage question --- do the displacements the distribution forces lie in the option's covered cone? --- and complementarity is the lever, since it widens that cone. The construction is in \Cref{app:core}.

\textit{Step 1 --- Restricted optimality balances every cell.} Restricted optimality fixes each selling cell's price at its revenue-maximizing level; the price first-order condition forces $\mu(C_\ell) = 0$ on each selling cell (\Cref{lem:mb-foc}), so the required transport is within-cell. Inclusivity keeps the boundary mass off the exclusion cell --- every near-top buyer stays served, so no singleton offer can break in.

\textit{Step 2 --- Weak core regularity supplies the coupling; strictness gives it slack.} Through a monotone-coupling form of Strassen's theorem, dominance at slack $\varepsilon$ transfers into an $\alpha$-independent within-cell coupling of $\mu^+_\ell$ to $\mu^-_\ell$ whose displacements $h$ respect the cell's sign pattern and retain
\[
s_\ell(h) \;\ge\; \varepsilon\,\bigl[\,s_\ell(h) + R_C(h)\,\bigr],
\qquad
R_C(h) \;:=\; \sum_{i \in C} \max\{-h_i,\, 0\},
\]
where the \emph{core reversal} $R_C(h)$ is the largest reversal of the pair's option ranking obtainable by selecting a core subbundle: required directions sit in the core-only cone $K_\ell^{\mathcal{F}_C}$ with uniform slack (\Cref{os:align-transfer}).

\textit{Step 3 --- Complementarity absorbs the non-core threat.} A non-core alternative separates a coupled pair by at most $s_\ell(h) + R_C(h)$ --- unbundling avoids the core reversal, and no more (\Cref{lem:cone-forcing}) --- which retention caps at $s_\ell(h)/\varepsilon$: once $\alpha \ge 1/\varepsilon$, every coupled displacement lies in the full cone $K_\ell^{\mathcal{P}}$, so complementarity pays the inverse of the certified slack --- the theorem holds with $\alpha^*(C, M) := 1/\varepsilon$. The threshold is finite precisely because strictness excludes the limiting failure $s_\ell(h) = 0 < R_C(h)$ --- two types with identical willingness to pay for the option that some core subbundle strictly separates, a reversal no finite multiplier on the option undoes. Assembling the cell couplings --- with the exclusion cell handled by inclusivity --- yields the saddle-point certificate.

\subsection{Lower threshold: necessity of the grand bundle}\label{sec:main-grand}

\Cref{thm:core-bundle} identified the upper threshold $\alpha^*$. The complementary question is when complementarity \emph{forces} the grand bundle into the menu. Throughout, $M_{\mathcal{F}} = \{(\sigma_\ell, p_\ell^*)\}_{\ell=1}^{L^*}$ is a candidate mechanism on a family $\mathcal{F} \not\ni [N]$, with highest posted price $\bar p := \max_\ell p_\ell^*$. Extend the bundle-to-cone construction (\Cref{lem:bundle-to-cone}) to the grand bundle by its \emph{strict-preference set}
\[
U_{[N]}^{M_{\mathcal{F}}} := \bigl\{h \in \R^N : v_{[N]}(h) > \bar v_\ell(h) \text{ for every cell } \ell \ge 1,\ \text{ and } v_{[N]}(h) > 0\bigr\}, \qquad v_{[N]}(h) = \alpha\textstyle\sum_i h_i,
\]
the open cone of directions along which the grand bundle strictly beats every menu option and the outside option ($\bar v_\ell(h) = \mathbb{E}_{B \sim \sigma_\ell}[v_B(h)]$ is option $\ell$'s expected valuation, linear in $h$ by \eqref{eq:standing-v}); it widens as $\alpha$ scales the core-containing gradients.

\begin{definition}[Top-cell wedge]\label{def:topcell-wedge}
Let $\ell^*$ be a top-priced cell, $p_{\ell^*}^* = \bar p$. The mechanism $M_{\mathcal{F}}$ satisfies the \emph{top-cell wedge} condition if
\[
\lambda_D\bigl(V^*\bigr) > 0,
\qquad \text{where }
V^* := C_{\ell^*} \cap \operatorname{int} D \cap U_{[N]}^{M_{\mathcal{F}}} \cap \R^N_{>0}.
\]
\end{definition}

\noindent Read this in two steps. That $U_{[N]}^{M_{\mathcal{F}}} \cap \R^N_{>0}$ is nonempty is a \emph{weak free disposal} condition --- some interior buyer strictly prefers the grand bundle to everything on offer; it is weaker than pointwise free disposal (which in the core-peripheral form needs $\alpha \ge 1$) and already holds at $\alpha > 1/2$ in the two-good case of \Cref{sec:eg-grand}. The wedge strengthens this by placing such buyers inside the top-priced cell $C_{\ell^*}$ --- the segment from which an upward nudge extracts the most revenue and draws no cannibalization.

\begin{theorem}[Necessity of the grand bundle]\label{thm:grand-bundle-nec}
Suppose the restricted family $\mathcal{F}$ excludes $[N]$ and $M_{\mathcal{F}}$ is \emph{any} candidate mechanism on $\mathcal{F}$ with a finite menu satisfying the top-cell wedge condition (\Cref{def:topcell-wedge}). Then $M_{\mathcal{F}}$ is not optimal in the unrestricted problem.
\end{theorem}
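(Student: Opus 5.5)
The plan is a direct primal perturbation: exhibit a mechanism that adds the grand bundle to $M_{\mathcal{F}}$ and strictly raises revenue. This avoids any need for dual necessity. Add one new option, the grand bundle $[N]$ at price $p_\varepsilon := \bar p + \varepsilon$ (or a price calibrated on a thin slice of $V^*$), and keep all existing options and prices unchanged.

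The first step is to identify who switches. A type $x$ switches to the new option exactly when
\[
\alpha\textstyle\sum_i x_i - p_\varepsilon > \max\bigl\{0,\ \max_{\ell}\,[\bar v_\ell(x) - p^*_\ell]\bigr\}.
\]
Since valuations are linear, $\bar v_\ell(x)$ is evaluated at the type itself. On $V^* \subseteq C_{\ell^*} \cap U_{[N]}^{M_{\mathcal{F}}}$ the buyer currently takes $\ell^*$, and the grand bundle's value there strictly exceeds every option's value. However, strict preference in value alone is not enough: the price gap also matters. I would therefore use the top-priced structure.

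For a switcher from cell $\ell$, the revenue change is $p_\varepsilon - p^*_\ell \ge \varepsilon > 0$, because $\bar p \ge p^*_\ell$ and exclusion pays $0$. So every switcher, from whatever cell, strictly raises revenue. This is exactly why the wedge is placed in the \emph{top-priced} cell: raising the price above $\bar p$ guarantees there is no cannibalization loss anywhere.

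It remains to show the switching set has positive measure for small $\varepsilon$. Take $x \in V^*$, so $x$ is interior, lies in $C_{\ell^*}$, and has $x \in U_{[N]}^{M_{\mathcal{F}}}$ with $x > 0$. Then
\[
\alpha\textstyle\sum_i x_i - \bar p - \bigl(\bar v_\ell(x) - p^*_\ell\bigr) \;\ge\; \alpha\textstyle\sum_i x_i - \bar v_\ell(x) > 0 \quad \text{for all } \ell,
\]
using $p^*_\ell \le \bar p$. Likewise $\alpha\sum_i x_i - \bar p > 0$, since $x$ buys $\ell^*$ with nonnegative surplus and $\alpha\sum_i x_i > \bar v_{\ell^*}(x)$. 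These strict inequalities have continuous left-hand sides. By continuity of measure, the sets $V^*_\varepsilon := \{x \in V^* : \text{all gaps} > \varepsilon\}$ increase to $V^*$ as $\varepsilon \downarrow 0$, so $\lambda_D(V^*_\varepsilon) > 0$ for small $\varepsilon$. Because $f > 0$, the revenue gain is at least $\varepsilon \int_{V^*_\varepsilon} f > 0$.

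The new menu remains IC and IR since it is a menu with buyer choice; seller-favorable tie-breaking only helps. Strict improvement therefore contradicts optimality of $M_{\mathcal{F}}$ in the unrestricted problem.

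The main obstacle is checking that nobody outside $V^*$ switches in a revenue-losing way. This is handled by the uniform bound $p_\varepsilon - p^*_\ell \ge \varepsilon$ for every cell including exclusion, which makes the sign of the change pointwise nonnegative. The only delicacy is that ties at measure-zero boundaries are irrelevant.
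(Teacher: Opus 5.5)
Your construction is the paper's. You append $([N], \bar p + \varepsilon)$, gain $\varepsilon$ per buyer on a positive-measure part of the top-priced cell, and rule out cannibalization because $\bar p + \varepsilon$ exceeds every old price, exclusion included. Under seller-favorable tie-breaking this makes the revenue change pointwise nonnegative. Exhausting $V^*$ by the sets $V^*_\varepsilon$ is a fine substitute for the paper's witness ball $W \subseteq \operatorname{int} C_{\ell^*}$ with uniform margins $\eta, m$. Your pointwise argument also lets you skip the paper's null-tie-set bookkeeping.

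One displayed step is wrong, though. Since $p^*_\ell \le \bar p$,
\[
\alpha\textstyle\sum_i x_i - \bar p - \bigl(\bar v_\ell(x) - p^*_\ell\bigr) \;=\; \bigl(\alpha\textstyle\sum_i x_i - \bar v_\ell(x)\bigr) - \bigl(\bar p - p^*_\ell\bigr) \;\le\; \alpha\textstyle\sum_i x_i - \bar v_\ell(x),
\]
so your inequality runs the other way. The value advantage over a cheaper option $\ell$ that you get from $U_{[N]}^{M_{\mathcal{F}}}$ can be eaten by the price difference $\bar p - p^*_\ell$. At this step the top price works against you, not for you.

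What actually makes the gap positive is cell membership. Since $x \in C_{\ell^*}$, you have $\bar v_{\ell^*}(x) - \bar p \ge \bar v_\ell(x) - p^*_\ell$ for every $\ell$, so the gap is at least $\alpha\sum_i x_i - \bar v_{\ell^*}(x) > 0$. Only the comparison with $\ell^*$ needs a margin, which is the paper's $\eta$. Define $V^*_\varepsilon$ by the condition $\alpha\sum_i x_i - \bar v_{\ell^*}(x) > \varepsilon$. Every type there strictly prefers the new option to its current surplus $u_{M_{\mathcal{F}}}(x) = \bar v_{\ell^*}(x) - \bar p$, and it previously paid exactly $\bar p$.

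With this repair the proof is complete. The top-price anchor is then used only in the no-cannibalization step, where you applied it correctly.
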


\noindent I write $\alpha_*(C, \mathcal{F})$ for the infimum of the set of $\alpha > 0$ at which the top-cell wedge holds for every restricted-optimal mechanism on $\mathcal{F}$; \Cref{thm:grand-bundle-nec} forces the grand bundle at every $\alpha$ in that set (whether the set is an interval is not needed). In the two-good iid uniform case with separate sales, the wedge at the restricted optimum opens exactly as weak free disposal does --- types near the top corner strictly prefer the grand bundle once $\alpha > 1/2$ (\Cref{sec:eg-grand}) --- so $\alpha_* = 1/2$.

\paragraph{Proof sketch.} The proof constructs a profitable deviation from the surplus the existing menu leaves unextracted on the strict-preference set $U_{[N]}^{M_{\mathcal{F}}}$. The top-cell wedge (\Cref{def:topcell-wedge}) localizes this gap to a positive-measure subset $V^* \subseteq C_{\ell^*}$ of the top-priced cell, inside which a small ball $W$ with uniform preference and margin is extracted. Augmenting $M_{\mathcal{F}}$ with the grand bundle $([N], p_{\ell^*}^* + \delta)$: for $\delta$ below the margin every buyer in $W$ switches to it, contributing bulk revenue $\Theta(\delta)$; cannibalization from other cells is non-negative because the top-priced anchor means no other cell carries a higher price; the deviation is profitable. The full proof is in \Cref{app:necessity-grand}.\qed

\subsection{Finiteness of the upper threshold: necessity of inclusivity}\label{sec:main-incl}

\Cref{thm:core-bundle} took inclusivity as a hypothesis; this subsection shows the same condition --- no top face left in the exclusion cell --- is also necessary, and is the binding distributional condition for finiteness of $\alpha^*$. If a candidate menu excludes some top face, a singleton offer for that item profitably breaks in at every $\alpha > 0$. The pure-bundling necessity result is an immediate corollary.

\begin{definition}[Excluded top face]\label{def:excluded}
Let $M_{\mathcal{F}}$ be a candidate mechanism on a restricted family $\mathcal{F} \subseteq \mathcal{P}$, with exclusion cell $C_0$ (the no-purchase region). The top face $F_i = \{x \in D : x_i = \bar v_i\}$ is \emph{excluded} by $M_{\mathcal{F}}$ if a positive $(N-1)$-dimensional measure of it lies in the exclusion cell,
\[
\lambda_{N-1}\bigl(F_i \cap C_0\bigr) > 0 .
\]
\end{definition}

\noindent An excluded top face is a positive-measure failure of inclusivity (\Cref{def:incl}): a near-top-$i$ population is pushed to no purchase despite valuing item $i$ near its maximum. Under the seller-favorable tie convention the exclusion cell is relatively open, so excluded top faces and inclusivity failures coincide. For pure bundling, $C_0 = \{\sum_j x_j < \hat s_1\}$, so $F_i$ is excluded exactly when $\bar v_i < \hat s_1$.

\begin{theorem}[Necessity of inclusivity]\label{thm:incl-necessity}
Let $N \ge 2$, let $M_{\mathcal{F}}$ be any candidate mechanism on a restricted family $\mathcal{F} \subseteq \mathcal{P}$ with a finite menu, and let $f \in C^2(D)$ be strictly positive on $D$. If some top face $F_i$ is excluded (\Cref{def:excluded}), then $M_{\mathcal{F}}$ is not optimal in the unrestricted problem, at any $\alpha > 0$.
\end{theorem}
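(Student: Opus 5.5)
The plan is a direct profitable-deviation argument, parallel to the proof sketch of \Cref{thm:grand-bundle-nec}: augment $M_{\mathcal{F}}$ with a singleton offer for item $i$ (bundle $\{i\}$, or the cheapest-to-value bundle containing $i$ if $\{i\}$ is not in $\mathcal{F}$ — the deviation is in the unrestricted problem on $\mathcal{P}$, so $\{i\}$ is always available) at price $p = v_{\{i\}}(\bar v_i e_i) - \delta = \kappa_i(\bar v_i - \delta')$. Here $\kappa_i = \alpha$ if $C = \{i\}$ and $1$ otherwise. The claim is that for small $\delta$ this raises revenue at every $\alpha > 0$.

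\textbf{Step 1 (gain).} Since the exclusion cell is relatively open under the seller-favorable tie convention and $\lambda_{N-1}(F_i \cap C_0) > 0$, pick a relatively open patch $G \subseteq F_i \cap C_0$ on which every menu option has strictly negative surplus. Continuity and compactness give a uniform margin $\eta > 0$ there. The slab $W_\delta := \{x : x|_{-i} \in G|_{-i},\ x_i \in (\bar v_i - \delta', \bar v_i]\}$ then lies in $C_0$ for small $\delta'$. Every type in $W_\delta$ with $x_i > \bar v_i - \delta'$ strictly gains from the singleton and buys it, since the old options stay negative. Because $f$ is positive and continuous, $\lambda_D(W_\delta) = \Theta(\delta')$, so the new revenue from previously excluded types is $\kappa_i \bar v_i \cdot \Theta(\delta') - O(\delta'^2)$, which is first order and positive.

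\textbf{Step 2 (cannibalization).} Types already purchasing might switch to the singleton. A type in cell $\ell$ switches only if $\kappa_i x_i - p \ge \bar v_\ell(x) - p_\ell$. Since $x_i \le \bar v_i$, this forces $\bar v_\ell(x) - p_\ell \le \kappa_i\delta'$. So switchers are within $O(\delta')$ of their cell's indifference-to-exclusion boundary, and within $O(\delta')$ of $F_i$ by the constraint $\kappa_i x_i \ge p$. Their set therefore has measure $O(\delta'^2)$: it lies in the intersection of two $O(\delta')$-thin slabs that are transverse near $F_i$. The exception is where the cell boundary $\{\bar v_\ell(x) = p_\ell\}$ is parallel to $F_i$, i.e.\ option $\ell$'s value depends only on $x_i$ along a face. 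I would handle that degenerate case separately. If such an option sells item $i$ alone at a price below $\bar v_i$, then $F_i$ would not be in $C_0$, contradicting the assumption. Hence the price loss per switcher is bounded, the total loss is $O(\delta'^2)$, and it is dominated by the $\Theta(\delta')$ gain. The new menu is IC and IR by construction, since a menu is automatically IC, so revenue strictly increases and $M_{\mathcal{F}}$ is not optimal.

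\textbf{Main obstacle.} I expect the hard part to be the measure estimate in Step 2: showing that the switching set has measure $o(\delta')$ uniformly. Two features cause trouble here: a finite menu of lotteries with possibly many cells meeting near $F_i$, and the corner where $F_i$ meets other top faces, which $N \ge 2$ and the box geometry make possible. I would first localize everything to a small neighborhood of an interior point of the patch $G$, restricting the new offer's effect by the geometry $\kappa_i x_i \ge p$. That confines switchers to a slab of width $O(\delta')$ along $F_i$. Within that slab, only options whose net surplus is within $\kappa_i\delta'$ of zero can lose buyers. Near an interior point of $G$ all options are negative by margin $\eta$, so there are no switchers there. Away from $G$, a crude bound — the slab has measure $O(\delta')$ and the loss per switcher is $O(\delta')$ once prices are compared — should suffice.
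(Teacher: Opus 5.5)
Your proposal is correct and follows the paper's proof. It uses the same singleton deviation at price $V_i - \varepsilon$ and the same $\Theta(\varepsilon)$ gain on the slab over the excluded part of $F_i$; the paper gets that slab from monotonicity of $u_{M_{\mathcal{F}}}$ in $x_i$ rather than from your margin $\eta$, an inessential difference. It also uses the same $O(\varepsilon^2)$ loss bound from intersecting two transverse slabs, and the same disposal of the parallel case $\sigma_\ell = \delta_{\{i\}}$: an excluded $F_i$ forces $p_\ell \ge V_i$, so that cell is $\lambda_D$-null.

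Your closing worry is unfounded. Fubini in a coordinate $j_0 \ne i$ with $(\nabla \bar v_\ell)_{j_0} > 0$ gives the $O(\varepsilon^2)$ bound on all of $D$, corners included. Do not fall back on ``loss per switcher is $O(\delta')$'': it fails for buyers leaving an option priced well above $V_i$, for instance a grand bundle at $1.5$ with $\bar v_i = 1$.
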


\paragraph{Proof sketch.} Fix $\alpha > 0$ and price item $i$ just below its standalone value $V_i := \alpha^{\mathbf 1\{C \subseteq \{i\}\}}\bar v_i = \max_x v_{\{i\}}(x)$: add the singleton $(\{i\}, r_\varepsilon)$ with $r_\varepsilon := V_i - \varepsilon$. A buyer switches iff $v_{\{i\}}(x) - r_\varepsilon > u_{M_{\mathcal{F}}}(x)$, i.e.\ $\alpha^{\mathbf 1\{C\subseteq\{i\}\}} x_i > V_i - \varepsilon$, so every switcher has $x_i$ within $\varepsilon' := \varepsilon/\alpha^{\mathbf 1\{C\subseteq\{i\}\}}$ of the top --- a slab of width $\varepsilon'$. Over the positive-measure base $F_i \cap C_0$ the near-top types are excluded ($u_{M_{\mathcal{F}}} = 0$), so each switches and pays $r_\varepsilon \approx V_i$: the gain is $\Theta(V_i \varepsilon') = \Theta(\varepsilon)$. The only offsetting loss is from former buyers of some option $\ell$ who switch; such a type lies within $\varepsilon'$ of the top in $x_i$ \emph{and} within $\varepsilon$ of indifference for option $\ell$, i.e.\ $\bar v_\ell(x) \in [p_\ell, p_\ell+\varepsilon]$ --- two transverse constraints cutting an $O(\varepsilon^2)$ slab, unless $\ell$ is the deterministic singleton $\{i\}$, which an excluded $F_i$ forces to price $V_i$ (so it sells on a null set, no loss). So the deviation is profitable for small $\varepsilon$; \Cref{app:necessity-incl} gives the bookkeeping.\qed

\subsection{From geometry to algebra: why core-peripheral?}\label{sec:main-algebra}

\Cref{thm:core-bundle} certifies the core-peripheral menu --- supported on $\mathcal{F}_C = \{B : C \subseteq B\}$ --- as optimal above $\alpha^*$. Why this specific structure? The framework reveals an asymmetry across bundles, organized by size: \emph{large bundles are hard to exclude from a certifiable menu even when their cones are narrow; small bundles are easy to exclude even when their cones are wide.} The grand bundle $[N]$ is the extreme case: by \Cref{thm:grand-bundle-nec}, once complementarity makes it a needed cover --- the top-cell wedge --- it cannot be dropped, its protection structural rather than a matter of cone width. (Inclusivity, \Cref{thm:incl-necessity}, is a separate force: it requires every item's top face to be served but does not by itself pull in the grand bundle, since separate sales can leave every top face served.) A singleton $\{i\}$, in contrast, has a wide cone but is replaced for free whenever some larger premium bundle covers item $i$: cone-wideness is geometric slack the menu does not need.

\paragraph{Algebraic mirror: closure under supersets.} The proof of \Cref{thm:core-bundle} scales the gradients of the core-containing --- \emph{premium} --- bundles by $\alpha$; within-premium cones are $\alpha$-invariant (premium gradients scale together) while between-premium cones widen. This invariance requires the premium set to be \emph{closed under supersets}: a non-premium superset of a premium bundle breaks the cancellation. The natural such family is $\mathcal{F}_C$, and every non-empty superset-closed family must contain $[N]$: the set-theoretic universality of the grand bundle is exactly its algebraic universality.

\paragraph{Open questions.} Two natural extensions preserve the asymmetric structure but enlarge the premium family. \emph{Coexisting complementarity systems:} multiple disjoint cores, each with its own premium (Microsoft Office and Azure as parallel ecosystems). \emph{Competing vertical hierarchies:} multiple incomparable maximal chains (cable, satellite, streaming as parallel ladders), each admitting a core-peripheral form. Both are conjectured core-peripheral within their own systems and reduce to \Cref{thm:core-bundle} in the single-system case, but require multi-parameter analogs of core regularity.

\section{Applications}\label{sec:apps}

This section applies \Cref{thm:core-bundle} to settings where its hypotheses can be discharged from primitives. The standing core-peripheral form of \Cref{sec:main} --- \eqref{eq:standing-v}, a box domain, and a density with strictly positive radial score --- remains in force throughout. Pure bundling needs no distributional assumption beyond Myersonian regularity and inclusivity, with the iid Beta family making its conditions explicit in every dimension (\Cref{sec:apps-pure}); an explicit two-tier family with one peripheral good is solved under log-concave core marginals (\Cref{sec:apps-twotier}); and the two thresholds map observed menus across market segments (\Cref{sec:main-segments}).

\subsection{Pure bundling}\label{sec:apps-pure}

At $C = [N]$ the menu is a single grand-bundle price $p^* = \alpha \hat s$ --- the restricted optimum on $\mathcal{F}_{[N]} = \{[N]\}$, with $\alpha$-independent crossing point $\hat s$. Call the bundle-value distribution of $\sum_i x_i$ \emph{strictly Myersonian regular} if its virtual value is strictly positive above the crossing point, $H(r) := r\,\gamma(r)/\bar F(r) > 1$ on the active range ($\gamma, \bar F$ the bundle value's density and survivor); classical regularity together with the price first-order condition implies it.

\begin{theorem}[Pure bundling]\label{thm:pure-bundle}
Suppose the bundle-value distribution is strictly Myersonian regular. Then pure bundling at $p^*$ is optimal in the unrestricted problem for all sufficiently large $\alpha$ if $f$ is inclusive ($\hat s < \min_i \bar v_i$), and is strictly suboptimal at every $\alpha \ge 1$ if $f$ is non-inclusive ($\min_i \bar v_i < \hat s$).
\end{theorem}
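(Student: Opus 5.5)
The proof splits into the two directions of the statement, each reducing to a theorem already established for the general core-peripheral case with $C = [N]$. Here $\mathcal{F}_{[N]} = \{[N]\}$ and the only option is the grand bundle, so $w = \mathbf 1$ and $s(x) = \sum_i x_i$. The peripheral set $P$ is empty, so the weight matrix $\mathsf{D}$ and the peripheral part of the option score vanish, and $Q(x) = \sum_i x_i$ is one-dimensional.

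\textbf{Sufficiency.} I verify the three hypotheses of \Cref{thm:core-bundle} at the restricted optimum $p^* = \alpha\hat s$.

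\emph{Restricted optimality.} This is immediate: $p^*$ is by definition the revenue-maximizing grand-bundle price. Through \Cref{lem:mb-foc} it gives mass balance on the selling cell $\{\sum_i x_i \ge \hat s\}$.

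\emph{Inclusivity.} Since $C_0 = \{\sum_j x_j < \hat s\}$ and every point of $F_j$ has $\sum_i x_i \ge \bar v_j$, the assumption $\hat s < \min_i \bar v_i$ gives $F_j \cap C_0 = \varnothing$ for every $j$.

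\emph{Strict core regularity.} I must show that the law of $\mu^-$ restricted to the selling cell, pushed forward by the family of unbundling scores
\[
\bigl(\varepsilon \textstyle\sum_{i\in T} x_i + (1-\varepsilon)\sum_i x_i\bigr)_{\varnothing\ne T\subseteq[N]},
\]
is FOSD-dominated by the corresponding law of $\mu^+$, for some $\varepsilon>0$. At $\varepsilon = 0$ all coordinates coincide with the bundle value $r = \sum_i x_i$, and the dominance reduces to a one-dimensional comparison of the tails. Write $\gamma$ and $\bar F$ for the density and survivor of $r$. By the divergence theorem applied to $\{r \ge t\}$, the mass $\mu^+(\{r\ge t\}) - \mu^-(\{r \ge t\})$ equals $t\,\gamma(t) - \bar F(t)$ up to sign conventions. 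Hence $H(t) > 1$ on $[\hat s,\bar r]$ is exactly the statement that the rewarding mass dominates the absorbing mass on every upper set, which is the weak form.

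For the strict form I use a perturbation argument. The inequality $H > 1$ holds strictly on the compact active range, and the map $\varepsilon \mapsto$ (law under the perturbed scores) is continuous. The perturbed scores differ from $r$ by at most $\varepsilon\,\max_i \bar v_i$, so a uniform positive gap in the tail comparison should survive small $\varepsilon$. I expect this to be the main obstacle. Multi-dimensional FOSD is not a tail-by-tail condition, so continuity in $\varepsilon$ does not transfer directly. I would instead build the coupling explicitly: take the monotone (quantile) coupling in $r$, which has slack by the strict inequality, and check directly the retention inequality $s(h) \ge \varepsilon[s(h)+R_C(h)]$ that \Cref{os:align-transfer} requires. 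The key input is that the quantile coupling moves boundary mass at level $r$ to interior mass at a level bounded away below $r$, except near $\hat s$, and mass balance at $\hat s$ handles that end. The delicate point is exactly the endpoint $\hat s$, where the slack degenerates. Strict inclusivity ($\hat s < \min_i\bar v_i$) is what keeps the top faces strictly above the crossing line, so that boundary sources near $\hat s$ carry no positive mass. With these hypotheses in hand, \Cref{thm:core-bundle} yields the finite threshold $\alpha^* = 1/\varepsilon$.

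\textbf{Necessity.} If $\min_i \bar v_i < \hat s$, choose $i$ with $\bar v_i < \hat s$. Then $F_i \cap C_0 = \{x \in F_i : \sum_{j\ne i} x_j < \hat s - \bar v_i\}$, which has positive $(N-1)$-dimensional measure because $f > 0$ on the box and $\hat s - \bar v_i > 0$. So $F_i$ is excluded in the sense of \Cref{def:excluded}, and \Cref{thm:incl-necessity} shows the pure-bundling menu is not optimal at any $\alpha > 0$, in particular at every $\alpha \ge 1$.

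To get strict suboptimality, I note that the deviation constructed in that proof strictly raises revenue, with gain $\Theta(\varepsilon)$ against a loss of $O(\varepsilon^2)$, so pure bundling is strictly beaten. This requires $N \ge 2$, which holds whenever an excluded face can exist.
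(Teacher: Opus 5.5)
Your reductions match the paper's: restricted optimality gives mass balance through \Cref{lem:mb-foc}, inclusivity is the face check you give, the $\varepsilon=0$ order is the tail comparison, and the non-inclusive direction is exactly \Cref{thm:incl-necessity}. The gap is in strict core regularity, which is the step with real content, and your sketch puts the degeneracy at the wrong end.

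The tail slack $\Delta(r)=\bar F(r)[H(r)-1]=\mu^+_1(\{t\ge r\})-\mu^-_1(\{t\ge r\})$ is not uniformly positive. It is $0$ at $\hat s$, where $H(\hat s)=1$ by mass balance, so your ``$H>1$ on $[\hat s,\bar r]$'' fails at that endpoint. It also tends to $0$ as $r\uparrow\bar t_C:=\sum_i\bar v_i$, because both tails vanish at the top corner.

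The bottom end is harmless, as you say. All face mass lies at $t\ge\min_i\bar v_i>\hat s$, so sinks near $\hat s$ are matched to sources a fixed distance higher.

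The top end is where your argument breaks. A sink at level $\bar t_C-s$ can be moved up by at most $s$, so no coupling displaces mass there by an amount bounded away from zero. An additive $O(\varepsilon)$ perturbation of the scores also exceeds the tail slack available there. Your claim that the quantile coupling moves mass a bounded-away distance ``except near $\hat s$'' is therefore false near $\bar t_C$. Nothing in your sketch, neither mass balance nor inclusivity, addresses that region.

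What closes the step is a multiplicative level relation together with an inequality you did not use. For a source $y$ and any sink $x$ in the box,
\[
R_C(y-x)=\sum_i(x_i-y_i)^+\le\sum_i(\bar v_i-y_i)=\bar t_C-t(y).
\]
Suppose $t(y)\ge(1-\lambda)t(x)+\lambda\bar t_C$, and write $e:=\bar t_C-t(x)$. Then $s(h)\ge\lambda e$ and $R_C(h)\le(1-\lambda)e$. This gives the retention clause at slack $\varepsilon=\lambda$ for \emph{any} type-space lift of the level coupling.

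A coupling concentrated on this relation exists if and only if the shifted-tail inequality
\[
\mu^-_1(\{t\ge r\})\;\le\;\mu^+_1\bigl(\{t\ge(1-\lambda)r+\lambda\bar t_C\}\bigr)\qquad\text{for all } r
\]
holds. Your quantile coupling in $t$ and the paper's relational Strassen argument both land here. This inequality is what must be proved, in three bands (\Cref{os:sc-shift}):
\begin{itemize}
\item near $\hat s$, by strict inclusivity;
\item in the middle, by the strict gap together with continuity of the face tail;
\item near the corner, by a local dimension count.
\end{itemize}
For the corner band, take $r=\bar t_C-s$. The sink side is $O(s^N)$, since $f$ and $\Phi f$ are bounded near the corner. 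The source side is at least of order $s^{N-1}$, since sources live on $(N-1)$-dimensional faces where $f$ is bounded below.

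That corner estimate is the missing idea. Without it your proposal yields no positive unbundling slack, and hence no finite $\alpha^*$.
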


\noindent The two directions give inclusivity as a necessary and sufficient condition for pure bundling, with the knife edge $\hat s = \min_i \bar v_i$ unclassified. \emph{Sufficiency} is the $C = [N]$ case of \Cref{thm:core-bundle}: single-cell mass balance is the price first-order condition (\Cref{lem:mb-foc}), and strict Myersonian regularity supplies strict core regularity through the single-cell certificate (\Cref{os:apps-pure-prop}); conversely even the weak form of core regularity forces the weak classical form $H \ge 1$ (\Cref{os:apps-sandwich}), so the hypothesis is pinched between the weak and strict classical conditions. \emph{Necessity} is the pure-bundling case of \Cref{thm:incl-necessity}, strengthening the additive observation of \citet{ManelliVincent2007, Pavlov2011}: the force is geometric, not pricing. Complementarity widens the bundle cone but never moves the crossing point, so a buyer with $\bar v_i < \hat s$ pays nothing at every $\alpha$ and an item-$i$ singleton captures the leakage.

\paragraph{The iid Beta family.} The conditions of \Cref{thm:pure-bundle} are satisfiable in every dimension, with closed forms. For the iid $\mathrm{Beta}(\beta,1)$ family $f_\beta(x) = \beta^N \prod_i x_i^{\beta-1}$ on $[0,1]^N$, the bundle value $\sum_i x_i$ has the explicit law $G(t) = t^{N\beta}/c_N(\beta)$ on $[0,1]$, with $c_N(\beta) = \Gamma(1+N\beta)/\Gamma(1+\beta)^N$. Writing $A_N(\beta) := (1+N\beta)/c_N(\beta)$, the interior first-order condition for the pure-bundling price yields the candidate crossing point
\[
\hat s_{N,\beta} = A_N(\beta)^{-1/(N\beta)} ,
\]
which lies strictly below $1$ exactly when $A_N(\beta) > 1$; the strict Myersonian regularity below places the global revenue maximizer on $[0,1]$, so this root is the crossing point and the menu is then inclusive. A small exponent pushes mass toward zero on each coordinate (the \emph{bottom-heavy} regime that inclusivity asks for), and for $\beta$ small enough the bundle-value distribution is strictly Myersonian regular in \emph{every} dimension, so pure bundling is optimal for all large $\alpha$. The dimension-adapted choice $\beta = 1/N$ keeps the per-coordinate tail mass scale-invariant and gives a benchmark whose crossing point $\hat s_{N,1/N} = 1/\bigl(2\,\Gamma(1+1/N)^N\bigr)$ rises monotonically toward $e^{\gamma_E}/2 \approx 0.89$. The formal analysis is deferred to a supplementary note.\footnote{The two-good uniform benchmark is the $N=2$, $\beta=1$ member; its explicit threshold $\alpha^*_{\mathrm{unif}} \approx 1.253$ is constructed in \Cref{os:dual-transport}.}

\subsection{Two-tier menus with one peripheral good}\label{sec:apps-twotier}

The second application is the simplest menu in which the core-peripheral structure is genuinely visible: a \emph{two-tier menu} offering the bare core $C$ and the grand bundle $C \cup \{p\}$ adding a single peripheral good. Unlike pure bundling, the certificate is built in two cells at once and the peripheral good forces transport \emph{across} peripheral levels. Normalize to $D = [0,1]^N$; the core is $C$ with $|C| = N - 1 \ge 1$, $P = \{p\}$, types $x = (\omega, z)$ with $z := x_p$ and core sum $t(\omega) := \sum_{i \in C} \omega_i$, and $H_C(q) := q\, g_C(q)/\bar G_C(q)$ is the core sum's virtual-value statistic under $f_C$ (\Cref{os:single-cell}).

The \emph{two-tier setting} fixes three things. \textnormal{(S1)} The density is a product $f = f_C \otimes f_p$ with $f_C = \bigotimes_{i \in C} f_i$, each $f_i$ log-concave, $C^1$, and positive on $(0,1]$ --- so $f_C$ is an admissible core density (\Cref{def:adm-core}) --- and the core radial score $\Phi_C(\omega) := |C| + \omega \cdot \nabla \log f_C(\omega)$ is nonnegative. \textnormal{(S2)} The peripheral marginal is $f_p(z) = \beta z^{\beta - 1}$ for some $\beta > 0$. \textnormal{(S3)} The menu $M = \{(C, \alpha \hat s_C),\ (C \cup \{p\}, \alpha \hat s_G)\}$ has crossing points $0 < \hat s_C < \hat s_G < 1$ and mass-balanced selling cells; write $d := \hat s_G - \hat s_C$ and $\tau(z) := (\hat s_G - z)^+$.\footnote{For $\beta < 1$, $f$ falls outside the closed-box standing form but is an admissible core density to which the sufficiency chain extends (\Cref{os:sc-extension}).}

Normalizing utilities by $\alpha$, the bare-core tier wins on $C_1 = \{t \ge \hat s_C,\ z \le d\}$ and the grand tier on $C_2 = \{z \ge d,\ t \ge \tau(z)\}$: the upgrade threshold depends on $z$ alone (buy the add-on when $z \ge d$), while the participation boundary $t = \tau(z)$ falls one-for-one in $z$ to zero at $z = \hat s_G$. \emph{Inclusivity is free} from the (S3) crossing points: every core top face has $t \ge 1 > \hat s_C \ge \tau(z)$ and the peripheral face $\{z = 1\}$ sits in $C_2$ since $\hat s_G < 1$ --- both instances of the geometric (face-meets-cell) form of \Cref{def:incl}. And mass balance on the bare-core cell is equivalent to
\[
H_C(\hat s_C) \;=\; 1 + \beta
\]
(\Cref{os:tt-mb}), the two-tier analogue of the pure-bundling first-order condition $H(\hat s) = 1$ of \Cref{os:apps-pure-prop}; the constant rises from $1$ to $1 + \beta$ because the cell's interior mass now carries the peripheral block's share of the radial score, $\Phi = \Phi_C + 1 + \beta$.

\begin{theorem}[Optimality of the two-tier menu]\label{thm:twotier}
Under the two-tier setting \textnormal{(S1)--(S3)}, the pair $(f, M)$ is strictly core-regular, with the dominance holding at slack $1 - \kappa^*$ for a level $\kappa^* < 1$ constructed in the proof; hence the two-tier menu is optimal in the unrestricted problem for all sufficiently large $\alpha$.
\end{theorem}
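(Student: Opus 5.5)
The plan is to reduce \Cref{thm:twotier} to \Cref{thm:core-bundle}. I would verify its three hypotheses for the menu $M$ of (S3) and identify the slack $\varepsilon = 1-\kappa^*$, so that $\alpha^*(C,M) = 1/(1-\kappa^*)$.

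\textbf{Restricted optimality and inclusivity.} Restricted optimality on $\mathcal{F}_C = \{C, C\cup\{p\}\}$ is handled by \Cref{lem:mb-foc}: the crossing points are mass-balanced by (S3), with mass balance on the bare-core cell equivalent to $H_C(\hat s_C) = 1+\beta$ (\Cref{os:tt-mb}). Log-concavity of the $f_i$ makes $H_C$ increasing, which upgrades this first-order condition to a global restricted optimum. Inclusivity was already checked in the text from $\hat s_C<\hat s_G<1$. The substance is therefore strict core regularity, cell by cell.

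\textbf{Cell $C_1$.} Here $w = \mathbf 1_C$, so $s_1 = t$. The weight matrix $\mathsf D_1$ has upward weight $0$ and downward weight $1$ on $z$, so the option score reads $(t,\,-z)$. On this cell the peripheral coordinate lives on $[0,d]$ and no top face in $z$ is present. The rewarding mass sits on the core top faces $\{\omega_i=1\}$, where $t\ge 1$, and I must couple it to interior mass with $h_z\le 0$.

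Because $f$ is a product, I plan to build the coupling fibrewise in $z$. For each $z$, the core sum under $\mu^+$ lies in $[1,|C|]$, while under $\mu^-$ it has density $\Phi f$ on $[\hat s_C,\dots]$. The unbundling score $\varepsilon\sum_{T}\omega_i+(1-\varepsilon)t$ requires a dominance that is uniform over core subsets $T$. I would handle this as follows. On the face $\{\omega_i=1\}$ every subbundle containing $i$ has value at least $1$. For subbundles avoiding $i$, I would use log-concavity together with $\Phi_C\ge 0$ to show that the absorbing law of $\sum_{T}\omega_j$ is dominated after mixing with weight $(1-\varepsilon)$ on $t$. This should hold because the gap $1-\hat s_C>0$ provides room: any $\kappa$ with the FOSD of $\kappa$-mixtures failing only at $\kappa=1$ works.

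\textbf{Cell $C_2$ and the main obstacle.} Here $w=\mathbf 1$, $s_2 = t+z$, and $\mathsf D_2$ has upward weight $1$ on $z$, giving score $(t+z,\,z)$. The rewarding mass includes the peripheral face $\{z=1\}$, whose mass must travel across peripheral levels down toward the kinked participation boundary $t=\tau(z)$ and the upgrade boundary $z=d$. I expect this to be the hardest step. The FOSD comparison in $(t+z, z)$ is two-dimensional, and the $\beta z^{\beta-1}$ marginal together with the radial score $\Phi=\Phi_C+1+\beta$ must be checked in both coordinates jointly.

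My first attempt is a Strassen-type argument via upper sets. I would verify the inequality $\mu^-(U\cap C_2)\le\mu^+(U\cap C_2)$ for the increasing sets $U$ generated by thresholds $\{s\ge a,\ z\ge b\}$. I would reduce each such set to a one-dimensional Myerson inequality in $t$ at fixed $z$, again using monotonicity of $H_C$ and the explicit power law in $z$. Then I would add the unbundling perturbation and define $\kappa^*$ as the supremum over $(a,b,T)$ of the ratio of the two sides. Continuity and compactness, together with strict inequality away from the mass-balanced total (which follows from strict monotonicity of $H_C$), should give $\kappa^*<1$. Finally, \Cref{thm:core-bundle} yields optimality for $\alpha\ge 1/(1-\kappa^*)$.
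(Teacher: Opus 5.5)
Your reduction to \Cref{thm:core-bundle} (verify mass balance, inclusivity and strict core regularity, then take $\alpha\ge 1/(1-\kappa^*)$) matches the paper's architecture. The gap is in the substance: the method you propose cannot establish strict core regularity.

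\emph{Verifying the dominance.} The measures $\nu^\pm_{\ell,\varepsilon}$ live in a score space of dimension $(2^{|C|}-1)+2$, and $\preceq_{\mathrm{FOSD}}$ there means comparison on \emph{every} up-set. On $C_1$ you compare the law of each unbundling score $q^{\varepsilon}_{\ell,T}$ separately. On $C_2$ you test only orthants $\{s\ge a,\ z\ge b\}$. These give marginal or upper-orthant comparisons, which do not imply the joint order in dimension $\ge 2$. A supremum over $(a,b,T)$ of a mass ratio also does not define an unbundling slack, since $\varepsilon$ enters through the score map. Moreover, the underlying tail inequalities are tight: the gap $\Delta_q(r)=\bar F(r)[H(r)-H(q)]$ vanishes at the entry level $r=q$, and both tails vanish at the top corner. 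So continuity and compactness give no uniform margin.

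The paper never tests up-sets of the score space. It builds couplings of $(\mu^+_\ell,\mu^-_\ell)$ concentrated on the unbundling relation in budget form, and uses the coupling-implies-dominance direction of \Cref{os:align-transfer}. On the core faces this is the single-cell certificate (\Cref{os:sc-cert}):
\begin{itemize}
\item A shifted-tail Hall inequality $I_q(r)\le B(T_\lambda(r))$ is proved band by band: a low band where the face tail is constant, a middle band with a positive gap minimum, and a corner band with an $O(s)$ volume ratio.
\item This yields a coupling with $t(y)\ge T_\lambda(t(x))$.
\item Because the source lies on a top face, $\sum_i h_i^-\le \bar t_C-t(y)\le(1-\lambda)(\bar t_C-t(x))$. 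Every displacement therefore lies in $W_{1-\lambda}$, and all unbundling scores rise at once.
\end{itemize}
On the grand cell the entry level $\tau(z)$ moves with $z$. You therefore also need $\kappa_{\mathrm{pb}}$ to be uniform over $q\in[0,\hat s_C]$ (\Cref{os:sc-unif}), plus a measurable selection of the slice couplings.

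\emph{The peripheral face.} This is the more serious gap. Its sources sit at $z=1$. Each slice $z\in(d,1)$ of $C_2$ carries sinks exceeding its own core-face sources by $f_p(z)\,\varphi(z)\,\bar G_C(\tau(z))>0$, where $\varphi(z)=1+\beta-H_C(\tau(z))$. So no family of fixed-$z$ Myerson inequalities can close the books: mass must cross peripheral levels, and you give no mechanism for this. The paper uses three ingredients you are missing.
\begin{enumerate}
\item[(a)] The sink partition $\mu^-_2=(\Phi_C+H_C(\tau(z)))f+\varphi(z)f$ (\Cref{os:tt-partition}). The first part balances the core faces slice by slice. The residual has the same conditional law $c_t$ of $\omega$ given $(t,z)$ as the peripheral face, so stage G collapses to a two-dimensional problem in $(t,z)$.
\item[(b)] The sheared order $t'\le t+\kappa(z-z')$. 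Its Hall condition holds for $\kappa\ge\hat\kappa=\hat s_C/(1-d)$ by IFR superadditivity, $\bar G_C(a+b)\le\bar G_C(a)\,\bar G_C(b)$ (\Cref{os:tt-ifr,os:tt-hall}). This, not strict monotonicity of $H_C$, is where log-concavity and $\hat s_G<1$ enter.
\item[(c)] The Efron lift (\Cref{os:tt-lift}). FOSD-monotonicity of the fiber laws $c_t$ makes all core coordinates move in one direction. The core dip then equals $t'-t\le\kappa(1-z')$, which is the budget inequality of $K_{2,\kappa}$.
\end{enumerate}

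\emph{Two smaller points.}
\begin{itemize}
\item Mass balance plus monotone $H_C$ does not yield restricted optimality. \Cref{lem:mb-foc} runs in the other direction, and the restricted problem admits arbitrary lottery menus on $\mathcal{F}_C$. You do not need it, however: the proof of \Cref{thm:core-bundle} consumes only cell mass balance.
\item For $\beta<1$ the density leaves the standing form, so the sufficiency chain must be invoked through \Cref{os:sc-extension}.
\end{itemize}
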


Intuitively, the certificate splits each cell's transport in two. \emph{Stage W} works at frozen peripheral coordinate: in each slice the core faces are carried onto the slice's interior share by the single-cell couplings (\Cref{os:sc-cert}) at entry level $q = \hat s_C$ (bare-core cell) or $q = \tau(z)$ (grand cell) --- pure bundling, slice by slice --- at a uniform budget level $\kappa_W < 1$ (\Cref{os:sc-unif}). \emph{Stage G} serves the peripheral face by sliding down the peripheral coordinate, the core sum dipping by at most a $\kappa$-fraction per unit of descent --- the rate the grand cell's budget inequality permits; the binding slice $z = d$ fixes the closed form $\hat\kappa = \hat s_C/(1-d)$, which rises to $1$ as the grand slack $1 - \hat s_G$ vanishes. The proof and supporting remarks are in \Cref{os:apps-twotier}.

\subsection{Three regimes in observed markets}\label{sec:main-segments}

The two thresholds traced in \Cref{sec:main-core,sec:main-grand} produce a comparative-statics arc --- separate sales below $\alpha_*$, a mixed-bundling intermediate regime, core-peripheral bundling above $\alpha^*$ --- observable where the same goods are sold to segments differing in their integration premium, the complementarity premium $\alpha$ of \eqref{eq:standing-v}.

AI providers maintain two parallel price lists for the same models. The consumer segment --- individuals using a chat interface for everyday tasks at a flat monthly fee --- receives a tiered, bundled mechanism, the ChatGPT Plus / Pro ladder: the value of having both models inside one subscription comes from the unified conversational interface, the shared memory, and the multi-model workflow the provider performs automatically, so the integration premium is large. The developer segment --- engineers building applications, agents, and back-end tools that call the models programmatically --- receives an unbundled per-token price list, each model priced separately by usage: routing logic, context management, and model substitution live in the developer's own code, so the integration premium is small.\footnote{Compare \url{https://openai.com/chatgpt/pricing} with \url{https://openai.com/api/pricing/}; Anthropic's consumer and API channels show the same split.}

Enterprise software shows the same segmentation at industry scale. Large traditional firms --- banks, retailers, manufacturers, with limited internal engineering capacity --- typically license integrated SAP, Oracle, or Salesforce suites at the bundled list price. Tech firms --- Meta, Stripe, Netflix, with deep internal engineering capacity --- typically assemble a custom stack of best-of-breed components instead. Between the two extremes, mid-sized enterprises license a core platform together with selected optional modules: an integrated suite for the workflows tightest with the platform's data, separately priced add-ons for the rest.

The three segments map onto the three regimes. Per-token API access and best-of-breed assembly sit below $\alpha_*$, where the integration premium is too small to force the grand bundle into the menu --- the bundle's cone covers no required direction that some à la carte option cannot already cover. Mid-sized enterprise licensing sits in the intermediate $(\alpha_*, \alpha^*)$ band, where the grand bundle covers some required directions but core-only cones miss others, so both the bundle and à la carte options remain in the menu. Consumer AI tiers and integrated enterprise suites sit above $\alpha^*$, where the core-containing cones jointly cover every required direction and à la carte alternatives are squeezed out. The same product, faced with three different integration premia, supports three qualitatively different mechanisms.

\section{Conclusion}\label{sec:conclusion}

This paper develops a bundle-to-cone framework for optimal screening under combinatorial preferences: each bundle carries a cone of covered directions, the differential virtual value identifies the required directions, and optimality is certified by matching required to covered, cell by cell, with complementarity widening the cones of core-containing bundles. Applied to a scalar complementarity premium $\alpha$ multiplying every bundle that contains a fixed core, the framework identifies two thresholds, each a cover-side event in $\alpha$: a lower threshold $\alpha_*$ at which the grand bundle's cone first becomes a needed cover (\Cref{thm:grand-bundle-nec}), and an upper threshold $\alpha^*$ at which the cones of core-containing bundles jointly cover every required direction (\Cref{thm:core-bundle}), certified by strict core regularity --- the stochastic-dominance condition that orders revenue-rewarding above revenue-absorbing types under every partial unbundling of the core, with threshold the inverse of the certified slack. Inclusivity is the binding distributional condition for finiteness of $\alpha^*$ (\Cref{thm:incl-necessity}). The applications discharge these hypotheses from primitives. For pure bundling the picture is sharpest: strict core regularity is supplied by standard Myersonian regularity and, up to a knife edge, the distribution is inclusive if and only if pure bundling is optimal for all large $\alpha$ (\Cref{thm:pure-bundle}). The two-tier family (\Cref{sec:apps-twotier}) and the iid Beta benchmark (\Cref{sec:apps-pure}) extend the picture to one add-on and to higher dimensions.

The methodology is portable to richer slices of the combinatorial preference space; \Cref{sec:main-algebra} sketches the natural extensions --- coexisting complementarity systems and competing vertical hierarchies --- as open questions on the algebraic structure of premium bundles.

{\small
\renewcommand{\baselinestretch}{0.95}\selectfont
\setlength{\bibsep}{2pt plus 0.5ex}
\nocite{Strassen1965,Kellerer1984}
\bibliography{references}
}

\appendix
\section{Proofs of strong duality}\label{app:duality}

\begin{lemma}[Revenue formula]\label{lem:revenue-invariance}
Under \cref{def:tangent}, for any incentive-compatible mechanism implementing $u$, the transfer satisfies $t(x) = a(x) \cdot \nabla U(x) - U(x)$ at $\lambda_D$-a.e.\ $x \in D$, where $U(x) := u(v(x))$; expected revenue therefore depends on the mechanism only through $u$.
\end{lemma}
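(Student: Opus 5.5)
The plan is to use the IC representation \eqref{eq:IC} and differentiate along the type manifold. Fix an incentive-compatible mechanism with indirect utility $u = \bar u|_Y$, where $\bar u$ is convex on $\R^K$ with $\partial \bar u \subseteq \mathcal{Q}$, and let $q(\cdot)$ be the measurable selection with $q(y) \in \partial \bar u(y)$. Then $t(y) = q(y)\cdot y - u(y)$. Composing with $v$ gives
\[
t(x) = q(v(x))\cdot v(x) - U(x),
\]
so the whole task is to identify $q(v(x))\cdot v(x)$ with $a(x)\cdot\nabla U(x)$ almost everywhere. By the tangency condition $v(x) = Jv(x)\,a(x)$, this term equals $\bigl(Jv(x)^\top q(v(x))\bigr)\cdot a(x)$. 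It therefore suffices to show that $Jv(x)^\top q(v(x)) = \nabla U(x)$ for $\lambda_D$-a.e.\ $x$, and for every selection $q$.

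\textbf{Step 1: $U$ is differentiable almost everywhere.} $\bar u$ is convex with subgradients in the bounded set $\mathcal{Q}$, so it is Lipschitz. Since $v$ is smooth, $U = \bar u\circ v$ is locally Lipschitz on $D$, and Rademacher's theorem gives differentiability of $U$ at $\lambda_D$-a.e.\ $x$.

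\textbf{Step 2: chain rule at a point of differentiability.} Take such an $x$ in $\operatorname{int} D$ and any $g \in \partial\bar u(v(x))$. For a direction $h\in\R^d$ and $\epsilon>0$, the subgradient inequality together with a Taylor expansion of $v$ gives
\[
U(x+\epsilon h) - U(x) \ge g\cdot\bigl(v(x+\epsilon h)-v(x)\bigr) = \epsilon\, g\cdot Jv(x)h + o(\epsilon).
\]
Dividing by $\epsilon$ and letting $\epsilon\downarrow 0$ yields $\nabla U(x)\cdot h \ge (Jv(x)^\top g)\cdot h$. Applying the same bound to $-h$ gives the reverse inequality, so $Jv(x)^\top g = \nabla U(x)$ for every subgradient $g$. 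Hence $Jv(x)^\top q(v(x)) = \nabla U(x)$ whatever selection is used. The boundary $\partial D$ is piecewise smooth and so $\lambda_D$-null, and can be ignored.

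\textbf{Step 3: conclude.} Substituting into the formula for $t$ gives $t(x) = a(x)\cdot\nabla U(x) - U(x)$ almost everywhere. Expected revenue $\int t f$ is then a functional of $U = u\circ v$ alone, independent of the selection $q$ and of the extension $\bar u$ off $Y$.

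\textbf{Main obstacle.} The delicate point is the two-sided directional argument in Step 2. It needs $x$ to be an interior point, so that both $\pm h$ are feasible, and it needs $U$ to be genuinely differentiable there, not merely directionally differentiable. Without this, the non-uniqueness of $\partial\bar u$ in the directions transverse to the $d$-dimensional manifold $Y$ could not be discarded. Rademacher's theorem supplies exactly this, and tangency is what converts the projected gradient into the full term $q\cdot v$.
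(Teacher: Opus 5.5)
Your proof is correct and follows the paper's own argument: Rademacher's theorem applied to $U = u\circ v$, then collapse of the projected subdifferential $Jv(x)^\top \partial \bar u(v(x))$ to the single point $\nabla U(x)$ wherever $U$ is differentiable, then tangency to turn $q\cdot v$ into $a\cdot\nabla U$. The one cosmetic difference is how you get the collapse: you derive it directly from the subgradient inequality in the two directions $\pm h$, whereas the paper uses the directional-derivative chain rule $U'(x;h) = \max_{q\in\partial u(v(x))} q\cdot Jv(x)h$ and observes that this expression must be linear in $h$.
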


\begin{proof}[Proof of \Cref{lem:revenue-invariance}]
The constraint $\partial u(y) \subseteq \mathcal{Q}$ makes $u$ Lipschitz on $Y$, so $U = u \circ v$ is Lipschitz on the compact set $D$. By Rademacher's theorem, $U$ is differentiable at $\lambda_D$-a.e.\ $x \in D$. At any such point the directional derivative obeys the chain rule for the convex $u$ along the $C^1$ map $v$,
\[
U'(x; h) \;=\; u'\bigl(v(x); Jv(x)\, h\bigr) \;=\; \max_{q \in \partial u(v(x))} q \cdot Jv(x)\, h ,
\]
since $U$ is differentiable at $x$, the right side is the linear map $h \mapsto \nabla U(x) \cdot h$, so the projected subdifferential $\{Jv(x)^\top q : q \in \partial u(v(x))\}$ collapses to the single point
\[
Jv(x)^\top q \;=\; \nabla U(x) \qquad \text{for every } q \in \partial u(v(x)).
\]
Tangency $Jv(x)\, a(x) = v(x)$ then gives
\[
q(x) \cdot v(x)
\;=\; q(x) \cdot Jv(x)\, a(x)
\;=\; \bigl(Jv(x)^\top q(x)\bigr) \cdot a(x)
\;=\; a(x) \cdot \nabla U(x).
\]
Incentive compatibility provides $t(x) = q(x) \cdot v(x) - U(x)$, which together with the previous display yields the claimed formula.
\end{proof}

\begin{proof}[Proof of \Cref{lem:bundle-to-cone}]
$c(v(x), v(x')) = \sup_{q \in \mathcal{Q}} q \cdot V(x - x') = \sup_{p \in \mathcal{P}_V} p \cdot (x - x') = c_V(x - x')$, with the closed form coming from the polytope vertices $\{0\} \cup \{V_B^\top : B \in \mathcal{F}\}$. The vertex $V_B^\top$ attains the supremum at $h$ if and only if $V_B \cdot h \ge V_{B'} \cdot h$ for every $B' \in \mathcal{F}$ and $V_B \cdot h \ge 0$, i.e., $h \in K_B^{\mathcal{F}}$.
\end{proof}

\begin{proof}[Proof of \Cref{thm:strong-duality}]
\emph{Roadmap.} The proof has three steps. Step~1 rewrites the primal as a conic linear program using the operator $A$ that encodes the one-sided constraint $u(y) - u(y') \le c(y, y')$. Step~2 derives a sup-norm Lipschitz bound to obtain compactness by Arzelà--Ascoli, giving primal attainment. Step~3 applies the Gretsky--Ostroy--Zame criterion, together with an infimal-convolution argument for the one-sided cost, to obtain no duality gap and dual attainment.

\paragraph{Step 0: normalization.}
The proof is written for a possibly nonconvex $Y$: pose the primal and the dual on $\widehat Y := \operatorname{conv}(Y)$, compact --- hence closed --- because $Y$ is compact; under the convexity maintained in the body, $\widehat Y = Y$ and the statement is \Cref{thm:strong-duality} verbatim. The transformed measure on $D$ is
\[
\mu
=
\delta_{x_0}
+ (fa \cdot n)\, \sigma_{\partial D}
- \bigl[\nabla \cdot (fa) + f\bigr]\, \mathcal{L}^N\vert_D,
\]
so the divergence theorem gives
\[
\mu(D)
=
1 + \int_{\partial D} f(x)\, a(x) \cdot n(x)\, \d\sigma(x)
- \int_D \bigl[\nabla \cdot (f(x)\, a(x)) + f(x)\bigr]\, \d x
=
1 - \int_D f(x)\, \d x
= 0.
\]
Hence the pushforward measure also has zero total mass:
\[
\nu(Y) = \mu(D) = 0.
\]
Adding a constant to a feasible utility therefore does not change the objective. Fix an arbitrary anchor point $y_0 \in Y$ and define
\[
\mathcal{U}_0
:=
\bigl\{
u \in C(\widehat Y):
\text{$u$ is convex and coordinatewise nondecreasing, and } u(y_0) = 0
\bigr\}.
\]
Because both $\gamma_1 - \gamma_2$ and $\nu$ have total mass zero, the constant normalization is immaterial. Since
\[
\sup_{q \in \mathcal{Q}} q \cdot z
=
\max\Bigl\{0,\, \max_{B \in \mathcal{F}} z_B\Bigr\}
=
\abs{z_+}_\infty,
\]
the primal constraint $\partial u(y) \subseteq \mathcal{Q}$ implies the global bound
\[
u(y) - u(y') \le c(y, y'),
\qquad y, y' \in \widehat Y.
\]
Conversely, any $u$ satisfying the bound coincides on $\widehat Y$ with its one-sided envelope $y \mapsto \inf_{z \in \widehat Y}\{u(z) + c(y, z)\}$, which extends $u$ to a convex function on all of $\R^K$ with subdifferential in $\mathcal{Q}$ everywhere (the computation in Step~3 below, applied with $g = b$); this is exactly the admissibility defining $\mathcal{U}$. The two constraint sets therefore carry the same value of the normalized primal, and the argument works with the global bound: the normalized primal is equivalent to maximizing $\int_Y u\, \d\nu$ over $u \in \mathcal{U}_0$ subject to the one-sided bound $u(y) - u(y') \le c(y, y')$, the constraint Step~1 encodes. Exchanging $y$ and $y'$ in the bound yields the two-sided Lipschitz estimate
\[
\abs{u(y) - u(y')}
\le
\abs{y - y'}_\infty,
\qquad y, y' \in \widehat Y,
\]
which supplies the equicontinuity for Step~2.

\paragraph{Step 1: conic-LP form.}
Work in the Banach spaces $C(\widehat Y)$ and $C(\widehat Y \times \widehat Y)$ equipped with the sup norm. Let
\[
\mathcal{V}
:=
\{g \in C(\widehat Y \times \widehat Y) : g \ge 0\},
\]
and define the linear map $A : C(\widehat Y) \to C(\widehat Y \times \widehat Y)$ by
\[
(Au)(y, y') := u(y) - u(y'),
\]
together with the cost function $b(y, y') := c(y, y')$. The primal problem is
\[
\sup_{u \in \mathcal{U}_0}\, \inf_{\gamma \in \Gamma_+(\widehat Y \times \widehat Y)}
\Bigl\{
\langle \nu, u \rangle + \langle \gamma, b - Au \rangle
\Bigr\}.
\]
Interchanging the order of optimization yields the formal dual
\[
\inf_{\gamma \in \Gamma_+(\widehat Y \times \widehat Y)}\,
\sup_{u \in \mathcal{U}_0}
\Bigl\{
\langle \gamma, b \rangle - \langle A^*\gamma - \nu, u \rangle
\Bigr\}.
\]
A direct calculation shows that
\[
A^*\gamma = \gamma_1 - \gamma_2,
\]
where $\gamma_1$ and $\gamma_2$ are the marginals of $\gamma$. Hence the dual reads
\[
\inf_{\gamma \ge 0}\,
\int_{\widehat Y \times \widehat Y} c(y, y')\, \d\gamma(y, y')
\quad \text{s.t.} \quad
\langle u, \gamma_1 - \gamma_2 - \nu \rangle \ge 0
\quad
\forall\, u \in \mathcal{U}_0.
\]
Since zero-mass measures are unaffected by constant shifts, this constraint is exactly
\[
\gamma_1 - \gamma_2 \succeq_{\mathrm{mcvx}} \nu,
\]
which is the dual problem \eqref{eq:dual}.

\paragraph{Step 2: primal attainment.}
Let
\[
F(b) := \{u \in \mathcal{U}_0 : Au \le b\}.
\]
If $u \in F(b)$, Step~0 already gives $\abs{u(y) - u(y')} \le \abs{y - y'}_\infty$. Thus every $u \in F(b)$ is $1$-Lipschitz with respect to the bundle-coordinate sup norm, so $F(b)$ is equicontinuous. Because each $u \in F(b)$ satisfies $u(y_0) = 0$ and $\widehat Y$ is compact, the family is also uniformly bounded. Convexity and coordinatewise monotonicity are preserved under uniform limits, so by Arzelà--Ascoli, $F(b)$ is compact in $C(\widehat Y)$. The objective map $u \mapsto \int_Y u(y)\, \d\nu(y)$ is continuous on $C(\widehat Y)$, hence attains a maximum on $F(b)$. The primal therefore attains its optimum.

\paragraph{Step 3: no duality gap and dual attainment.}
For $g \in C(\widehat Y \times \widehat Y)$ define
\[
F(g) := \{u \in \mathcal{U}_0 : Au \le g\},
\qquad
V(g) := \sup\!\left\{\int_Y u(y)\, \d\nu(y) : u \in F(g)\right\}.
\]
Fix $\bar u \in F(g)$ and define the one-sided envelope
\[
u(y)
:=
\inf_{z \in \widehat Y}
\bigl\{
\bar u(z) + c(y, z)
\bigr\},
\qquad y \in \widehat Y.
\]
Because $c(y, z) = \sup_{q \in \mathcal{Q}} q \cdot (y - z)$ is the support function of the compact convex set $\mathcal{Q}$, support-function calculus yields
\[
u(y)
=
\sup_{q \in \mathcal{Q}}\, \bigl\{q \cdot y - \bar u^*(q)\bigr\},
\qquad
\bar u^*(q) := \sup_{z \in \widehat Y}\,\{q \cdot z - \bar u(z)\}.
\]
Hence $u$ is convex and coordinatewise nondecreasing. The infimum defining $u(y')$ is attained because $\widehat Y$ is compact and $z \mapsto \bar u(z) + c(y', z)$ is continuous. For any $y, y' \in \widehat Y$ and any minimizer $z$ for $u(y')$,
\[
u(y) - u(y') \le c(y, z) - c(y', z) \le c(y, y'),
\]
so $Au \le b$. Choosing $z = y$ gives $u(y) \le \bar u(y)$, and because $\bar u \in F(g)$,
\[
\bar u(y) - \bar u(z) \le g(y, z)
\qquad
\text{for every } y, z \in \widehat Y.
\]
Therefore
\[
u(y)
\ge
\inf_{z \in \widehat Y}\, \bigl\{\bar u(y) - g(y, z) + c(y, z)\bigr\}
\ge
\bar u(y) - \norm{g - b}_\infty,
\]
so
\[
0 \le \bar u(y) - u(y) \le \norm{g - b}_\infty.
\]
Set $\widetilde u := u - u(y_0)$. Then $A\widetilde u = Au \le b$ and $\widetilde u \in \mathcal{U}_0$. Since $\bar u(y_0) = 0$ and $\abs{u(y_0)} \le \norm{g - b}_\infty$,
\[
\norm{\bar u - \widetilde u}_\infty \le 2\,\norm{g - b}_\infty,
\]
hence
\[
V(g) - V(b) \le 2\,\norm{\nu}_{TV}\, \norm{g - b}_\infty.
\]
The value function $V$ is concave: $\mathcal{U}_0$ is convex, the constraint $Au \le g$ is jointly linear in $(u, g)$ --- so $\lambda F(g_1) + (1-\lambda) F(g_2) \subseteq F(\lambda g_1 + (1-\lambda) g_2)$ for $\lambda \in [0,1]$ --- and the objective is linear in $u$, so $-V$ is convex. The bound $V(g) - V(b) \le 2\,\norm{\nu}_{TV}\,\norm{g - b}_\infty$, which holds for every $g \in C(\widehat Y \times \widehat Y)$, shows that the upper difference quotient of $V$ at $b$ is bounded above --- Condition~3 (p.~266) of \citet{GOZ2002} for the subdifferential of $-V$ at $b$ to be non-empty.\footnote{$V$ is \emph{not} continuous at $b$: whenever $g(y, y) < 0$ for some $y$ the feasible set $F(g)$ is empty, so $V(g) = -\infty$ arbitrarily close to $b$. The one-sided bound, not two-sided Lipschitz continuity, is what the criterion requires. This is the route of \citet{KM2019}, whose argument the present proof adapts: the interiority (Slater) condition fails here because $A$ carries every admissible $u$ to the boundary of the positive cone, $Au(y,y) = 0$.} By the Gretsky--Ostroy--Zame theorem, non-emptiness of that subdifferential is equivalent to the absence of a duality gap together with attainment of the dual optimum. The primal and dual values therefore coincide, and the dual problem attains an optimizer.
\end{proof}

\section{Proof of the core-bundle theorem}\label{app:core}

This appendix proves \Cref{thm:core-bundle}; the standing form is in force throughout. The candidate menu $M = \{(\sigma_\ell, p_\ell)\}_{\ell=1}^L$ has selling cells $\{C_\ell\}_{\ell \ge 1}$ and exclusion cell $C_0$, with the tie convention assigning each type to one cell. For a cell $\ell$ write $w_\ell$ for the marginal-inclusion vector and $s_\ell(x) := w_\ell \cdot x$ for the effective cell sum. The boundary and interior parts $\mu^\pm$ of the differential virtual value are as in \Cref{sec:main}, with cell restrictions $\mu^\pm_\ell := \mu^\pm|_{C_\ell}$. The pulled-back one-sided cost splits as $c_\alpha = \max(\alpha\, c^{\mathcal{F}_C}, c^P)$ into a \emph{core-only cost} $c^{\mathcal{F}_C}(h) := \max\{0, \max_{B \in \mathcal{F}_C} \sum_{i \in B} h_i\}$ and a \emph{non-core cost} $c^P(h) := \max\{0, \max_{B \notin \mathcal{F}_C} \sum_{i \in B} h_i\}$. Cell $\ell$'s \emph{core-only cone} is $K_\ell^{\mathcal{F}_C} := \{h \in \R^N : s_\ell(h) \ge 0 \text{ and } s_\ell(h) \ge \sum_{i \in B} h_i \text{ for every } B \in \mathcal{F}_C\}$.

The construction is a transport plan certifying \Cref{prop:saddle-D}(iii); it consumes the candidate's optimality through one first-order condition, cell mass balance:
\begin{equation}\label{eq:MB}
\text{(MB)}\qquad \mu(C_\ell) = 0, \quad\text{i.e.}\quad \mu^+(C_\ell) = \mu^-(C_\ell), \qquad \ell \ge 1 .
\end{equation}

\noindent We may take the candidate menu to be \emph{face-generic}: no indifference hyperplane between two options, or between an option and exclusion, coincides with a top face $F_j := \{x_j = \bar v_j\}$. This is without loss. Were there such a coincidence, the two options would differ only in item $j$'s marginal inclusion ($w_\ell - w_m \parallel e_j$), and, the tie hyperplane lying on $\{x_j = \bar v_j\}$, their surplus difference would be $\propto (x_j - \bar v_j)$ --- single-signed on $D$; one option would then weakly dominate the other throughout $D$ (equality only on the $\lambda_D$-null face $F_j$), and deleting the dominated, redundant option leaves the indirect utility, hence revenue, unchanged. Iterating over the finite menu yields a face-generic representative of equal revenue.

\begin{lemma}\label{lem:mb-foc}
If $M$ is optimal in the restricted problem on $\mathcal{F}_C$, then each selling cell satisfies \eqref{eq:MB}.
\end{lemma}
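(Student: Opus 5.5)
The plan is a first-order perturbation of a single selling cell's price. Fix a selling cell $\ell \ge 1$ with option $(\sigma_\ell, p_\ell)$, and consider the restricted menu $M_\eta$ that posts price $p_\ell + \eta$ for option $\ell$ and leaves every other option unchanged. $M_\eta$ stays on $\mathcal{F}_C$, so it is feasible for the restricted problem. Restricted optimality then says that $\eta \mapsto \mathrm{Rev}(M_\eta)$ has a local maximum at $\eta = 0$. I would compute this revenue through the linear-functional form $\int_D U\,\d\mu$ of \Cref{sec:revenue}, with $U_\eta = u_{M_\eta}\circ v$. That form is valid for any incentive-compatible, individually rational menu, by \Cref{lem:revenue-invariance} and the divergence-theorem identity.

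The key computation is the derivative of $U_\eta$ in $\eta$. Write $U_\eta(x) = \max\{0, \max_m (\bar v_m(x) - p_m^\eta)\}$. Then $U_\eta(x)$ is nonincreasing in $\eta$. For $x$ in the interior of $C_\ell$, where option $\ell$ is the unique maximizer, we have $\partial_\eta U_\eta(x) = -1$. Off $\overline{C_\ell}$ the derivative is $0$ for $|\eta|$ small. The remaining points are the cell boundaries. These consist of finitely many indifference hyperplanes intersected with $D$, and they are null for both parts of $\mu$. They are Lebesgue-null for $\mu^-$. They are also null for the top-face part of $\mu^+$: here I use the face-generic normalization established just before the lemma, which says no indifference hyperplane coincides with a top face, so each boundary meets each $F_j$ in a $\sigma_j$-null set. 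The origin atom sits in $C_0$, because prices are strictly positive, so it does not interfere. The difference quotients $(U_\eta - U_0)/\eta$ are bounded by $1$ in absolute value, since $U$ is $1$-Lipschitz in each price. Dominated convergence against $|\mu|$, which is finite because $f\in C^2$ on a compact box, then gives
\[
\frac{\d}{\d\eta}\Big|_{\eta=0}\int_D U_\eta\,\d\mu = -\mu(C_\ell).
\]

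Since revenue equals $-\int_D U_\eta\,\d\mu$ up to the sign convention of \eqref{eq:primal} (the objective is $\int U\,\d\mu$, so revenue is that quantity itself), revenue is differentiable at $\eta=0$ with derivative $-\mu(C_\ell)$. Because $p_\ell>0$, both signs of $\eta$ are admissible, so the derivative must vanish at an interior maximum. This gives $\mu(C_\ell)=0$, which is \eqref{eq:MB}. If the cell has empty interior, or has $\mu$-null interior, the identity holds trivially.

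The main obstacle is justifying two-sided differentiability at the cell boundaries. Types on a boundary switch options at different rates for $\eta>0$ and $\eta<0$; in particular, with the seller-favorable tie-break, a tie type may move to a different option. I would handle this by showing that the boundary set is $|\mu|$-null, using the face-genericity, and then applying dominated convergence separately to the one-sided quotients. Both one-sided limits then equal $-\mu(C_\ell)$.
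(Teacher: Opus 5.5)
Your proposal is correct and follows essentially the same route as the paper. You perturb the single price $p_\ell$, write revenue as $\int_D U\,\d\mu$, and differentiate under the integral by dominated convergence using the $1$-Lipschitz bound. The tie hyperplanes are $\mu^-$-null as Lebesgue-null sets and $\mu^+$-null by face-genericity, and the origin atom lies in $C_0$. Both you and the paper then conclude $\mu(C_\ell)=0$ from the interior first-order condition. Your sentence saying revenue equals $-\int_D U_\eta\,\d\mu$ should simply be deleted, since, as you then note, revenue is $\int_D U\,\d\mu$ itself.
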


\begin{proof}
For an $\mathcal{F}_C$-mechanism with option utilities $\bar v_m(x) - p_m$, the indirect utility is $U_p(x) = \max\bigl\{0,\ \max_m(\bar v_m(x) - p_m)\bigr\}$ and seller revenue is the primal objective $R = \int_D U_p\, \d\mu$ (\Cref{lem:revenue-invariance} and \eqref{eq:primal}, with $\mu$ the differential virtual value of \Cref{sec:revenue}). Fix a selling cell $\ell$ and perturb its price alone, $p_\ell \mapsto p_\ell + \epsilon$, holding the options $\{\sigma_m\}$ and the other prices fixed; the perturbed mechanism is again supported on $\mathcal{F}_C$. The map $\epsilon \mapsto U_{p + \epsilon e_\ell}(x)$ is $1$-Lipschitz (a single price moves a finite maximum by at most $|\epsilon|$), and for $\lambda_D$-a.e.\ $x$ the maximizing option is unique, so the derivative exists with $\partial U_p / \partial p_\ell = -\mathbf 1_{\operatorname{int} C_\ell}(x)$. The exceptional set --- where two options, or an option and exclusion, tie --- is a finite union of affine hyperplanes: $\lambda_D$-null, hence $\mu^-$-null, and (by the tie convention and face-genericity) meeting each top face in dimension $\le N - 2$ and avoiding the origin atom, hence $\mu^+$-null. Dominated convergence against the finite measures $\mu^\pm$ (difference quotient bounded by $1$) gives
\[
\frac{\d}{\d p_\ell} R \;=\; \int_D \frac{\partial U_p}{\partial p_\ell}\, \d\mu \;=\; -\mu(\operatorname{int} C_\ell) \;=\; -\mu(C_\ell),
\]
the last equality because $\partial C_\ell$ is $\mu$-null. A selling cell has $\lambda_D(C_\ell) > 0$ (face-genericity excludes the degenerate null cells), so option $\ell$ wins on a set of positive measure under every small perturbation; $p_\ell$ is thus an interior maximizer of $R$ along this $\mathcal{F}_C$-family, and restricted optimality forces $\d R / \d p_\ell = 0$, i.e.\ $\mu(C_\ell) = 0$.
\end{proof}

\noindent Global balance $\mu(D) = 0$ (divergence theorem on the measure $\mu$ of \Cref{sec:revenue}) extends \eqref{eq:MB} to the exclusion cell, where inclusivity makes $\mu^+|_{C_0}$ the origin atom of mass $1$; so each pair $\mu^\pm_\ell$ has equal total mass, the prerequisite for the couplings below. The construction consumes optimality only through \eqref{eq:MB}, hence applies to any finite, inclusive, strictly core-regular menu whose selling cells are mass-balanced --- the form in which the applications of \Cref{sec:apps} are verified.

\paragraph{Normalization in $\alpha$.} Since \eqref{eq:standing-v} scales each core-containing bundle value by $\alpha$, the candidate is specified scale-invariantly (the theorem's footnote): prices scale linearly, $p_\ell = \alpha\,\hat s_\ell$, with $\alpha$-invariant crossing points $\hat s_\ell := p_\ell/\alpha$; the cells $\{C_\ell\}$, the vectors $w_\ell$, and the sums $s_\ell$ are therefore $\alpha$-independent. The saddle pair is positively homogeneous, so it suffices to verify \Cref{prop:saddle-D} in \emph{normalized units}: write $\tilde u_M := u_M/\alpha$, so that $\tilde u_M(x) = s_\ell(x) - \hat s_\ell$ on a selling cell and $\tilde u_M = 0$ on $C_0$. Dividing the cost by $\alpha$ as well --- writing $\tilde c_\alpha := c_\alpha/\alpha = \max\bigl(c^{\mathcal{F}_C},\, c^P/\alpha\bigr)$ for the normalized full cost --- the within-cell complementary-slackness identity reads $\tilde u_M(y^1) - \tilde u_M(y^2) = s_\ell(h) = c^{\mathcal{F}_C}(h)$ on $K_\ell^{\mathcal{F}_C}$, and \emph{$\alpha$ enters the rest of the proof through a single comparison}: for a within-cell displacement $h \in K_\ell^{\mathcal{F}_C}$, full-cost slackness holds iff $c^P(h) \le \alpha\, s_\ell(h)$, equivalently iff $h$ lies in the full cone
\[
K_\ell^{\mathcal{P}} \;=\; \bigl\{h \in K_\ell^{\mathcal{F}_C} : c^P(h) \le \alpha\, s_\ell(h)\bigr\}.
\]
Every cone membership, coupling, and slackness identity below is stated in these $\alpha$-free units.

\paragraph{Roadmap.} \Cref{lem:cone-forcing} pins the sign pattern of within-cell displacements on the cell's core-only cone and prices the non-core threat exactly: $\max\{s_\ell(h), c^P(h)\} = s_\ell(h) + R_C(h)$, where $R_C(h)$ is the core reversal. \Cref{lem:coupling-freedom} consumes \eqref{eq:MB} to assemble any family of within-cell couplings $\{\gamma'_\ell\}$ into a single transport realizing condition (iii) of \Cref{prop:saddle-D}. \Cref{os:align-transfer} converts dominance at slack $\varepsilon$ into a within-cell coupling concentrated on the unbundling relation $\mathcal{R}_{\ell,\varepsilon}$ of \Cref{def:budget-rel}, through the monotone-coupling form of Strassen's theorem (\Cref{thm:rel-strassen}). The concluding verification sets $\alpha^*(C, M) := 1/\varepsilon_0$ for a certified slack $\varepsilon_0$ and closes conditions (i)--(iii) of \Cref{prop:saddle-D} for every $\alpha \ge \alpha^*(C, M)$.

\subsection{Cone-forcing and exact cost anatomy}\label{app:core-L1}

For a displacement $h \in \R^N$ write $h_i^+ := \max\{h_i, 0\}$ and $h_i^- := \max\{-h_i, 0\}$, and let
\[
R_C(h) \;:=\; \sum_{i \in C} h_i^-
\]
be the \emph{core reversal}: $R_C(y^1 - y^2) = \max_{T \subseteq C} \sum_{i \in T} (y^2_i - y^1_i)$.

\begin{lemma}[Cone-forcing and exact cost anatomy]\label{lem:cone-forcing}
For every selling cell $C_\ell$, a displacement $h \in \R^N$ lies in $K_\ell^{\mathcal{F}_C}$ if and only if $s_\ell(h) \ge 0$ and, for every peripheral item $j \in P$,
\begin{enumerate}[label=\textnormal{(\arabic*)}]
\item $h_j = 0$ if $0 < w_{\ell,j} < 1$;
\item $h_j \le 0$ if $w_{\ell,j} = 0$;
\item $h_j \ge 0$ if $w_{\ell,j} = 1$ --- compactly, $w_{\ell,j}\, h_j \ge 0 \ge (1 - w_{\ell,j})\, h_j$.
\end{enumerate}
Moreover, for every $h \in K_\ell^{\mathcal{F}_C}$:
\begin{enumerate}[label=\textnormal{(\arabic*)}]\setcounter{enumi}{3}
\item the core-only cost satisfies $c^{\mathcal{F}_C}(h) = s_\ell(h)$ with $s_\ell(h) = \sum_{i \in C} h_i + \sum_{j \in P} w_{\ell,j}\, h_j$, and every $B \in \operatorname{supp}\sigma_\ell$ attains it: $\sum_{i \in B} h_i = s_\ell(h)$;
\item the non-core threat is priced exactly: $\max\{s_\ell(h),\, c^P(h)\} = s_\ell(h) + R_C(h)$;
\item for every $\alpha \ge 1$:\quad $c^P(h) \le \alpha\, s_\ell(h) \iff s_\ell(h) \ge \dfrac{1}{\alpha}\bigl[s_\ell(h) + R_C(h)\bigr]$.
\end{enumerate}
\end{lemma}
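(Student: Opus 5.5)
The plan is to reduce everything to a coordinatewise comparison, using that every core-containing bundle has the form $B = C \cup S$ with $S \subseteq P$. Since every bundle in $\operatorname{supp}\sigma_\ell$ contains $C$, the marginal-inclusion vector has $w_{\ell,i} = 1$ for $i \in C$. Hence
\[
s_\ell(h) = \sum_{i \in C} h_i + \sum_{j \in P} w_{\ell,j} h_j,
\]
which is the formula in (4). For $B = C \cup S$ we have $\sum_{i \in B} h_i = \sum_{i\in C} h_i + \sum_{j \in S} h_j$, and maximizing over $S \subseteq P$ gives $\sum_{i \in C} h_i + \sum_{j\in P} h_j^+$. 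So the defining inequalities of $K_\ell^{\mathcal{F}_C}$, beyond $s_\ell(h)\ge 0$, collapse to the single inequality
\[
\sum_{j\in P} w_{\ell,j} h_j \ \ge\ \sum_{j \in P} h_j^+ .
\]
Since $w_{\ell,j}\in[0,1]$, termwise $w_{\ell,j} h_j \le h_j^+$. The inequality therefore holds iff $w_{\ell,j} h_j = h_j^+$ for every $j\in P$. Reading this off by cases gives (1)--(3): for fractional $w$ it forces $h_j = 0$; for $w=0$ it forces $h_j\le 0$; for $w=1$ it forces $h_j \ge 0$. The converse direction is the same computation read backwards.

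For (4), on the cone $c^{\mathcal{F}_C}(h) = \max\{0, \sum_C h_i + \sum_P h_j^+\} = \max\{0, s_\ell(h)\} = s_\ell(h)$, using the termwise identity and $s_\ell(h)\ge 0$. For $B = C\cup S \in \operatorname{supp}\sigma_\ell$, $S$ must contain every $j$ with $w_{\ell,j}=1$ and no $j$ with $w_{\ell,j}=0$. The remaining peripheral coordinates are fractional, where $h_j = 0$ by (1). Hence $\sum_{i\in B} h_i = s_\ell(h)$.

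For (5), I would parametrize non-core bundles as $B = (C\setminus R)\cup S$ with $\varnothing \ne R\subseteq C$ and $S\subseteq P$. Then
\[
\sum_{i \in B} h_i = \sum_{i \in C} h_i - \sum_{i\in R} h_i + \sum_{j\in S} h_j .
\]
Maximizing over $S$ and using the termwise identity gives
\[
\max_{B\notin\mathcal{F}_C}\sum_{i \in B} h_i \;\le\; s_\ell(h) + \max_{\varnothing\ne R\subseteq C}\Bigl(-\sum_{i\in R}h_i\Bigr),
\]
with equality unless the maximizing pair yields $B=\varnothing$. The inner maximum equals $R_C(h)$ when some core coordinate is negative, attained at $R = \{i\in C: h_i<0\}$. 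Otherwise it equals $-\min_{i\in C} h_i \le 0$. I then split into two cases:
\begin{itemize}
\item If $R_C(h)>0$, the maximizing $R$ is a proper subset of $C$ or leaves positive-sum mass, so $B\neq\varnothing$. Then $c^P(h) = s_\ell(h)+R_C(h) \ge s_\ell(h)$, and the max is $s_\ell+R_C$.
\item If $R_C(h)=0$, every non-core term is at most $s_\ell(h)$, and the $0$ term in $c^P$ is also at most $s_\ell(h)\ge 0$. So $\max\{s_\ell, c^P\} = s_\ell = s_\ell + R_C$.
\end{itemize}
The main obstacle is this bookkeeping in (5): the edge case where the optimal $(R,S)$ produces the empty bundle, and checking that $B$ is nonempty precisely when $R_C>0$ matters. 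I would handle it by noting that the empty choice contributes the value $0$, which $c^P$ already includes, and that $0\le s_\ell(h)$.

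Finally, (6) follows from (5). For $\alpha\ge 1$ and $s_\ell(h)\ge 0$ we have $s_\ell\le \alpha s_\ell$. Hence $c^P\le \alpha s_\ell$ iff $\max\{s_\ell,c^P\}\le \alpha s_\ell$, iff $s_\ell+R_C\le \alpha s_\ell$. Dividing by $\alpha$ gives the stated equivalence.
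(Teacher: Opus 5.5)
Your proof is correct and follows essentially the paper's route. You reduce cone membership to comparing $\sum_{j\in P} w_{\ell,j}h_j$ with the best peripheral subset sum; your termwise bound $w_{\ell,j}h_j\le h_j^+$ is the paper's specialization to $S^+$. You then read the support bundles off the weight-one and weight-zero items, bound every non-core sum by $s_\ell(h)+R_C(h)$ with an explicit non-core bundle attaining it when $R_C(h)>0$, and obtain (6) from (5).

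The one loosely phrased step is the nonemptiness of the attaining bundle in (5). It is settled exactly as in the paper: that bundle's sum is $s_\ell(h)+R_C(h)>0$, while the empty set has sum zero.
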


\begin{proof}
Every $B \in \mathcal{F}_C$ contains $C$ and $w_{\ell,i} = 1$ on $C$, so the cone-defining inequality $\alpha\, s_\ell(h) \ge \alpha \sum_{i \in B} h_i$ cancels the common core terms; writing $B = C \cup S$ for $S \subseteq P$, it reduces to
\begin{equation}\label{eq:cone-reduced}
\sum_{j \in P} w_{\ell, j}\, h_j \;\ge\; \sum_{j \in S} h_j \qquad \text{for every } S \subseteq P,
\end{equation}
plus $s_\ell(h) \ge 0$.

\emph{Sign analysis (only if).} Specialize \eqref{eq:cone-reduced} to $S^+ := \{j \in P : h_j > 0\}$. Splitting $P$ into $S^+$ and $P \setminus S^+$ and rearranging,
\[
\sum_{j \in S^+}(w_{\ell, j} - 1)\,h_j \;+\; \sum_{j \in P \setminus S^+} w_{\ell, j}\, h_j \;\ge\; 0.
\]
For $j \in S^+$ each summand is $\le 0$ (since $w_{\ell, j} \le 1$ and $h_j > 0$); for $j \in P \setminus S^+$ each summand is $\le 0$ (since $w_{\ell, j} \ge 0$ and $h_j \le 0$). The non-negative sum forces every summand to vanish:
\[
h_j > 0 \Rightarrow w_{\ell, j} = 1, \qquad
h_j < 0 \Rightarrow w_{\ell, j} = 0, \qquad
0 < w_{\ell, j} < 1 \Rightarrow h_j = 0.
\]
Parts (1)--(3) are these implications read by weight, (2) and (3) by contraposition.

\emph{Sign analysis (if).} Conversely, suppose $h$ satisfies (1)--(3) and $s_\ell(h) \ge 0$. For any $B = C \cup S \in \mathcal{F}_C$, the common core terms cancel and
\[
s_\ell(h) - \sum_{i \in B} h_i \;=\; \sum_{j \in P} w_{\ell,j}\, h_j - \sum_{j \in S} h_j \;=\; \sum_{j \in P \setminus S} w_{\ell,j}\, h_j \;+\; \sum_{j \in S} (w_{\ell,j} - 1)\, h_j \;\ge\; 0,
\]
each term of both sums nonnegative by the compact sign rule of (3). Thus $s_\ell(h)$ weakly dominates every core-containing bundle sum, which together with $s_\ell(h) \ge 0$ is exactly membership in $K_\ell^{\mathcal{F}_C}$.

\emph{Cost identities.} The display in (4) is the definition of $s_\ell(h) = w_\ell \cdot h$ ($w_{\ell,i} = 1$ on $C$). Let $B^* := C \cup \{j \in P : w_{\ell,j} = 1\} \in \mathcal{F}_C$; by (1), $h$ vanishes on the fractional coordinates, so $\sum_{i \in B^*} h_i = s_\ell(h)$ and $\max_{B \in \mathcal{F}_C}\sum_B h_i \ge s_\ell(h)$, while \eqref{eq:cone-reduced} gives the reverse inequality; together with $s_\ell(h) \ge 0$ this proves $c^{\mathcal{F}_C}(h) = s_\ell(h)$. Every bundle in $\operatorname{supp}\sigma_\ell$ contains each item of weight one, omits each item of weight zero, and can differ from $B^*$ only on fractional coordinates, on which $h$ vanishes; hence every support bundle has sum exactly $s_\ell(h)$, proving (4).

\emph{Exact non-core cost.} Set
\[
G \;:=\; \sum_{i \in C} h_i^+ \;+\; \sum_{j \in P} w_{\ell,j}\, h_j,
\qquad
L \;:=\; \sum_{i \in C} h_i^- \;=\; R_C(h) .
\]
By (4), $s_\ell(h) = G - L$, so $G = s_\ell(h) + L \ge 0$. Under the sign pattern a peripheral displacement is positive only at weight one, so $\sum_{j \in P} h_j^+ = \sum_{j \in P} w_{\ell,j}\, h_j$. For any non-core $B$ ($C \not\subseteq B$), bounding termwise by positive parts, $\sum_{i \in B} h_i \le \sum_{i \in C} h_i^+ + \sum_{j \in P} h_j^+ = G$; hence $c^P(h) \le \max\{0, G\} = G$. If $L > 0$, then $G = s_\ell(h) + L > 0$, and the bundle collecting every positively moving item is nonempty, omits a negatively moving core item --- so it is non-core --- and attains $\sum_B h_i = G$ exactly. Hence $c^P(h) = G = s_\ell(h) + R_C(h)$, and the maximum in (5) is the right side. If $L = 0$, then $R_C(h) = 0$ and $c^P(h) \le G = s_\ell(h)$, so both sides of (5) equal $s_\ell(h)$. This proves (5).

\emph{Threshold comparison.} For $\alpha \ge 1$, $\alpha s_\ell(h) \ge s_\ell(h) \ge 0$, so
\[
c^P(h) \le \alpha\, s_\ell(h)
\;\iff\;
\max\{s_\ell(h),\, c^P(h)\} \le \alpha\, s_\ell(h)
\;\iff\;
s_\ell(h) + R_C(h) \le \alpha\, s_\ell(h),
\]
using (5); dividing by $\alpha$ gives (6).
\end{proof}

\subsection{Within-cell coupling freedom}\label{app:core-L3}

\begin{lemma}[Within-cell coupling freedom]\label{lem:coupling-freedom}
Write $\mu^\pm_\ell := \mu^\pm|_{C_\ell}$; by \eqref{eq:MB} on the selling cells and global balance $\mu(D) = 0$ on the exclusion cell, the two have equal total mass on every cell. Suppose that for each cell $\ell \ge 0$ a coupling $\gamma'_\ell$ of $\mu^+_\ell$ and $\mu^-_\ell$ supported on $C_\ell \times C_\ell$ is given, with $h := y^1 - y^2 \in K_\ell^{\mathcal{F}_C}$ for $\gamma'_\ell$-a.e.\ pair on each selling cell ($\ell \ge 1$). Then $\gamma' := \sum_{\ell \ge 0} \gamma'_\ell$
(i) transports $\mu^+$ to $\mu^-$; (ii) is supported on within-cell pairs; and (iii) satisfies the core-only complementary-slackness identity $c^{\mathcal{F}_C}(h) = \tilde u_M(y^1) - \tilde u_M(y^2)$ for $\gamma'$-a.e.\ pair on each selling cell.
\end{lemma}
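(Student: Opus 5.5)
The plan is to verify the three claims by direct bookkeeping on the cell decomposition, using only that the cells partition $D$ and that $\tilde u_M$ is affine on each selling cell.

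\emph{Step 1: marginals.} The tie convention assigns each type to exactly one cell, so $\{C_\ell\}_{\ell\ge0}$ is a measurable partition of $D$. Hence $\mu^\pm = \sum_{\ell\ge0}\mu^\pm_\ell$. Each $\gamma'_\ell$ has first marginal $\mu^+_\ell$ and second marginal $\mu^-_\ell$. Marginals are linear in the plan, so $\gamma'$ has marginals $\sum_\ell\mu^+_\ell=\mu^+$ and $\sum_\ell\mu^-_\ell=\mu^-$. Consequently $\gamma'_1-\gamma'_2=\mu^+-\mu^-=\mu$, which is claim (i) and condition (iii) of \Cref{prop:saddle-D}. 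Each coupling is well defined because the total masses agree: on selling cells this is \eqref{eq:MB}, and on $C_0$ it follows by subtracting \eqref{eq:MB} from $\mu(D)=0$. The sum is finite because the menu is finite.

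\emph{Step 2: support.} Claim (ii) is immediate, since each $\gamma'_\ell$ is supported on $C_\ell\times C_\ell\subseteq\overline{C_\ell}\times\overline{C_\ell}$ and a finite sum preserves this.

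\emph{Step 3: slackness on selling cells.} Take $\ell\ge1$ and a $\gamma'_\ell$-a.e.\ pair $(y^1,y^2)\in C_\ell\times C_\ell$ with $h=y^1-y^2\in K_\ell^{\mathcal F_C}$. In normalized units $\tilde u_M=s_\ell-\hat s_\ell$ on $C_\ell$, and $s_\ell$ is linear, so $\tilde u_M(y^1)-\tilde u_M(y^2)=s_\ell(h)$. By \Cref{lem:cone-forcing}(4), $c^{\mathcal F_C}(h)=s_\ell(h)$, which gives (iii).

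The only delicate point is that $\gamma'$ charges pairs only through the individual $\gamma'_\ell$. A pair lying in two cells' closures therefore never mixes cell formulas, and boundaries need no separate treatment because each $\gamma'_\ell$ lives on $C_\ell$ itself. I expect no real obstacle here: the substantive work, constructing the $\gamma'_\ell$ with displacements in the cone, is deferred to \Cref{os:align-transfer}.
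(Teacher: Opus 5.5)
Your proposal is correct and follows the paper's proof essentially step for step: the tie convention's measurable partition gives $\mu^\pm=\sum_\ell\mu^\pm_\ell$ and hence the marginals, the disjoint supports $C_\ell\times C_\ell$ give within-cell support, and the affine form $\tilde u_M=s_\ell-\hat s_\ell$ together with \Cref{lem:cone-forcing}(4) gives the core-only slackness identity. One cosmetic point: writing $\gamma'_1-\gamma'_2$ for the marginals of $\gamma'$ clashes with $\gamma'_1$ as the cell-$1$ coupling, so $(\mathrm{pr}_1)_\#\gamma'$ and $(\mathrm{pr}_2)_\#\gamma'$ would be clearer.
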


\begin{proof}
Equal cell masses, from \eqref{eq:MB}, make each $\gamma'_\ell$ a genuine coupling; the summands have pairwise disjoint supports $C_\ell \times C_\ell$, so $\gamma' \ge 0$ and (ii) holds.

(i) By the tie convention, $\{C_\ell\}_{\ell \ge 0}$ partitions $D$ as Borel sets and $\mu^\pm = \sum_\ell \mu^\pm_\ell$ exactly --- for $\mu^-$ the cell boundaries are in addition Lebesgue-null, while any $\mu^+$ face mass on a cell boundary is assigned by the convention. Marginalizing, $(\mathrm{pr}_1)_*\gamma' = \sum_\ell (\mathrm{pr}_1)_*\gamma'_\ell = \sum_\ell \mu^+_\ell = \mu^+$, and likewise $(\mathrm{pr}_2)_*\gamma' = \mu^-$.

(iii) On a selling cell fix $y^1, y^2 \in C_\ell$ with $h \in K_\ell^{\mathcal{F}_C}$. There $\tilde u_M(x) = s_\ell(x) - \hat s_\ell$, so $\tilde u_M(y^1) - \tilde u_M(y^2) = s_\ell(h)$ (crossing points cancel), and \Cref{lem:cone-forcing}(4) gives $c^{\mathcal{F}_C}(h) = s_\ell(h)$.
\end{proof}

\subsection{The unbundling relation and the within-cell coupling}\label{app:core-L4}

Fix a selling cell $\ell$; each win region is a closed polytopal subset of the box, so couplings live on its closure with marginals assigned by the tie convention.

\begin{definition}[Unbundling relation]\label{def:budget-rel}
For $\varepsilon \in [0,1]$ and a selling cell $\ell$, let
\[
\mathcal{R}_{\ell,\varepsilon} := \Bigl\{ h \in \R^N :\;\; w_{\ell,j}\, h_j \,\ge\, 0 \,\ge\, (1 - w_{\ell,j})\, h_j \;\ (j \in P), \qquad s_\ell(h) \ge \varepsilon \bigl[ s_\ell(h) + R_C(h) \bigr] \Bigr\}.
\]
\end{definition}

\noindent The sign clauses replicate the pattern cone-forcing imposes on covered displacements (\Cref{lem:cone-forcing}(1)--(3)); the inequality is the \emph{retention clause} (\Cref{lem:cone-forcing}(5)).

\begin{remark}[Budget form]\label{rmk:budget-form}
With $\kappa := 1 - \varepsilon$, the retention clause rewrites as the \emph{budget inequality}
\[
\sum_{i \in C} h_i^- \;\le\; \kappa \Bigl( \sum_{i \in C} h_i^+ + \sum_{j \in P} w_{\ell,j}\, h_j \Bigr):
\]
core dips of at most a $\kappa$-fraction of the aligned gains. The single-cell constructions of the Online Supplement work in this budget form and thereby certify the dominance at unbundling slack $\varepsilon = 1 - \kappa$ (\Cref{os:align-transfer}).
\end{remark}

\begin{lemma}[Generators]\label{os:align-gen}
Fix $\varepsilon \in [0,1]$ and a selling cell $\ell$.
\begin{enumerate}[label=\textnormal{(\roman*)}]
\item For $x, y \in \R^N$: $Q_\ell^{\varepsilon}(y) \ge Q_\ell^{\varepsilon}(x)$ coordinatewise if and only if $y - x \in \mathcal{R}_{\ell,\varepsilon}$. In particular $\mathcal{R}_{\ell,\varepsilon}$ is a closed convex cone.
\item On $\mathcal{R}_{\ell,\varepsilon}$: $s_\ell(h) \ge 0$, $R_C(h) \ge 0$, and $s_\ell(h) + R_C(h) = \sum_{i \in C} h_i^+ + \sum_{j \in P} w_{\ell,j}\, h_j \ge 0$. Consequently $\mathcal{R}_{\ell,\varepsilon'} \subseteq \mathcal{R}_{\ell,\varepsilon}$ whenever $\varepsilon \le \varepsilon'$, and $\mathcal{R}_{\ell,0} = K_\ell^{\mathcal{F}_C}$.
\end{enumerate}
\end{lemma}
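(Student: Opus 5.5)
The plan is to verify (i) and (ii) by direct computation from the definitions of $Q_\ell^{\varepsilon}$ and $\mathcal{R}_{\ell,\varepsilon}$. Write $h := y - x$. By linearity, $Q_\ell^{\varepsilon}(y) \ge Q_\ell^{\varepsilon}(x)$ holds iff $Q_\ell^{\varepsilon}(h) \ge 0$ coordinatewise.

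\textbf{Peripheral block.} The block $\mathsf{D}_\ell h|_P \ge 0$ reads $w_{\ell,j} h_j \ge 0$ and $-(1-w_{\ell,j}) h_j \ge 0$ for every $j \in P$. This is exactly the sign clause of \Cref{def:budget-rel}.

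\textbf{Unbundling block.} The coordinates are $q^{\varepsilon}_{\ell,T}(h) = \varepsilon \sum_{i\in T} h_i + (1-\varepsilon) s_\ell(h) \ge 0$ for every nonempty $T \subseteq C$. The binding $T$ minimizes $\sum_{i \in T} h_i$, so we take $T = \{i \in C : h_i < 0\}$, which gives $\min_T \sum_{i\in T} h_i = -R_C(h)$. If $R_C(h) = 0$, the minimum over nonempty $T$ is instead a nonnegative smallest coordinate, so there is an edge case to handle. The family of conditions therefore reduces to $(1-\varepsilon)s_\ell(h) - \varepsilon R_C(h) \ge 0$, i.e.\ $s_\ell(h) \ge \varepsilon[s_\ell(h)+R_C(h)]$. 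In the edge case $R_C(h)=0$, the remaining conditions are $\varepsilon\min_{i\in C} h_i + (1-\varepsilon)s_\ell(h)\ge 0$ with $\min_{i \in C} h_i \ge 0$. These follow from the retention clause as long as $s_\ell(h) \ge 0$, and $s_\ell(h) \ge 0$ is itself implied by taking $T = C$. I would handle this carefully: taking $T=C$ gives $\varepsilon\sum_C h_i + (1-\varepsilon)s_\ell \ge 0$, and I need $s_\ell\ge0$ in both directions.

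\textbf{Identity $s_\ell + R_C = G$.} I would first establish the identity $s_\ell(h) + R_C(h) = \sum_{i\in C} h_i^+ + \sum_{j\in P} w_{\ell,j}h_j$ (the $G$ of \Cref{lem:cone-forcing}). It is immediate from $s_\ell(h) = \sum_C h_i + \sum_P w_{\ell,j}h_j$ together with $h_i = h_i^+ - h_i^-$. Under the sign clause, each $w_{\ell,j}h_j \ge 0$, so $G \ge 0$. The retention clause then gives $s_\ell(h) \ge \varepsilon G \ge 0$, which settles nonnegativity in the forward direction. Conversely, if all unbundling coordinates are nonnegative, the choice $T = \{i: h_i<0\}$ (or any single index when that set is empty) yields the retention clause. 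So (i) is an equivalence.

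\textbf{Cone property.} Closedness and convexity follow because $\mathcal{R}_{\ell,\varepsilon}$ is the preimage of the closed convex orthant under the linear map $Q_\ell^{\varepsilon}$. It is an intersection of finitely many closed half-spaces, hence a polyhedral cone.

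\textbf{Part (ii).} The first three facts were shown above. For monotonicity, if $\varepsilon \le \varepsilon'$ and $h \in \mathcal{R}_{\ell,\varepsilon'}$, then $s_\ell \ge \varepsilon' G \ge \varepsilon G$ because $G \ge 0$. At $\varepsilon = 0$, the relation reads as the sign clause plus $s_\ell(h)\ge0$, which is exactly the characterization of $K_\ell^{\mathcal{F}_C}$ in \Cref{lem:cone-forcing}(1)–(3).

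\textbf{Main obstacle.} The only delicate point is the edge case $R_C(h)=0$ in (i) together with the role of $s_\ell(h)\ge 0$. I would dispatch it by always testing $T=C$ and $T=\{i:h_i<0\}$, and by using $G\ge0$.
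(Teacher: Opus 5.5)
Your proposal is correct and follows essentially the same route as the paper: reduce to $Q_\ell^{\varepsilon}(h) \ge 0$ by linearity, read the sign clauses off the weighted block, test $T = \{i \in C : h_i < 0\}$ to extract the retention clause, and treat the case with no negative core coordinate separately. Your uniform bound $s_\ell(h) \ge \varepsilon G \ge 0$ is a slightly tidier way to get $s_\ell \ge 0$ than the paper's split at $\varepsilon = 1$. One small precision: in the edge case, a single-index test does not by itself yield retention. What does is $s_\ell(h) = G \ge 0$, or equivalently your $T = C$ test, since $q^{\varepsilon}_{\ell,C}(h) = s_\ell(h) - \varepsilon \sum_{j \in P} w_{\ell,j} h_j \le s_\ell(h)$ once the sign clauses hold.
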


\begin{proof}
\emph{(i).} Write $h := y - x$. Since $Q_\ell^{\varepsilon}$ is linear, coordinatewise dominance is $Q_\ell^{\varepsilon}(h) \ge 0$. In the two weighted blocks of $\mathsf{D}_\ell$ this reads $w_{\ell,j}\, h_j \ge 0$ and $(1 - w_{\ell,j})\, h_j \le 0$ for every $j \in P$ --- exactly the sign clauses. At $\varepsilon = 0$ the score block is the single coordinate $s_\ell(h)$, whose nonnegativity is the retention clause at $\varepsilon = 0$, and both directions are immediate; take $\varepsilon > 0$. In the score block, by linearity,
\[
q_{\ell,T}^{\varepsilon}(y) - q_{\ell,T}^{\varepsilon}(x) \;=\; \varepsilon \sum_{i \in T} h_i \;+\; (1-\varepsilon)\, s_\ell(h).
\]
($\Leftarrow$) Since $\sum_T h_i \ge -\sum_C h_i^- = -R_C(h)$ for every $T \subseteq C$, the display is at least $(1-\varepsilon)s_\ell(h) - \varepsilon R_C(h) = s_\ell(h) - \varepsilon[s_\ell(h) + R_C(h)] \ge 0$ by the retention clause: every unbundling score rises. ($\Rightarrow$) The weighted blocks give the sign clauses. Let $T^- := \{i \in C : h_i < 0\}$. If $T^- \ne \varnothing$ it is a nonempty admissible index with $\sum_{T^-} h_i = -R_C(h)$, and its score inequality reads $0 \le (1-\varepsilon)s_\ell(h) - \varepsilon R_C(h)$, the retention clause. If $T^- = \varnothing$, then $R_C(h) = 0$ and $s_\ell(h) = \sum_{i \in C} h_i + \sum_{j \in P} w_{\ell,j}\, h_j \ge 0$ --- every core term and, by the sign clauses, every weighted peripheral term is nonnegative --- so the retention clause holds at every $\varepsilon \in [0,1]$. The equivalence identifies $\mathcal{R}_{\ell,\varepsilon} = \{h : Q_\ell^{\varepsilon}(h) \ge 0\}$, an intersection of finitely many closed half-spaces through the origin: a closed convex polyhedral cone.

\emph{(ii).} $R_C(h) \ge 0$ always. For $\varepsilon < 1$ the retention clause gives $(1-\varepsilon)s_\ell(h) \ge \varepsilon R_C(h) \ge 0$, so $s_\ell(h) \ge 0$; at $\varepsilon = 1$ it reads $0 \ge R_C(h)$, forcing $R_C(h) = 0$, whence $s_\ell(h) \ge 0$ by the sign clauses as in (i). The identity $s_\ell(h) + R_C(h) = \sum_C h_i^+ + \sum_{j \in P} w_{\ell,j}\, h_j$ follows from the definition $s_\ell(h) = \sum_C h_i + \sum_{j \in P} w_{\ell,j}\, h_j$ and $\sum_C h_i = \sum_C h_i^+ - R_C(h)$. Nesting: the sign clauses are $\varepsilon$-free and the retention clause's right side $\varepsilon[s_\ell(h) + R_C(h)]$ is nondecreasing in $\varepsilon$ because $s_\ell(h) + R_C(h) \ge 0$ on the relation. Finally, at $\varepsilon = 0$ the retention clause reads $s_\ell(h) \ge 0$, so $\mathcal{R}_{\ell,0}$ is the sign pattern plus $s_\ell(h) \ge 0$ --- exactly $K_\ell^{\mathcal{F}_C}$ by \Cref{lem:cone-forcing}.
\end{proof}

\begin{theorem}[Relational Strassen \citep{Strassen1965, Kellerer1984}]\label{thm:rel-strassen}
Let $X, Y$ be compact metric spaces, $\mu \in \mathcal{P}(X)$, $\nu \in \mathcal{P}(Y)$, and $G \subseteq X \times Y$ closed. A coupling $\pi \in \Pi(\mu, \nu)$ with $\pi(G) = 1$ exists if and only if
\[
\nu(B) \;\le\; \mu\bigl(G^{-1}(B)\bigr) \qquad \text{for every closed } B \subseteq Y, \quad G^{-1}(B) := \mathrm{proj}_X\bigl(G \cap (X \times B)\bigr).
\]
In particular, for finite Borel measures $\nu_1, \nu_2$ of equal mass on a compact subset of $\R^m$: $\nu_1 \preceq_{\mathrm{FOSD}} \nu_2$ if and only if there is a coupling of $(\nu_2, \nu_1)$ concentrated on $\{(u, v) : u \ge v \text{ coordinatewise}\}$.\footnote{After common normalization, apply the first part with $G$ the coordinatewise order: indicators of closed up-sets are bounded monotone Borel, so $\nu_1 \preceq_{\mathrm{FOSD}} \nu_2$ gives $\nu_1(U) \le \nu_2(U)$ for every closed up-set $U$; $G^{-1}(B)$ is the up-closure of $B$, closed by compactness, so the Hall criterion reads $\nu_1(B) \le \nu_1(G^{-1}(B)) \le \nu_2(G^{-1}(B))$. The converse direction is immediate from the coupling.}
\end{theorem}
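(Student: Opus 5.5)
The necessity direction is immediate. If $\pi \in \Pi(\mu,\nu)$ has $\pi(G) = 1$ and $B \subseteq Y$ is closed, then $\pi$-almost every pair $(x,y)$ with $y \in B$ has $x \in G^{-1}(B)$. Hence
\[
\nu(B) = \pi(X \times B) = \pi\bigl(G \cap (X \times B)\bigr) \le \pi\bigl(G^{-1}(B) \times Y\bigr) = \mu\bigl(G^{-1}(B)\bigr).
\]
The set $G^{-1}(B)$ is closed, as the projection of a compact set, so it is Borel and the display makes sense. For sufficiency I will recast the problem as an optimal-transport problem with the $\{0,1\}$-valued cost $c := \mathbf 1_{(X \times Y) \setminus G}$. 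Since $G$ is closed, $c$ is lower semicontinuous and bounded. Since $\Pi(\mu,\nu)$ is weakly compact on compact metric spaces, the infimum $\inf_{\pi} \int c\, \d\pi = \inf_\pi \pi(G^c)$ is attained. A coupling concentrated on $G$ therefore exists if and only if this minimal value is $0$.

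To show the value is $0$ under the Hall condition, I will invoke Kantorovich duality for bounded lower semicontinuous costs on compact spaces:
\[
\min_{\pi} \pi(G^c) \;=\; \sup\Bigl\{\int \varphi\, \d\mu + \int \psi\, \d\nu \;:\; \varphi(x) + \psi(y) \le c(x,y)\Bigr\}.
\]
Passing to $c$-transforms, I may take $\psi$ upper semicontinuous with values in $[0,1]$ and $\varphi = \psi^c$. Using $c \in \{0,1\}$ and $\psi \in [0,1]$, this gives $\varphi(x) = -\sup\{\psi(y) : (x,y) \in G\}$, and $\varphi(x) = 0$ if no such $y$ exists. The $\{0,1\}$ structure of the cost is what makes this clamping legitimate: any admissible pair can be improved to one of this form without lowering the dual objective.

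Next comes a layer-cake decomposition. Write $B_t := \{\psi \ge t\}$, which is closed by upper semicontinuity. Then $\sup_{G_x} \psi \ge t$ exactly when $x \in G^{-1}(B_t)$, the supremum being attained on the compact fibre. This gives
\[
\int \varphi\, \d\mu + \int \psi\, \d\nu \;=\; \int_0^1 \Bigl[\nu(B_t) - \mu\bigl(G^{-1}(B_t)\bigr)\Bigr]\, \d t \;\le\; 0
\]
by the Hall condition. So the dual value is at most $0$, the primal minimum is $0$, and an optimal coupling satisfies $\pi(G) = 1$.

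The main obstacle is the regularity bookkeeping in this dual step. I must ensure that the supremum may be restricted to upper semicontinuous $\psi$, so that the level sets are closed and the Hall condition, which is stated only for closed $B$, applies. I also need $G^{-1}$ of those level sets to be closed, hence measurable. If the $c$-transform route gives only Borel $\psi$, I would first replace $\psi$ by its upper semicontinuous envelope along the dual-maximizing sequence. Alternatively, I would use inner regularity of $\nu$ to approximate each Borel level set from inside by compact sets $K$, for which $G^{-1}(K)$ is again compact. This recovers the inequality for all Borel level sets. As a fallback, I would instead discretize $X$ and $Y$ by finite partitions, apply the finite Hall (max-flow/min-cut) theorem to the $\varepsilon$-thickened relation, and extract a weak limit of the resulting couplings, which is concentrated on $G$ because $G$ is closed.

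The \emph{in particular} clause then follows as in the footnote. Normalize the two measures to probabilities and take $G$ to be the coordinatewise order $\{(u,v): u \ge v\}$, which is closed. For this $G$, the set $G^{-1}(B)$ is the up-closure of $B$, a closed up-set. The FOSD hypothesis, tested against the indicator of that up-set, gives Hall's inequality. The converse direction is immediate from integrating monotone test functions against the coupling.
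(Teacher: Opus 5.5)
Your necessity argument and your treatment of the \emph{in particular} clause match the paper; the latter is exactly its footnote argument. For the general sufficiency direction the paper gives no proof of its own. It simply cites Strassen and Kellerer. Your reduction to a $\{0,1\}$-cost transport problem, followed by Kantorovich duality and a layer-cake computation, is therefore a genuinely self-contained alternative, close in spirit to Kellerer's duality approach. It buys independence from the classical literature, at the price of the measurability bookkeeping that the citation outsources.

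One claim in that bookkeeping does not hold as stated: $c$-transforms do not produce an upper semicontinuous $\psi$ here. For fixed $x$, the function $y \mapsto c(x,y) - \psi^c(x)$ is only lower semicontinuous, because $G_x$ is closed. An infimum of such functions need not be upper semicontinuous. Your upper-envelope patch also fails: at a boundary point $y$ of a fibre $G_x$ one has $\limsup_{y' \to y} c(x,y') = 1 > 0 = c(x,y)$, so the envelope can violate admissibility.

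Your inner-regularity fallback does work, with one adjustment. Use strict superlevel sets, for which $\{x : \sup_{G_x}\psi > t\} = G^{-1}(\{\psi > t\})$ holds without any attainment; these projections are analytic, hence $\mu$-measurable.

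The cleanest repair, though, is to use the form of Kantorovich duality for lower semicontinuous costs in which the supremum may be restricted to bounded continuous pairs $(\varphi,\psi)$. Then no $c$-transform or clamping is needed:
\begin{enumerate}
\item Hall's condition at $B = Y$ gives $\mu(G^{-1}(Y)) = 1$. This also makes your choice $\varphi = 0$ off $G^{-1}(Y)$ immaterial.
\item On $G^{-1}(Y)$, admissibility gives $\varphi(x) \le -\max_{G_x}\psi$, with the maximum attained on the compact fibre.
\item The sets $B_t := \{\psi \ge t\}$ are closed, and $\{x : \max_{G_x}\psi \ge t\} = G^{-1}(B_t)$.
\item Running the layer-cake identity from any lower bound $m$ of $\psi$ then yields
\[
\int \varphi\,\mathrm{d}\mu + \int \psi\,\mathrm{d}\nu \;\le\; \int_m^{\max\psi} \bigl[\nu(B_t) - \mu\bigl(G^{-1}(B_t)\bigr)\bigr]\,\mathrm{d}t \;\le\; 0 .
\]
\end{enumerate}
With this repair, your argument is a complete proof of the cited result.
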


\begin{lemma}[Transfer and certification]\label{os:align-transfer}
Let the cell be mass-balanced \eqref{eq:MB} and $\varepsilon \in [0,1]$. The dominance $\nu^-_{\ell,\varepsilon} \preceq_{\mathrm{FOSD}} \nu^+_{\ell,\varepsilon}$ holds \emph{if and only if} there is a coupling $\gamma'_\ell$ of $(\mu^+_\ell, \mu^-_\ell)$ with $y^1 - y^2 \in \mathcal{R}_{\ell,\varepsilon}$ for $\gamma'_\ell$-a.e.\ pair. Write $\Gamma_{\ell,\varepsilon}$ for the set of such couplings.
\end{lemma}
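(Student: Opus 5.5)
The plan is to deduce the equivalence from the ``in particular'' clause of \Cref{thm:rel-strassen}, applied to the pushforwards $\nu^\pm_{\ell,\varepsilon} = (Q_\ell^{\varepsilon})_\#\mu^\pm_\ell$. The translation between score space and type space is supplied by \Cref{os:align-gen}(i): $Q_\ell^{\varepsilon}(y^1) \ge Q_\ell^{\varepsilon}(y^2)$ coordinatewise if and only if $y^1 - y^2 \in \mathcal{R}_{\ell,\varepsilon}$. Mass balance \eqref{eq:MB} gives $\mu^+_\ell(C_\ell) = \mu^-_\ell(C_\ell)$, so the two score laws have equal finite mass. They live on the compact set $Q_\ell^{\varepsilon}(\overline{C_\ell}) \subset \R^m$, because $Q_\ell^{\varepsilon}$ is linear and the cell closure is compact.

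\emph{If.} Given $\gamma'_\ell \in \Gamma_{\ell,\varepsilon}$, push it forward by $Q_\ell^{\varepsilon} \times Q_\ell^{\varepsilon}$. The result is a coupling of $(\nu^+_{\ell,\varepsilon}, \nu^-_{\ell,\varepsilon})$, and by \Cref{os:align-gen}(i) it is concentrated on $\{(u,v) : u \ge v\}$. For any bounded Borel coordinatewise nondecreasing $\varphi$ we then have $\varphi(u) \ge \varphi(v)$ on this set. Integrating gives $\int \varphi\,\d\nu^-_{\ell,\varepsilon} \le \int \varphi\,\d\nu^+_{\ell,\varepsilon}$, which is the dominance.

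\emph{Only if.} Strassen yields a coupling $\pi$ of $(\nu^+_{\ell,\varepsilon}, \nu^-_{\ell,\varepsilon})$ concentrated on $\{u \ge v\}$. We must lift $\pi$ back to $C_\ell \times C_\ell$. Disintegrate $\mu^+_\ell$ along $Q_\ell^{\varepsilon}$ as $\mu^+_\ell = \int \kappa^+_u\,\d\nu^+_{\ell,\varepsilon}(u)$, and $\mu^-_\ell$ similarly with kernels $\kappa^-_v$. This is legitimate because all spaces are compact metric and the disintegration theorem applies to finite Borel measures. Then set
\[
\gamma'_\ell := \int \kappa^+_u \otimes \kappa^-_v \,\d\pi(u,v).
\]
Its first marginal is $\int \kappa^+_u\,\d\nu^+_{\ell,\varepsilon} = \mu^+_\ell$, and its second is $\mu^-_\ell$. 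For $\gamma'_\ell$-a.e.\ pair $(y^1,y^2)$ we have $Q_\ell^{\varepsilon}(y^1) = u \ge v = Q_\ell^{\varepsilon}(y^2)$, since the kernels are concentrated on fibers. Hence $y^1 - y^2 \in \mathcal{R}_{\ell,\varepsilon}$ by \Cref{os:align-gen}(i), and the coupling is supported on $\overline{C_\ell} \times \overline{C_\ell}$.

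The main obstacle is this lifting step. We must ensure measurable kernels concentrated on fibers, and handle the fact that $\mu^+_\ell$ mixes an atom (at the origin, only on $C_0$, hence absent on selling cells) with face measures, while $\mu^-_\ell$ is absolutely continuous. An alternative route that avoids disintegration is to apply the general relational form of \Cref{thm:rel-strassen} directly on $X = Y = \overline{C_\ell}$ with $G := \{(y^1,y^2) : y^1 - y^2 \in \mathcal{R}_{\ell,\varepsilon}\}$. This $G$ is closed because $\mathcal{R}_{\ell,\varepsilon}$ is a closed cone. The Hall condition for a closed $B$ reads $\mu^-_\ell(B) \le \mu^+_\ell(G^{-1}(B))$, where $G^{-1}(B) = (Q_\ell^{\varepsilon})^{-1}(\uparrow Q_\ell^{\varepsilon}(B)) \cap \overline{C_\ell}$ is closed. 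This follows from FOSD applied to the indicator of the closed up-set $\uparrow Q_\ell^{\varepsilon}(B)$, since $\mu^-_\ell(B) \le \nu^-_{\ell,\varepsilon}(\uparrow Q_\ell^{\varepsilon}(B)) \le \nu^+_{\ell,\varepsilon}(\uparrow Q_\ell^{\varepsilon}(B))$. I would present this second route, after normalizing both measures to probability measures, as the cleaner proof.
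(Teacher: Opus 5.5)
Your proposal is correct. Your first route is the paper's proof: the monotone (``in particular'') form of \Cref{thm:rel-strassen} on score space, disintegration of $\mu^\pm_\ell$ along the fibers of $Q_\ell^{\varepsilon}$, gluing the kernels through $\pi$, and, for the converse, integrating a monotone $\varphi$ against the pushed-forward coupling via \Cref{os:align-gen}(i).

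The second route, the one you say you would present, is genuinely different and also valid. You apply the general relational form of Strassen directly on $\overline{C_\ell}\times\overline{C_\ell}$, using the pulled-back relation $G=\{(y^1,y^2): y^1-y^2\in\mathcal{R}_{\ell,\varepsilon}\}$. This relation is closed because $\mathcal{R}_{\ell,\varepsilon}$ is a closed cone. The Hall inequality $\mu^-_\ell(B)\le\mu^+_\ell(G^{-1}(B))$ then reduces to FOSD tested on the indicator of $\uparrow Q_\ell^{\varepsilon}(B)$. You should state why that set is closed: $\uparrow Q_\ell^{\varepsilon}(B)=Q_\ell^{\varepsilon}(B)+\R^m_+$, a compact set plus a closed cone.

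In effect this is the paper's own footnote argument for the ``in particular'' clause, run on type space instead of score space. What it buys you is that the disintegration and measurable-gluing step disappears entirely. The paper's factorization through score space buys something else: the argument stays in the coordinates where core regularity is defined, and it uses the same kernel-gluing template that returns in the Efron lift of the two-tier application.

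One small correction: the lifting step is not really an obstacle. Regular conditional kernels exist for any finite Borel measure on a compact metric space. So the mix of face measures and absolutely continuous mass in $\mu^\pm_\ell$ needs no special care, and the origin atom never lies in a selling cell anyway.
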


\begin{proof}
($\Rightarrow$) By \eqref{eq:MB}, $\nu^+_{\ell,\varepsilon}$ and $\nu^-_{\ell,\varepsilon}$ are finite measures of equal total mass supported on the compact set $Q_\ell^{\varepsilon}(\overline{C_\ell}) \subset \R^{m_\ell}$. By the monotone-coupling form of \Cref{thm:rel-strassen} there is a coupling $\pi$ of $(\nu^+_{\ell,\varepsilon}, \nu^-_{\ell,\varepsilon})$ with $u \ge v$ coordinatewise for $\pi$-a.e.\ $(u, v)$. Disintegrate the type-space measures over the score fibers: $\overline{C_\ell}$ is compact metric and $Q_\ell^{\varepsilon}$ continuous, so regular conditional kernels exist (disintegrate the normalized measures and scale back): $\mu^+_\ell(dy^1) = \int K^+(u, dy^1)\, \nu^+_{\ell,\varepsilon}(du)$ with the probability kernel $K^+(u, \cdot)$ concentrated on the fiber $(Q_\ell^{\varepsilon})^{-1}(u)$ for $\nu^+_{\ell,\varepsilon}$-a.e.\ $u$, and similarly $K^-$ for $\mu^-_\ell$. Glue the kernels through $\pi$:
\[
\gamma'_\ell(dy^1, dy^2) \;:=\; \int K^+(u, dy^1)\, K^-(v, dy^2)\;\pi(du, dv) .
\]
The marginals are $\mu^+_\ell$ and $\mu^-_\ell$ by Fubini and the disintegration identities. Off a $\pi$-null set of $(u,v)$, both kernels are fiber-concentrated and $u \ge v$; for such $(u, v)$ the product kernel puts full mass on pairs with $Q_\ell^{\varepsilon}(y^1) = u \ge v = Q_\ell^{\varepsilon}(y^2)$, so this ordering holds for $\gamma'_\ell$-a.e.\ pair. By \Cref{os:align-gen}(i), $y^1 - y^2 \in \mathcal{R}_{\ell,\varepsilon}$, hence $\gamma'_\ell \in \Gamma_{\ell,\varepsilon}$.

($\Leftarrow$) Given $\gamma'_\ell \in \Gamma_{\ell,\varepsilon}$, for every bounded Borel coordinatewise nondecreasing $\varphi$,
\[
\int \varphi\, d\nu^-_{\ell,\varepsilon} = \int \varphi\bigl(Q_\ell^{\varepsilon}(y^2)\bigr)\, d\gamma'_\ell \le \int \varphi\bigl(Q_\ell^{\varepsilon}(y^1)\bigr)\, d\gamma'_\ell = \int \varphi\, d\nu^+_{\ell,\varepsilon},
\]
using $Q_\ell^{\varepsilon}(y^1) \ge Q_\ell^{\varepsilon}(y^2)$ on the support (\Cref{os:align-gen}(i)); this is the stated dominance.
\end{proof}

\subsection{Verification of \texorpdfstring{\Cref{thm:core-bundle}}{the core-bundle theorem}}\label{app:core-L6}

\begin{proof}[Proof of \Cref{thm:core-bundle}]\label{lem:assembly}
By \Cref{lem:mb-foc}, restricted optimality makes every selling cell mass-balanced \eqref{eq:MB}, so each pair $\mu^\pm_\ell$ has equal total mass. Strict core regularity provides $\varepsilon_0 > 0$ with $\nu^-_{\ell,\varepsilon_0} \preceq_{\mathrm{FOSD}} \nu^+_{\ell,\varepsilon_0}$ at every selling cell; by \Cref{os:align-transfer} choose, for each $\ell \ge 1$, a coupling $\gamma'_\ell \in \Gamma_{\ell,\varepsilon_0}$ --- $\alpha$-independent, because the cells, the weights, the unbundling scores, and $\mu^\pm_\ell$ are all $\alpha$-free. On the exclusion cell --- which contains the origin, since at $x = 0$ every option yields $-p_\ell < 0$ (standing positive prices) --- inclusivity makes $\mu^+_0$ the origin atom, and global balance gives it equal mass with $\mu^-_0$, so a coupling $\gamma'_0$ exists (e.g.\ $\delta_0 \otimes \mu^-_0/\mu^-_0(C_0)$).

Set $\alpha^*(C, M) := 1/\varepsilon_0$ and fix $\alpha \ge \alpha^*(C, M)$; since $\varepsilon_0 \le 1$, also $\alpha \ge 1$. Then $1/\alpha \le \varepsilon_0$, so the nesting of \Cref{os:align-gen}(ii) puts every $\gamma'_\ell$ on $\mathcal{R}_{\ell,1/\alpha} \subseteq \mathcal{R}_{\ell,0} = K_\ell^{\mathcal{F}_C}$. \Cref{lem:coupling-freedom} assembles $\gamma' := \sum_{\ell \ge 0} \gamma'_\ell$, which transports $\mu^+$ to $\mu^-$ (i), is supported on within-cell pairs (ii), and satisfies the core-only CS identity on selling cells (iii).

\emph{Case split for $\gamma'$-typical pairs.} By (ii) every $\gamma'$-typical pair lies in a single cell.

\emph{(W) Within a selling cell: $y^1, y^2 \in C_\ell$ ($\ell \ge 1$).} The retention clause at level $1/\alpha$ reads $s_\ell(h) \ge \frac{1}{\alpha}[s_\ell(h) + R_C(h)]$, so \Cref{lem:cone-forcing}(6) gives $c^P(h) \le \alpha\, s_\ell(h)$, and with \Cref{lem:cone-forcing}(4) the normalized full cost collapses: $\tilde c_\alpha(h) = \max\bigl(s_\ell(h),\, c^P(h)/\alpha\bigr) = s_\ell(h) = \tilde u_M(y^1) - \tilde u_M(y^2)$, the last identity by \Cref{lem:coupling-freedom}(iii). This is exactly condition (ii) of \Cref{prop:saddle-D}: every offered bundle contains the core, so the option gradient is $V^\top q^\ell = \alpha\, w_\ell$, and since every $B \in \operatorname{supp}\sigma_\ell$ attains $\sum_{i \in B} h_i = s_\ell(h)$ (\Cref{lem:cone-forcing}(4)), the display $c^P(h) \le \alpha\, s_\ell(h)$ reads $(V^\top q^\ell) \cdot h = \alpha\, s_\ell(h) = c_V(h)$, the pulled-back cost $c_V = \max(\alpha\, c^{\mathcal{F}_C}, c^P) = \alpha\, \tilde c_\alpha$ --- the displacement lies in every bundle cone option $\ell$ may allocate.

\emph{(E) Within the exclusion cell: $y^1, y^2 \in C_0$.} Inclusivity keeps every top face out of $C_0$, so the only $\mu^+$-mass there is the origin atom; hence $y^1 = 0$ and $h = -y^2$. Both parts of the cost vanish, $\tilde c_\alpha(h) = 0$, since $v_B(-y^2) = -v_B(y^2) \le 0$ for every bundle $B$; this matches $\tilde u_M(y^1) - \tilde u_M(y^2) = 0 - 0 = 0$, so complementary slackness holds trivially.

\emph{Conclusion.} Full-cost complementary slackness holds at $\gamma'$-a.e.\ pair --- case (W) on selling cells and trivially on the exclusion cell (E); cross-cell pairs are $\gamma'$-null by construction~(ii). The marginal identity $\gamma'_1 - \gamma'_2 = \mu$ holds by construction~(i). All three conditions of \Cref{prop:saddle-D} therefore hold --- within-cell support~(i), the full-cost CS identity~(ii), and the marginals~(iii) --- so $(u_M, v_\#\gamma')$ is a primal--dual saddle point and $M$ is optimal in the unrestricted problem.
\end{proof}

\clearpage
\pagenumbering{arabic}
\setcounter{page}{1}
\begin{center}
{\Large\bfseries Online Supplement}
\end{center}

\renewcommand{\thesection}{S.\arabic{section}}
\setcounter{section}{0}

\section{Proofs of the saddle-point criteria}\label{os:saddle-proofs}

This section proves the two saddle-point criteria of \Cref{sec:duality} --- \Cref{prop:saddle} directly from weak duality, \Cref{prop:saddle-D} by reduction through the valuation map --- and records when the within-cell-support condition is automatic at optimal pairs.

\begin{proof}[Proof of \Cref{prop:saddle}]
Weak duality gives
\[
\int_Y u\, \d\nu
=
\int_{Y \times Y} \bigl(u(y) - u(y')\bigr)\, \d\gamma
\le
\int_{Y \times Y} c(y, y')\, \d\gamma,
\]
where the identity uses $\gamma_1 - \gamma_2 = \nu$ from condition~(iii). Under conditions~(i) and~(ii), the inequality holds with equality $\gamma$-almost surely: inside each transported linearity cell, the affine piece's quantity $q^\ell$ exactly supports the transported direction and attains the one-sided cost. The display in~(ii) is exact complementary slackness; in the case $q^\ell = 0$ it forces $c(y, y') = 0$, so the transport is weakly downward in bundle values. Hence the candidate's primal value equals the dual cost. Optimality of both sides now follows from weak duality alone. Any primal-feasible $u'$ extends to $\widehat Y$ by its one-sided envelope $\inf_{z \in Y}\{u'(z) + c(\cdot, z)\}$ --- convex with quantities in $\mathcal{Q}$ by the support-function calculus of the strong-duality proof, and equal to $u'$ on $Y$, hence objective-preserving; for the extension, monotone convexity and dual feasibility of any $\gamma'$ give
\[
\int_Y u'\, \d\nu \;\le\; \int \bigl(u'(y) - u'(y')\bigr)\, \d\gamma' \;\le\; \int c(y, y')\, \d\gamma'.
\]
Thus every primal value is at most every dual cost; the pair $(u, \gamma)$ is feasible on both sides with matching values, hence $u$ solves the primal and $\gamma$ solves the dual --- no appeal to \Cref{thm:strong-duality}, and hence no convexity of $Y$, is needed. So $(u, \gamma)$ is a primal-dual saddle point.
\end{proof}

\begin{proof}[Proof of \Cref{prop:saddle-D}]
The criterion reduces to \Cref{prop:saddle} through the valuation map. \emph{Admissibility.} $u_M(y) = \max_\ell\{q^\ell \cdot y - p_\ell\}$ is a maximum of finitely many affine functions with quantities $q^\ell \in \mathcal{Q}$, so it is convex with $\partial u_M(y) = \operatorname{conv}\{q^\ell : \ell \text{ active at } y\} \subseteq \mathcal{Q}$, and the outside option gives $u_M \ge q^0 \cdot y - p_0 = 0$; hence $u_M \in \mathcal{U}$, piecewise linear on the polyhedral cells of the menu. \emph{Cell correspondence.} Under \Cref{assn:linear}, $(V^\top q^\ell) \cdot x - p_\ell = q^\ell \cdot v(x) - p_\ell$, so $U_M = u_M \circ v$ and the maximizing option at $x$ is the maximizing option at $v(x)$: the cells correspond, $C_\ell = v^{-1}(\widetilde C_\ell)$ with $\widetilde C_\ell$ the linearity cells of $u_M$ on $Y$. \emph{The pushed-forward plan.} Let $\widetilde\gamma := (v \times v)_\#\, \gamma \in \Gamma_+(Y \times Y)$. For condition (i): if $x, x'$ lie in the closure of one cell $C_\ell$, then $v(x), v(x')$ lie in the closure of $\widetilde C_\ell$ by continuity of $v$. For condition (ii): the displacement $h := x - x'$ lies in $K_B^{\mathcal{F}}$ for every $B \in \operatorname{supp}\sigma_\ell$, so $V_B \cdot h = c_V(h)$ at each such $B$ by \Cref{lem:bundle-to-cone}; averaging against $\sigma_\ell$ (each option allocates with total probability one) gives the cost identity
\[
c\bigl(v(x), v(x')\bigr) = c_V(h) = (V^\top q^\ell) \cdot h = q^\ell \cdot \bigl(v(x) - v(x')\bigr),
\]
and within the cell the common price cancels, $u_M(v(x)) - u_M(v(x')) = q^\ell \cdot (v(x) - v(x'))$ --- exactly the complementary-slackness display of \Cref{prop:saddle}(ii). The outside option is the case $\ell = 0$: $\operatorname{supp}\sigma_0 = \{\varnothing\}$ with $V_\varnothing = 0$, so $K_\varnothing^{\mathcal{F}} = \{h : c_V(h) = 0\}$ and the same display reads $c_V(h) = 0$. For condition (iii): the marginals push forward, $\widetilde\gamma_1 - \widetilde\gamma_2 = v_\#(\gamma_1 - \gamma_2) = v_\#\mu = \nu$. \Cref{prop:saddle} therefore applies to $(u_M, \widetilde\gamma)$: $u_M$ solves the primal and, by \Cref{lem:revenue-invariance}, the menu $M$ implementing $u_M$ is optimal.
\end{proof}

\begin{remark}[Within-cell support]\label{rmk:within-cell-support}
Condition~(i) is not an independent hypothesis \emph{at optimal pairs}: if $u$ is primal-optimal of the stated piecewise-linear form, then any dual-optimal $\gamma$ with absolutely continuous sink marginal satisfies it. Indeed, when the primal and dual values coincide --- \Cref{thm:strong-duality} --- the chain $\int_Y u\, \d\nu \le \int u\, \d(\gamma_1 - \gamma_2) = \int (u(y) - u(y'))\, \d\gamma \le \int c\, \d\gamma$ collapses to equalities, so the complementary slackness $u(y) - u(y') = c(y, y')$ holds $\gamma$-a.e. Applied to a pair $y \in C_k$, $y' \in C_\ell$ with $u = \max_m(a_m + q^m \cdot {\cdot})$ and $c(y, y') \ge q^k \cdot (y - y')$, it forces $a_k + q^k \cdot y' \ge a_\ell + q^\ell \cdot y'$; since $y' \in C_\ell$ gives the reverse inequality, $y'$ lands on the indifference set $\{(q^k - q^\ell) \cdot z = a_\ell - a_k\}$, which for distinct affine pieces is a proper hyperplane (when $q^k \ne q^\ell$) or empty (when $q^k = q^\ell$, $a_k \ne a_\ell$), hence Lebesgue-null. Hence $\gamma(C_k \times C_\ell) = 0$ for all $k \ne \ell$: $\gamma$ is supported on within-cell pairs. Primal optimality of $u$ is essential --- for a suboptimal candidate the chain stays strict and no support restriction follows. The applied version is the one for \Cref{prop:saddle-D} in the coordinates of $D$, where the sink marginal $\mu^-$ is absolutely continuous with respect to $\lambda_D$ and the argument runs verbatim, with the exclusion cell as the case $\ell = 0$, $(q^0, a_0) = (0, 0)$.
\end{remark}

\section{The single-cell certificate}\label{os:single-cell}

This section constructs the \emph{single-cell couplings} that the applications of \Cref{sec:apps} convert into strict core regularity through the certification direction of \Cref{os:align-transfer}. The setting is menu-free: a single core block carrying a density, and a single level $q$ playing the role of a crossing point. Throughout, $C$ is a nonempty index set, $D_C := \prod_{i \in C} [0, \bar v_i]$, $t(\omega) := \sum_{i \in C} \omega_i$ is the core sum with corner value $\bar t_C := \sum_{i \in C} \bar v_i$ and lowest top $\underline t_C := \min_{i \in C} \bar v_i$, and $F_c^C := \{\omega \in D_C : \omega_c = \bar v_c\}$ are the core top faces.

\begin{definition}[Admissible core density]\label{def:adm-core}
A density $\rho$ on $D_C$ is \emph{admissible} if:
\textnormal{(i)} $\rho \in C^1(\operatorname{int} D_C)$, $\rho > 0$ on $\operatorname{int} D_C$, and $\rho$ together with each $\omega_c\, \partial_c \rho$ ($c \in C$) is integrable on $D_C$ --- hence each $\partial_c(\omega_c \rho)$ and $\Phi_\rho\, \rho := \sum_c \partial_c(\omega_c \rho)$ are integrable, where $\Phi_\rho(\omega) := |C| + \omega \cdot \nabla \log \rho(\omega)$;
\textnormal{(ii)} $\rho$ extends continuously to $D_C$ off the lower faces $\{\omega_c = 0\}$, with integrable top-face traces, and $\omega_c\, \rho(\omega) \to 0$ as $\omega_c \downarrow 0$ for each $c \in C$;
\textnormal{(iii)} the law of $t$ under $\rho\,d\omega$ has a continuous, strictly positive density $\gamma$ on $(0, \bar t_C)$ with $q\,\gamma(q) \to 0$ as $q \downarrow 0$, and for $|C| \ge 2$ each top-face trace law of $t$ is atomless;
\textnormal{(iv)} $\rho$ is continuous and strictly positive on a neighborhood of the top corner $(\bar v_c)_{c \in C}$, and $\Phi_\rho\, \rho$ is bounded on that neighborhood.
\end{definition}

\noindent The standing baseline at $C = [N]$ ($\rho = f$, with $f \in C^2(D)$, $f > 0$; the standing form) is admissible --- every clause holds with room to spare on a compact box --- and so is the iid $\mathrm{Beta}(\beta, 1)$ family for every $\beta > 0$: there $\Phi_\rho \equiv |C|\beta$ is constant, $\omega_c \partial_c \rho = (\beta - 1)\rho \in L^1$, the singularities sit only on the lower faces, the core-sum density is continuous and positive on $(0, \bar t_C)$ with $q\,\gamma(q) \asymp q^{|C|\beta} \to 0$ (continuity and positivity follow by induction on $|C|$: convolving with $\beta x^{\beta-1}$ preserves both, by dominated convergence across the integrable endpoint singularity), and the face trace laws are absolutely continuous. Independent core blocks with admissible structure are instantiated where used (\Cref{sec:apps-twotier}). Write $\bar F(r) := \int_{\{t \ge r\}} \rho\, d\omega$ for the (unnormalized) survivor of the core sum, $h(r) := \gamma(r)/\bar F(r)$ for its hazard on $(0, \bar t_C)$, and
\[
H(q) := q\, h(q) = \frac{q\, \gamma(q)}{\bar F(q)}, \qquad H(0) := 0 ,
\]
continuous on $[0, \bar t_C)$ by \Cref{def:adm-core}(iii). The objects of this section, for a level $q \in [0, \underline t_C)$, are the \emph{certificate measures}
\[
\nu^+_q := \sum_{c \in C} \bar v_c\, \rho\big|_{F_c^C} \quad \text{on } F_c^C \cap \{t \ge q\}, \qquad
\nu^-_q := \bigl(\Phi_\rho + H(q)\bigr)\rho \,d\omega \quad \text{on } \Omega_q := \{\omega \in D_C : t(\omega) \ge q\},
\]
and the \emph{core wedge} at level $\kappa$, $W_\kappa := \{h \in \R^C : \sum_i h_i^- \le \kappa \sum_i h_i^+\}$.

\begin{lemma}[Divergence identity]\label{os:sc-div}
For admissible $\rho$ and every $r \in (0, \bar t_C)$,
\[
\int_{\{t \ge r\}} \Phi_\rho\, \rho\, d\omega \;=\; B(r) - r\,\gamma(r), \qquad B(r) := \sum_{c \in C} \int_{F_c^C \cap \{t \ge r\}} \bar v_c\, \rho\, d\sigma_c ,
\]
with $\sigma_c$ the surface (Lebesgue) measure on $F_c^C$; at $r = 0$ the identity holds with $r\gamma(r)$ replaced by $0$. In particular, taking $r = q$ and adding $H(q)\bar F(q) = q\,\gamma(q)$: the certificate measures have equal total mass, $\nu^-_q(\Omega_q) = B(q) = \nu^+_q(\Omega_q)$.
\end{lemma}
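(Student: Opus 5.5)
The plan is to apply the divergence theorem to the vector field $\omega\rho(\omega)$ on the superlevel region $\Omega_r := \{\omega \in D_C : t(\omega) \ge r\}$. Its divergence is $\sum_c \partial_c(\omega_c\rho) = \Phi_\rho\,\rho$, so the left side is the outward flux of $\omega\rho$ through $\partial\Omega_r$. That boundary consists of three pieces.

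\emph{The top faces} $F_c^C \cap \{t \ge r\}$ have outward normal $e_c$. The flux density there is $\omega_c\rho = \bar v_c\,\rho$, and summing over $c$ gives exactly $B(r)$.

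\emph{The lower faces} $\{\omega_c = 0\} \cap \Omega_r$ have normal $-e_c$, so the flux density is $-\omega_c\rho$. By \Cref{def:adm-core}(ii) this tends to $0$, so these faces contribute nothing.

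\emph{The oblique cut} $\{t = r\}\cap D_C$ has outward normal $-\mathbf 1/\sqrt{|C|}$. The flux density is $-(\omega\cdot\mathbf 1)\rho/\sqrt{|C|} = -r\rho/\sqrt{|C|}$. By the coarea formula, $\int_{\{t=r\}}\rho\,d\mathcal H^{|C|-1} = \sqrt{|C|}\,\gamma(r)$, so the $\sqrt{|C|}$ factors cancel and the cut contributes $-r\gamma(r)$.

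Adding the three pieces gives $B(r) - r\gamma(r)$. At $r = 0$ the cut degenerates to the origin corner, and its contribution is the limit $r\gamma(r)\to 0$ from \Cref{def:adm-core}(iii).

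The main obstacle is regularity. Here $\rho$ is only $C^1$ in the interior and may be singular on the lower faces, so the divergence theorem cannot be applied directly on $\Omega_r$. I would first apply it on truncated regions $\Omega_r^\delta := \Omega_r \cap \{\omega_c \ge \delta\ \forall c\}$, where $\rho$ is $C^1$ up to the boundary except possibly at the top faces, and use the continuous extension in (ii) there. I would then let $\delta\downarrow 0$:
\begin{itemize}
\item The volume term converges by dominated convergence, since $\Phi_\rho\rho\in L^1$ by (i).
\item The top-face terms converge because the traces are integrable.
\item The cut term on $\{t=r\}\cap\{\omega\ge\delta\}$ converges to $-r\gamma(r)$ by the coarea identity together with continuity of $\gamma$ from (iii).
\item The artificial faces $\{\omega_c=\delta\}$ contribute $\delta\int\rho\,d\sigma$, which tends to $0$ along a suitable sequence $\delta_k$. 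Such a sequence exists because $\omega_c\rho\to0$, combined with integrability of $\rho$: $\int_0^\eta \delta\int_{\{\omega_c=\delta\}}\rho\,d\sigma\,\frac{d\delta}{\delta}<\infty$ forces the liminf to vanish.
\end{itemize}
Boundary pieces where $\{t=r\}$ meets the top faces have lower dimension and carry no flux. The atomless trace condition for $|C|\ge2$ ensures the top-face term $B(r)$ depends continuously on $r$, so the identity is insensitive to whether the cut is closed or open.

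Finally, for the "in particular": take $r=q$ and note $\int_{\Omega_q}H(q)\rho = H(q)\bar F(q) = q\gamma(q)$. Then $\nu^-_q(\Omega_q) = B(q) - q\gamma(q) + q\gamma(q) = B(q)$, which is by definition the total mass of $\nu^+_q$. The case $q=0$ uses $H(0)=0$ together with the $r=0$ version of the identity.
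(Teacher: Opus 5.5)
Your route and the paper's compute the same flux balance for the field $\omega\rho$, but by different means. The paper never invokes a multidimensional divergence theorem. It integrates $\partial_c(\omega_c\rho)$ along each $\omega_c$-fiber $[a_c,\bar v_c]$ of $\{t\ge r\}$ by the one-dimensional fundamental theorem of calculus, and kills the lower endpoint $a_c=0$ with the clause $\omega_c\rho\to0$ of \Cref{def:adm-core}(ii). It then sums the surviving endpoint terms $a_c\,\rho(a_c,\omega_{-c})$ over $c$, by Fubini on the graph parametrizations of the cut, into $r$ times the cut integral of $\rho$.

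You instead apply Gauss--Green on the truncated polytopes $\Omega_r^\delta$, and remove the artificial faces with the averaging bound $\int_0^\eta\phi(\delta)\,d\delta/\delta=\int_{\{\omega_c\le\eta\}}\rho<\infty$. That device needs only $\rho\in L^1$, so your appeal to $\omega_c\rho\to0$ is actually superfluous. This is a mild gain in generality over the paper's fiberwise argument. The cost is a Gauss--Green theorem for a field that is merely continuous on the top faces. That is routine on a convex polytope: shrink homothetically and use uniform continuity of $\rho$ on $\{\omega\ge\delta\}$.

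The one step whose justification does not work is the evaluation of the cut term. As $\delta\downarrow0$ it converges, by monotone convergence, to $-r\,S(r)$ with $S(r):=|C|^{-1/2}\int_{\{t=r\}}\rho\,d\mathcal H^{|C|-1}$. The coarea formula identifies $S$ with $\gamma$ only for almost every $r$. Continuity of $\gamma$ says nothing about the value of $S$ at the particular level you fixed.

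The repair uses pieces you already have:
\begin{enumerate}
\item Your limit gives $\int_{\{t\ge r\}}\Phi_\rho\rho\,d\omega=B(r)-rS(r)$ for every $r\in(0,\bar t_C)$, so in particular $S(r)<\infty$.
\item The left side is continuous in $r$, because $\Phi_\rho\rho\in L^1$ and the hyperplanes $\{t=r\}$ are Lebesgue-null.
\item $B$ is continuous: constant for $|C|=1$, and by the atomless face traces for $|C|\ge2$. This, rather than the closed-versus-open issue, is what that hypothesis is for.
\item Hence $S$ is continuous on $(0,\bar t_C)$. Agreeing a.e.\ with the continuous $\gamma$, it equals $\gamma$ everywhere.
\end{enumerate}
This a.e.-to-everywhere upgrade is precisely how the paper closes its own proof.

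With it in place, your $r=0$ case goes through: let $r\downarrow0$ using $B(r)=B(0)$ for $r\le\underline t_C$ and $r\gamma(r)\to0$, or simply note that $\Omega_0=D_C$ has no cut at all. The mass identity $\nu^-_q(\Omega_q)=B(q)=\nu^+_q(\Omega_q)$ is then correct as you wrote it.
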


\begin{proof}
Coordinatewise integration by parts. Fix $c \in C$ and $\omega_{-c}$ with $t_{-c}(\omega_{-c}) := \sum_{i \ne c} \omega_i$; the fiber $\{\omega_c : (\omega_c, \omega_{-c}) \in \Omega_r\}$ is $[a_c, \bar v_c]$ with $a_c := \max\{0,\, r - t_{-c}(\omega_{-c})\}$ (empty if $a_c > \bar v_c$). On a nonempty fiber, $\omega_c \mapsto \omega_c \rho$ is $C^1$ on the interior with endpoint limits given by \Cref{def:adm-core}(ii), so
\[
\int_{a_c}^{\bar v_c} \partial_c\bigl(\omega_c\, \rho\bigr)\, d\omega_c \;=\; \bar v_c\, \rho(\bar v_c, \omega_{-c}) \;-\; a_c\, \rho(a_c, \omega_{-c}),
\]
the lower term vanishing when $a_c = 0$ by (ii). Integrating over $\omega_{-c}$ and summing over $c$: the left side is $\int_{\Omega_r} \sum_c \partial_c(\omega_c \rho) = \int_{\Omega_r} \Phi_\rho\, \rho$; the first right-hand terms sum to $B(r)$ (the interchange of sum and integral is licensed by \Cref{def:adm-core}(i), each $\partial_c(\omega_c\rho) \in L^1$); and the subtracted terms sum, by Fubini on the graph parametrizations of the cut $\{t = r\}$, to $\int \bigl(\sum_c \omega_c\bigr)\rho$ over the cut, i.e.\ $r\,\gamma(r)$ (each $c$-term integrates $\omega_c \rho$ over the same cut, and $\gamma(r)$ is the cut integral of $\rho$ in the graph variables). The Fubini identity holds for a.e.\ $r$ a priori; both sides are continuous in $r$ on $(0, \bar t_C)$ ($\gamma$ by \Cref{def:adm-core}(iii), $B$ by the atomless face trace laws --- trivially constant for $|C| = 1$ --- and the left side because $\Phi_\rho \rho\,d\omega$ is absolutely continuous), so it holds at every $r$. The mass identity follows since $\nu^-_q(\Omega_q) = \int_{\Omega_q} \Phi_\rho \rho + H(q)\bar F(q) = B(q) - q\gamma(q) + q\gamma(q)$.
\end{proof}

\begin{lemma}[Gap identity and regularity]\label{os:sc-gap}
For admissible $\rho$, $q \in [0, \underline t_C)$, and $r \in [q, \bar t_C)$, define the tails $I_q(r) := \nu^-_q(\{t \ge r\})$ and $B(r) := \nu^+_q(\{t \ge r\})$ (consistent with \Cref{os:sc-div} since all face mass has $t \ge \underline t_C > q$). Then
\[
\Delta_q(r) \;:=\; B(r) - I_q(r) \;=\; r\,\gamma(r) - H(q)\,\bar F(r) \;=\; \bar F(r)\,\bigl[H(r) - H(q)\bigr] \quad (q \le r < \bar t_C).
\]
If $\rho$ is \emph{regular above $q$} --- $H(r) > H(q)$ for every $r \in (q, \bar t_C)$ --- then $\Delta_q \ge 0$ on $[q, \bar t_C)$, strictly positive on $(q, \bar t_C)$, with a strictly positive minimum on every compact subinterval of $(q, \bar t_C)$.
\end{lemma}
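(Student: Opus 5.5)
The plan is to compute the two tails directly from their definitions and then read off the sign. The gap identity itself should be pure bookkeeping using \Cref{os:sc-div}. The regularity consequences then follow from positivity of $\bar F$ and continuity of $H$.

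\emph{Tails.} For $r \in [q, \bar t_C)$, since $\{t \ge r\} \subseteq \Omega_q$, the definition of $\nu^-_q$ gives
\[
I_q(r) = \int_{\{t \ge r\}} \Phi_\rho\,\rho\, d\omega + H(q)\,\bar F(r).
\]
Applying \Cref{os:sc-div} at level $r$ replaces the first term by $B(r) - r\gamma(r)$. For the case $r = 0$ (possible only if $q = 0$), the term $r\gamma(r)$ is replaced by $0$, which is consistent with $q\gamma(q) \to 0$ and the convention $H(0) = 0$.

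On the face side, $\nu^+_q(\{t \ge r\})$ is exactly the $B(r)$ of \Cref{os:sc-div}. This holds because $\nu^+_q$ is the face measure restricted to $\{t \ge q\}$ and $r \ge q$. Note also that every face point satisfies $t \ge \bar v_c \ge \underline t_C > q$, so the restriction to $\{t\ge q\}$ removes nothing.

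Subtracting the two tails gives $\Delta_q(r) = r\gamma(r) - H(q)\bar F(r)$. Since $r\gamma(r) = H(r)\bar F(r)$ by the definition of $H$ on $(0,\bar t_C)$, this equals $\bar F(r)[H(r) - H(q)]$; at $r=0$ both sides vanish.

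\emph{Sign.} I would first check that $\bar F(r) > 0$ for all $r < \bar t_C$. This follows from \Cref{def:adm-core}(iii): $\gamma$ is strictly positive on $(0,\bar t_C)$, so the law of $t$ puts positive mass on $(r, \bar t_C)$.

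Under regularity above $q$, the identity then gives $\Delta_q(r) > 0$ for every $r \in (q, \bar t_C)$. At $r = q$ the bracket vanishes, so $\Delta_q(q) = 0$, and hence $\Delta_q \ge 0$ on $[q, \bar t_C)$.

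For the uniform bound, $\Delta_q = \bar F\cdot(H - H(q))$ is continuous on $[q,\bar t_C)$:
\begin{itemize}
\item $H$ is continuous by \Cref{def:adm-core}(iii);
\item $\bar F$ is continuous because the law of $t$ has a density.
\end{itemize}
A continuous, strictly positive function on a compact subinterval of $(q, \bar t_C)$ attains a strictly positive minimum.

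\emph{Main obstacle.} The only delicate point is the endpoint case $q = 0$, where the identity of \Cref{os:sc-div} is stated at $r=0$ with a modified term. There I would verify continuity of $H$ at $0$ via $q\gamma(q)\to 0$ and $\bar F(0) > 0$, so that the formula extends to $r = 0$. Everything else is routine.
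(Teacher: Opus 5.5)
Your proposal is correct and follows the paper's proof essentially step for step. You substitute the divergence identity at level $r$ into the sink tail $I_q(r)$, subtract it from the face tail $B(r)$, rewrite $r\gamma(r) = H(r)\bar F(r)$, and then use $\bar F > 0$ below $\bar t_C$ together with continuity of $\Delta_q$ to obtain the sign and the positive minimum on compact subintervals.
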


\begin{proof}
By \Cref{os:sc-div}, $I_q(r) = \bigl[B(r) - r\gamma(r)\bigr] + H(q)\bar F(r)$, giving the identity. Under regularity above $q$, $\Delta_q(r) = \bar F(r)[H(r) - H(q)] > 0$ for $r \in (q, \bar t_C)$ since $\bar F(r) > 0$ there; at $r = q$ it vanishes. $\Delta_q$ is continuous on $[q, \bar t_C)$ ($\gamma$ and $\bar F$ continuous by \Cref{def:adm-core}(iii)), so on a compact subinterval of $(q, \bar t_C)$ the strictly positive continuous function attains a strictly positive minimum.
\end{proof}

\begin{lemma}[Shifted-tail inequality]\label{os:sc-shift}
Let $|C| \ge 2$, let $\rho$ be admissible and regular above $q \in [0, \underline t_C)$ with $\Phi_\rho + H(q) \ge 0$ on $\Omega_q$. Then there is $\lambda \in (0, 1/2]$ such that
\[
I_q(r) \;\le\; B\bigl(T_\lambda(r)\bigr) \qquad \text{for every } r \in [q, \bar t_C], \qquad T_\lambda(r) := (1-\lambda)\,r + \lambda\, \bar t_C .
\]
\end{lemma}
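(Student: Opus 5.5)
The plan is to compare the two tails through the gap function of \Cref{os:sc-gap}. Since $T_\lambda(r) \ge r$ and $B$ is nonincreasing, we have $B(T_\lambda(r)) \le B(r) = I_q(r) + \Delta_q(r)$. The required inequality therefore says that the face mass lost by shifting the argument from $r$ to $T_\lambda(r)$,
\[
B(r) - B\bigl(T_\lambda(r)\bigr) \;=\; \nu^+_q\bigl(\{r \le t < T_\lambda(r)\}\bigr),
\]
must be absorbed by the gap $\Delta_q(r) = \bar F(r)[H(r) - H(q)]$. The gap vanishes at both ends of $[q, \bar t_C]$: at $r = q$ by \Cref{os:sc-gap}, and at $\bar t_C$ because everything vanishes there. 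So I would split $[q, \bar t_C]$ into a bottom piece $[q, r_1]$, a compact middle piece $[r_1, \bar t_C - \delta]$, and a corner piece $[\bar t_C - \delta, \bar t_C]$, and pick $\lambda$ at the end as the minimum of the constraints from the three pieces and $1/2$.

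\emph{Bottom piece.} Every face $F_c^C$ lies in $\{t \ge \bar v_c\} \subseteq \{t \ge \underline t_C\}$, so $B$ is constant, equal to the total face mass $B(q)$, on $[0, \underline t_C]$. Fix $r_1 \in (q, \underline t_C)$ and take $\lambda$ small enough that $T_\lambda(r_1) \le \underline t_C$. Since $T_\lambda$ is increasing, $B(T_\lambda(r)) = B(q)$ for every $r \in [q, r_1]$. On the other hand the hypothesis $\Phi_\rho + H(q) \ge 0$ on $\Omega_q$ makes $I_q$ nonincreasing, so $I_q(r) \le I_q(q) = B(q)$, using the mass identity of \Cref{os:sc-div}. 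Hence the inequality holds on $[q, r_1]$.

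\emph{Corner piece.} This is where $|C| \ge 2$ and \Cref{def:adm-core}(iv) enter, and I expect it to be the main obstacle, since here both sides tend to zero and must be compared by rate. Near the top corner, $(\Phi_\rho + H(q))\rho$ is bounded by some $M$, so $I_q(r) \le M \cdot \mathrm{vol}\{t \ge r\} = O\bigl((\bar t_C - r)^{|C|}\bigr)$. Because $\rho$ is continuous and bounded below by some $c > 0$ near the corner, each face slab $F_c^C \cap \{t \ge s\}$ is an $(|C|-1)$-simplex of size $\bar t_C - s$, and so $B(s) \ge c'(\bar t_C - s)^{|C|-1}$ for $s$ near $\bar t_C$. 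Since $\bar t_C - T_\lambda(r) = (1-\lambda)(\bar t_C - r) \ge (\bar t_C - r)/2$ for every $\lambda \le 1/2$, this gives $B(T_\lambda(r)) \ge c''(\bar t_C - r)^{|C|-1}$. The ratio $I_q(r)/B(T_\lambda(r))$ is then $O(\bar t_C - r) \to 0$, so there is a $\delta > 0$, uniform in $\lambda \le 1/2$, for which the inequality holds on $[\bar t_C - \delta, \bar t_C]$. Both sides vanish at $r = \bar t_C$. I would check the simplex-volume lower bound carefully, because it is the only genuinely dimension-dependent estimate in the argument.

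\emph{Middle piece.} On the compact interval $[r_1, \bar t_C - \delta] \subset (q, \bar t_C)$, \Cref{os:sc-gap} gives $\Delta_q \ge m > 0$. The function $B$ is continuous on $[q, \bar t_C]$ because the face trace laws are atomless (\Cref{def:adm-core}(iii)), hence uniformly continuous there. Moreover $T_\lambda(r) - r = \lambda(\bar t_C - r) \le \lambda \bar t_C$. So choosing $\lambda$ small makes $B(r) - B(T_\lambda(r)) < m \le \Delta_q(r)$ uniformly on this interval, which is exactly $I_q(r) \le B(T_\lambda(r))$ there.

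Taking $\lambda \in (0, 1/2]$ to satisfy the bottom-piece and middle-piece constraints simultaneously, with the $\delta$ of the corner piece fixed in advance, yields the lemma.
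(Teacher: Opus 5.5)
Your proposal is correct and follows essentially the same three-band argument as the paper. On the bottom band it uses constancy of $B$ below $\underline t_C$ together with monotonicity of $I_q$ and the mass identity; on the middle band, the positive gap minimum from \Cref{os:sc-gap} with uniform continuity of $B$ from atomless face traces; and on the corner band, an $O(s)$ comparison of corner-simplex volumes that is uniform in $\lambda \le 1/2$. The only differences are cosmetic: the paper fixes the bottom cutoff at $q_0 = (q+\underline t_C)/2$, and it calibrates the middle-band $\lambda$ by $\bar t_C - q_0$ rather than by $\bar t_C$.
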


\begin{proof}
Three bands, split at $q_0 := (q + \underline t_C)/2 \in (q, \underline t_C)$ and at a corner cutoff $\bar t_C - r_0$.

\emph{Corner band.} By \Cref{def:adm-core}(iv) pick a corner radius $u_0 \in (0, \min_i \bar v_i]$ and constants $0 < \underline\rho \le \bar\rho_{\mathrm{loc}} < \infty$ with $\underline\rho \le \rho \le \bar\rho_{\mathrm{loc}}$ and $\Phi_\rho\rho \le \bar c$ on $D_C \cap \prod_i [\bar v_i - u_0, \bar v_i]$. For $s \in (0, u_0]$, so that $s \le \min_i \bar v_i$, the region $\{t \ge \bar t_C - s\}$ is the full corner simplex $\{\sum_i (\bar v_i - \omega_i) \le s\}$, of volume $s^{|C|}/|C|!$, so
\[
I_q(\bar t_C - s) \;\le\; \bigl(\bar c + H(q)\,\bar\rho_{\mathrm{loc}}\bigr)\, \frac{s^{|C|}}{|C|!}\,;
\]
on each face $F_c^C$ the set $\{t \ge \bar t_C - u\}$ is the face corner simplex of volume $u^{|C|-1}/(|C|-1)!$, so for any $\lambda \le 1/2$, $T_\lambda(\bar t_C - s) = \bar t_C - (1-\lambda)s \le \bar t_C - s/2$ and
\[
B\bigl(T_\lambda(\bar t_C - s)\bigr) \;\ge\; B\bigl(\bar t_C - s/2\bigr) \;\ge\; \Bigl(\sum_c \bar v_c\Bigr)\, \underline\rho\; \frac{(s/2)^{|C|-1}}{(|C|-1)!}\,.
\]
The ratio of the bounds is $O(s)$; choose $r_0 \in (0, u_0]$ so that $I_q(\bar t_C - s) \le B(T_\lambda(\bar t_C - s))$ for all $s \le r_0$ and all $\lambda \le 1/2$.

\emph{Low band $[q, q_0]$.} Set $\lambda_{\mathrm{low}} := (\underline t_C - q_0)/(\bar t_C - q_0) > 0$. For $\lambda \le \lambda_{\mathrm{low}}$ and $r \le q_0$, $T_\lambda(r) \le T_\lambda(q_0) \le \underline t_C$, where $B$ is constant at its total value (every face point has $t \ge \underline t_C$): $B(T_\lambda(r)) = B(q) = I_q(q) \ge I_q(r)$, by the mass identity of \Cref{os:sc-div} and monotonicity of $I_q$ (sinks nonnegative).

\emph{Middle band $[q_0, \bar t_C - r_0]$} (empty if $q_0 > \bar t_C - r_0$). By \Cref{os:sc-gap} the gap has a strictly positive minimum $m$ on the band. For $|C| \ge 2$ the face trace laws of $t$ are atomless (\Cref{def:adm-core}(iii)), so $B$ is continuous, hence uniformly continuous on the compact $[0, \bar t_C]$; pick $\delta > 0$ with $B(r) - B(r') \le m/2$ whenever $0 \le r' - r \le \delta$, and set $\lambda_{\mathrm{mid}} := \delta/(\bar t_C - q_0)$. For $\lambda \le \lambda_{\mathrm{mid}}$ and $r$ in the band, $T_\lambda(r) - r = \lambda(\bar t_C - r) \le \delta$, so
\[
B\bigl(T_\lambda(r)\bigr) \;\ge\; B(r) - \tfrac{m}{2} \;=\; I_q(r) + \Delta_q(r) - \tfrac{m}{2} \;\ge\; I_q(r) + \tfrac{m}{2} \;>\; I_q(r).
\]

Take $\lambda := \min\{\lambda_{\mathrm{low}}, \lambda_{\mathrm{mid}}, 1/2\}$ (omitting $\lambda_{\mathrm{mid}}$ when the middle band is empty); the bands cover $[q, \bar t_C]$.
\end{proof}

\begin{theorem}[Single-cell certificate]\label{os:sc-cert}
Let $\rho$ be admissible and let $q \in [0, \underline t_C)$ satisfy: $\rho$ is regular above $q$, and $\Phi_\rho + H(q) \ge 0$ on $\Omega_q$. Then there exists $\kappa_{\mathrm{pb}}(\rho, q) \in [0, 1)$ such that for every $\kappa \ge \kappa_{\mathrm{pb}}(\rho, q)$ there is a coupling of $(\nu^+_q, \nu^-_q)$ concentrated on $\{(y, x) : y - x \in W_\kappa\}$. Quantitatively, $\kappa_{\mathrm{pb}}(\rho, q) \le 1 - \lambda$ with $\lambda$ from \Cref{os:sc-shift}; and $\kappa_{\mathrm{pb}}(\rho, q) = 0$ when $|C| = 1$.
\end{theorem}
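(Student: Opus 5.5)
The plan is to reduce everything to a one-dimensional comparison of core sums. I would couple each sink $x \in \Omega_q$ to a face source $y$ whose core sum $t(y)$ clears the shifted level $T_\lambda(t(x))$, and then check that any such pair lies in the wedge $W_{1-\lambda}$. Since $W_\kappa$ grows with $\kappa$, a coupling concentrated on $W_{1-\lambda}$ is also concentrated on $W_\kappa$ for every $\kappa \ge 1-\lambda$. So it suffices to set $\kappa_{\mathrm{pb}}(\rho,q) := 1-\lambda$ for $|C|\ge 2$, and $\kappa_{\mathrm{pb}}(\rho,q) := 0$ for $|C|=1$.

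\emph{The case $|C| = 1$.} Here $\nu^+_q$ is an atom at $\bar v_c$, of mass $B(q)$. Because $\Phi_\rho + H(q) \ge 0$, the measure $\nu^-_q$ is nonnegative on $[q,\bar v_c]$, and by \Cref{os:sc-div} it has the same total mass. The product coupling $\delta_{\bar v_c}\otimes \nu^-_q/B(q)$ therefore works. Every displacement $h = \bar v_c - x$ is nonnegative, so $h^- = 0$ and $h \in W_0$.

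\emph{Wedge membership for $|C|\ge 2$.} Take $\lambda$ from \Cref{os:sc-shift}. Let $y$ be a face point and $x \in \Omega_q$ with $t(y) \ge T_\lambda(t(x))$, and write $h := y - x$. Since $y_i \le \bar v_i$, we have $h_i^+ \le \bar v_i - x_i$. Summing over $i$ gives $\sum_i h_i^+ \le \bar t_C - t(x)$. On the other hand,
\[
\sum_i h_i^+ - \sum_i h_i^- \;=\; t(y)-t(x) \;\ge\; \lambda\bigl(\bar t_C - t(x)\bigr) \;\ge\; \lambda \sum_i h_i^+ .
\]
Rearranging yields $\sum_i h_i^- \le (1-\lambda)\sum_i h_i^+$, which is exactly $h \in W_{1-\lambda}$.

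\emph{Existence of the coupling.} I would project both measures by the core sum. Let $\eta^- := (T_\lambda\circ t)_\#\nu^-_q$ and $\eta^+ := t_\#\nu^+_q$. Both are finite nonnegative measures on $[0,\bar t_C]$, with equal mass by \Cref{os:sc-div}. Since $T_\lambda$ is strictly increasing, the tail of $\eta^-$ at level $T_\lambda(r)$ equals $I_q(r)$. Hence the shifted-tail inequality $I_q(r) \le B(T_\lambda(r))$ of \Cref{os:sc-shift} is precisely $\eta^- \preceq_{\mathrm{FOSD}} \eta^+$ on the line. The monotone form of \Cref{thm:rel-strassen} then gives a coupling $\pi$ of $(\eta^+,\eta^-)$ concentrated on $\{u \ge v\}$; the quantile coupling would also do. Finally I would lift $\pi$ to type space exactly as in the ($\Rightarrow$) part of \Cref{os:align-transfer}: disintegrate $\nu^+_q$ over the fibers of $t$ and $\nu^-_q$ over the fibers of $T_\lambda\circ t$, then glue the two kernels through $\pi$. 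The resulting plan has marginals $\nu^+_q$ and $\nu^-_q$, and almost every pair satisfies $t(y) \ge T_\lambda(t(x))$. By the previous step it is therefore concentrated on $W_{1-\lambda}$.

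\emph{Main obstacle.} The analytic weight sits in the shifted-tail inequality. In particular, the corner band requires the source face mass to dominate the sink mass near $\bar t_C$, and this uses $|C|\ge 2$. That inequality is already established in \Cref{os:sc-shift}, so the remaining work is measure-theoretic bookkeeping. Two points need care. First, $\nu^-_q$ must be a genuine nonnegative measure, which is supplied by the hypothesis $\Phi_\rho + H(q)\ge 0$. Second, the disintegration kernels must exist on the compact set $\Omega_q$, with the face traces integrable as guaranteed by \Cref{def:adm-core}.
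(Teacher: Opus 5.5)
Your proposal is correct and follows the paper's proof in its essentials. Both use the product coupling for $|C|=1$. For $|C|\ge 2$, both obtain from \Cref{os:sc-shift} and \Cref{thm:rel-strassen} a coupling concentrated on the relation $t(y)\ge T_\lambda(t(x))$ and then check wedge membership. The execution differs in two small ways: the paper verifies the Hall condition directly on type space instead of projecting to the line and regluing fibers, and it bounds $\sum_i h_i^-\le \bar t_C-t(y)$ where you bound $\sum_i h_i^+\le \bar t_C-t(x)$, which is slightly cleaner.

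The one substantive difference is the definition of $\kappa_{\mathrm{pb}}$. The paper defines $\kappa_{\mathrm{pb}}(\rho,q)$ canonically as the least admissible level and proves the infimum is attained by a weak-$*$ compactness/Portmanteau argument. As a result $\kappa_{\mathrm{pb}}\le 1-\lambda$ holds for every valid $\lambda$, which is what \Cref{os:sc-unif} later relies on. Your choice $\kappa_{\mathrm{pb}}:=1-\lambda$ proves the statement as written, but it depends on the particular $\lambda$ selected.
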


\begin{proof}
\emph{Case $|C| = 1$.} All source mass sits at the single top point $\bar v$, all sinks lie below it, and the masses agree by \Cref{os:sc-div}; the product coupling $\delta_{\bar v} \otimes \nu^-_q / \nu^-_q(\Omega_q)$ has displacements $h = \bar v - x \ge 0$, which lie in $W_0 \subseteq W_\kappa$ for every $\kappa \ge 0$.

\emph{Case $|C| \ge 2$.} Let $\lambda$ be as in \Cref{os:sc-shift} and define the closed relation $G := \{(y, x) \in \overline{\Omega}_q \times \overline{\Omega}_q : t(y) \ge T_\lambda(t(x))\}$. For closed $A \subseteq \overline{\Omega}_q$ with $r_A := \min_A t \ge q$,
\[
\nu^-_q(A) \;\le\; \nu^-_q\bigl(\{t \ge r_A\}\bigr) \;=\; I_q(r_A) \;\le\; B\bigl(T_\lambda(r_A)\bigr) \;=\; \nu^+_q\bigl(\{t \ge T_\lambda(r_A)\}\bigr) \;=\; \nu^+_q\bigl(G^{-1}(A)\bigr),
\]
using \Cref{os:sc-shift} and $G^{-1}(A) = \{y : t(y) \ge T_\lambda(r_A)\}$ ($T_\lambda$ is increasing). The measures have equal total mass (\Cref{os:sc-div}), so \Cref{thm:rel-strassen} (after normalization) yields a coupling $\gamma$ of $(\nu^+_q, \nu^-_q)$ concentrated on $G$.

\emph{Wedge membership.} For $(y, x) \in G$ write $h := y - x$ and $\varepsilon := \bar t_C - t(x) \ge 0$. Then $t(h) = t(y) - t(x) \ge \lambda \varepsilon$ and
\[
\sum_i h_i^- \;\le\; \sum_i \bigl(\bar v_i - y_i\bigr) \;=\; \bar t_C - t(y) \;\le\; (1-\lambda)\,\varepsilon ,
\]
while $\sum_i h_i^+ = t(h) + \sum_i h_i^- \ge \lambda \varepsilon$. Since $x \mapsto x/(c + x)$ is increasing for $c > 0$,
\[
\frac{\sum_i h_i^-}{\sum_i h_i^+} \;\le\; \frac{(1-\lambda)\varepsilon}{\lambda\varepsilon + (1-\lambda)\varepsilon} \;=\; 1 - \lambda \qquad (\varepsilon > 0),
\]
and $\varepsilon = 0$ forces $x$ at the corner, $y = x$, $h = 0 \in W_\kappa$. Hence $\gamma$ is concentrated on $W_{1-\lambda}$-displacements, and on $W_\kappa$ for every $\kappa \ge 1 - \lambda$ ($W_\kappa$ is nondecreasing in $\kappa$). Define $\kappa_{\mathrm{pb}}(\rho, q)$ as the infimum of levels admitting a single-cell coupling. The set of such levels is an up-set (a coupling at $\kappa$ works verbatim at $\kappa' \ge \kappa$, $W_\kappa$ being nondecreasing), and it contains its infimum: take levels $\kappa_n \downarrow \kappa_{\mathrm{pb}}$ with couplings $\pi_n$ on the fixed compact square $\overline{\Omega}_q \times \overline{\Omega}_q$; any weak-$*$ limit $\pi$ has the same marginals, and since $\pi_n$ gives full mass to the closed set $\bigl\{(y, x) : \sum_i (y_i - x_i)^- \le \kappa_m \sum_i (y_i - x_i)^+\bigr\}$ for every $n \ge m$, Portmanteau gives $\pi$ full mass on it for every $m$, hence on their intersection, which is the $W_{\kappa_{\mathrm{pb}}}$ displacement set. So couplings exist at every $\kappa \ge \kappa_{\mathrm{pb}}$, and $\kappa_{\mathrm{pb}} \le 1 - \lambda < 1$.
\end{proof}

\begin{proposition}[Uniformity over compact level ranges]\label{os:sc-unif}
Let $\rho$ be admissible with $\Phi_\rho \ge 0$ on $D_C$ and $H$ strictly increasing on $(0, \bar t_C)$. Then for every $\bar q < \underline t_C$,
\[
\sup_{q \in [0, \bar q]} \kappa_{\mathrm{pb}}(\rho, q) \;<\; 1 .
\]
\end{proposition}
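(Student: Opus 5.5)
The plan is to reduce the uniformity claim to a uniform version of the shifted-tail inequality. By \Cref{os:sc-cert}, $\kappa_{\mathrm{pb}}(\rho, q) \le 1 - \lambda(q)$, where $\lambda(q)$ is any constant for which the conclusion of \Cref{os:sc-shift} holds at level $q$. It therefore suffices to find a single $\lambda_* > 0$ that works in \Cref{os:sc-shift} simultaneously for every $q \in [0, \bar q]$. That gives $\sup_{q \le \bar q} \kappa_{\mathrm{pb}}(\rho, q) \le 1 - \lambda_* < 1$. The case $|C| = 1$ is immediate, since there $\kappa_{\mathrm{pb}} = 0$ for every $q$; so I will take $|C| \ge 2$.

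First I check the hypotheses of \Cref{os:sc-cert} at every $q \in [0, \bar q]$. Because $H$ is strictly increasing, $H(r) > H(q)$ for all $r \in (q, \bar t_C)$, so $\rho$ is regular above $q$. Because $H(q) = q\gamma(q)/\bar F(q) \ge 0$ and $\Phi_\rho \ge 0$, we also have $\Phi_\rho + H(q) \ge 0$ on $\Omega_q$. I then rerun the three-band proof of \Cref{os:sc-shift}, replacing each $q$-dependent choice with a uniform one. The key observation is that $B$ does not depend on $q$; only $I_q$ does, through the additive term $H(q)\bar F$.

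The bands are handled as follows.
\begin{enumerate}[label=\textnormal{(\roman*)}]
\item \emph{Corner band.} The sink density near the corner is bounded by $\bar c + H(q)\,\bar\rho_{\mathrm{loc}} \le \bar c + H(\bar q)\,\bar\rho_{\mathrm{loc}}$, because $H$ is increasing and finite at $\bar q < \underline t_C < \bar t_C$. Hence the $O(s)$ ratio bound holds with $q$-free constants, and a single cutoff $r_0$ serves every $q$.
\item \emph{Low band.} Instead of $q_0 = (q + \underline t_C)/2$, I fix $q_0' := (\bar q + \underline t_C)/2$, which satisfies $q_0' > \bar q \ge q$. The band becomes $[q, q_0']$, with $\lambda_{\mathrm{low}} := (\underline t_C - q_0')/(\bar t_C - q_0')$, which is independent of $q$. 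The argument is unchanged: $B(T_\lambda(r)) = B(q) = I_q(q) \ge I_q(r)$ for $r \le q_0'$.
\item \emph{Middle band $[q_0', \bar t_C - r_0]$.} By \Cref{os:sc-gap}, for every $q \le \bar q$ and every $r$ in this band,
\[
\Delta_q(r) = \bar F(r)\,\bigl[H(r) - H(q)\bigr] \;\ge\; \bar F(\bar t_C - r_0)\,\bigl[H(q_0') - H(\bar q)\bigr] =: m > 0.
\]
This uses monotonicity of $\bar F$ and $H$, strict monotonicity of $H$ at $q_0' > \bar q$, and $\bar F > 0$ below $\bar t_C$. Because $B$ is $q$-free and uniformly continuous, the $\delta$ attached to $m/2$, and hence $\lambda_{\mathrm{mid}} := \delta/(\bar t_C - q_0')$, is uniform in $q$.
\end{enumerate}
Setting $\lambda_* := \min\{\lambda_{\mathrm{low}}, \lambda_{\mathrm{mid}}, 1/2\}$ then completes the argument.

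I expect the middle band to be the main obstacle. In the original proof the gap minimum degenerates as $q$ approaches the lower edge $q_0$ of that band. The fix is to decouple the band split from $q$, by placing $q_0'$ strictly above the whole range $[0, \bar q]$. Strict monotonicity of $H$ is exactly what turns the pointwise regularity assumed in \Cref{os:sc-cert} into the uniform gap $H(q_0') - H(\bar q) > 0$. I also need to confirm that $q_0' < \underline t_C$, which holds because $\bar q < \underline t_C$, so that the low-band constancy of $B$ still applies.
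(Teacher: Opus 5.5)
Your proposal is correct and follows the paper's proof. Both reduce the claim to finding one $\lambda>0$ valid in \Cref{os:sc-shift} for all $q\in[0,\bar q]$, by rerunning the corner, low, and middle bands with $q$-free constants. The only difference is cosmetic: you freeze the band split at $q_0'=(\bar q+\underline t_C)/2$ and bound the gap explicitly by $\bar F(\bar t_C-r_0)\,[H(q_0')-H(\bar q)]$ using monotonicity, whereas the paper keeps the $q$-dependent split $q_0(q)$ and obtains $m>0$ from continuity of $(q,r)\mapsto\Delta_q(r)$ on a compact parameter set. Incidentally, the paper's split does not degenerate as your closing remark suggests, since $q_0(q)-q\ge(\underline t_C-\bar q)/2$ throughout the range.
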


\begin{proof}
For $|C| = 1$, $\kappa_{\mathrm{pb}} \equiv 0$. Let $|C| \ge 2$. The hypotheses of \Cref{os:sc-cert} hold at every $q \in [0, \bar q]$: $\Phi_\rho + H(q) \ge \Phi_\rho \ge 0$, and regularity above $q$ follows from strict monotonicity of $H$. It suffices to produce a single $\lambda > 0$ valid in \Cref{os:sc-shift} for all $q \in [0, \bar q]$ simultaneously; then $\sup_q \kappa_{\mathrm{pb}} \le 1 - \lambda$.

\emph{Corner band:} the corner constants $u_0, \underline\rho, \bar\rho_{\mathrm{loc}}, \bar c$ are $q$-free, and $H(q) \le H(\bar q)$ for $q \le \bar q$, so a single $r_0$ works for all $q \in [0, \bar q]$.

\emph{Low band:} $q_0(q) = (q + \underline t_C)/2$ gives $\lambda_{\mathrm{low}}(q) = (\underline t_C - q_0(q))/(\bar t_C - q_0(q))$, which is decreasing in $q$; its value at $\bar q$ is a positive uniform choice.

\emph{Middle band:} the parameter set $\Lambda := \{(q, r) : q \in [0, \bar q],\ q_0(q) \le r \le \bar t_C - r_0\}$ is compact, and on it $(q, r) \mapsto \Delta_q(r) = \bar F(r)\bigl[H(r) - H(q)\bigr]$ is continuous ($H$ continuous on $[0, \bar t_C)$ by \Cref{def:adm-core}(iii)) and strictly positive ($r \ge q_0(q) > q$ and $H$ strictly increasing; $\bar F(r) \ge \bar F(\bar t_C - r_0) > 0$). Hence $m := \min_\Lambda \Delta_q(r) > 0$ (if $\Lambda = \varnothing$ the middle band is empty for every $q$ and this step is vacuous), and the uniform-continuity choice of $\delta$ in \Cref{os:sc-shift}, together with $\lambda_{\mathrm{mid}} := \delta/(\bar t_C - q_0(0))$ --- calibrated to the widest band, attained at $q = 0$ --- is valid for every $q \in [0, \bar q]$ simultaneously, since $T_\lambda(r) - r \le \lambda\,(\bar t_C - q_0(q)) \le \lambda\,(\bar t_C - q_0(0)) = \delta$ on each band.

Take $\lambda := \min\{\lambda_{\mathrm{low}}(\bar q), \lambda_{\mathrm{mid}}, 1/2\}$.
\end{proof}

\begin{lemma}[Extension to admissible singular densities]\label{os:sc-extension}
Let $C = [N]$ and let $f$ be an admissible core density on $D$ (\Cref{def:adm-core}) with $\Phi = \Phi_\rho + 1 > 0$ on $\operatorname{int} D$. Then the boundary--interior decomposition $\mu = \delta_0 + \sum_j \bar v_j\, f|_{F_j} - \Phi f \cdot \lambda_D$ of \Cref{sec:main} and the resulting revenue representation hold for $f$, and every result of the sufficiency chain --- \Cref{prop:saddle,prop:saddle-D}, \Cref{app:core}, the present section, \Cref{thm:core-bundle}, and the sufficiency direction of \Cref{thm:pure-bundle} --- holds verbatim with the $C^2$/strict-positivity of the standing form replaced by admissibility. The necessity theorems --- including the suboptimality (non-inclusive) direction of \Cref{thm:pure-bundle}, which is the pure-bundling case of \Cref{thm:incl-necessity} --- retain their own stated hypotheses.
\end{lemma}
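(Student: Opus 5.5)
The plan is to show that nothing in the sufficiency chain uses more than the following: a well-defined revenue representation $R = \int_D U\,\d\mu$ with $\mu = \delta_0 + \sum_j \bar v_j f|_{F_j} - \Phi f\,\lambda_D$, finiteness of $\mu^\pm$, and the divergence-theorem bookkeeping. Each of these can be recovered under \Cref{def:adm-core} in place of $C^2$ smoothness and strict positivity.

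\textbf{Step 1: the revenue identity.} I would establish it directly. By \Cref{lem:revenue-invariance}, which uses only Lipschitz continuity of $U$ and needs no regularity of $f$, revenue equals $\int_D (x\cdot\nabla U - U) f\,\d x$ for $a(x)=x$. Since $U$ is Lipschitz and $\omega_c\,\partial_c f \in L^1$ by clause (i), I integrate by parts coordinatewise on each fiber $[0,\bar v_c]$, exactly as in the proof of \Cref{os:sc-div}. The upper endpoint contributes $\bar v_c\, U f$ on $F_c$, which is integrable by the top-face traces in (ii). The lower endpoint term $\omega_c U f \to 0$ vanishes by the clause $\omega_c\rho\to 0$ in (ii). The interior contributes $-U\sum_c\partial_c(x_c f) = -U(\Phi_\rho)f$.

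Normalizing $U(0)=0$ and using $\int f = 1$ produces the atom, and $\Phi = \Phi_\rho + 1$ absorbs the $-Uf$ term. Total mass $\mu(D)=0$ follows from the same computation with $U\equiv 1$ paired against $\int f$. The hypothesis $\Phi > 0$ makes $\mu^- = \Phi f\lambda_D$ a genuine nonnegative finite measure; it is finite because $\Phi f = \Phi_\rho f + f \in L^1$.

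\textbf{Step 2: walking the chain.} I would then go through the chain and flag every use of the standing hypotheses.
\begin{itemize}
\item \Cref{thm:strong-duality}, \Cref{prop:saddle} and \Cref{prop:saddle-D} use only that $\nu = v_\#\mu$ is a finite signed measure of zero mass. The saddle criteria in fact need only weak duality, which needs only finiteness.
\item \Cref{lem:mb-foc} uses dominated convergence against the finite measures $\mu^\pm$, together with the facts that tie hyperplanes are $\mu^-$-null (absolute continuity) and $\mu^+$-null. The latter holds because the face traces are absolutely continuous on faces and the hyperplanes meet faces in codimension $\ge 2$.
\item The rest of \Cref{app:core} (\Cref{lem:cone-forcing}, \Cref{lem:coupling-freedom}, \Cref{os:align-transfer}) is purely measure-theoretic on compact $\overline{C_\ell}$ with finite measures. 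Strassen's theorem and disintegration need only finite Borel measures on compact metric spaces.
\item The present section already assumes only admissibility.
\end{itemize}
Hence \Cref{thm:core-bundle} and the sufficiency half of \Cref{thm:pure-bundle} transfer verbatim.

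\textbf{Main obstacle.} I expect the main difficulty to be the boundary terms in Step 1 for singular densities such as Beta with $\beta<1$, where $f$ blows up on the lower faces. There I would integrate on $[\delta,\bar v_c]$ first, bound the lower term by $\delta\, f(\delta,\omega_{-c})\,\mathrm{Lip}(U)\cdot\bar t$, and let $\delta\downarrow 0$, using (ii) and dominated convergence in $\omega_{-c}$. Dominated convergence requires a uniform integrable bound on $\omega_c f$ near $0$; I would obtain it from $\omega_c f = \int_0^{\omega_c}\partial_c(s f)\,\d s$ together with $\partial_c(\omega_c f)\in L^1$.

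Finally, the necessity theorems are explicitly left under their own hypotheses, so nothing needs to be checked for them.
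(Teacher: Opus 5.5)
Your proposal is correct and follows essentially the paper's route. Both recover the boundary--interior decomposition by integrating by parts away from the lower faces: the paper works on the shrunken box $\prod_j[\varepsilon,\bar v_j]$, you work fiberwise on $[\delta,\bar v_c]$. Both then kill the lower-face terms through $x_j f=\int_0^{x_j}\partial_j(sf)\,\d s$ with $\partial_j(x_jf)\in L^1$, and both observe that the downstream chain consumes only the finite measures $\mu^\pm$ and the cell geometry; your item-by-item walk through the chain is simply more explicit than the paper's one-line assertion.
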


\begin{proof}
Both parts of $\mu$ are finite measures: the top-face traces are integrable by \Cref{def:adm-core}(ii), and $\Phi f = \sum_j \partial_j(x_j f) + f \in L^1$ by (i). For the decomposition, integrate $\int (x \cdot \nabla U)\, f$ by parts on the shrunken box $D_\varepsilon := \prod_j [\varepsilon, \bar v_j]$, where $f \in C^1$, against any Lipschitz $U$ (the indirect utilities of \Cref{lem:revenue-invariance} qualify):
\[
\int_{D_\varepsilon} (x \cdot \nabla U)\, f \;=\; \sum_j \bar v_j \int_{F_j \cap D_\varepsilon} U f \;-\; \varepsilon \sum_j \int_{\{x_j = \varepsilon\} \cap D_\varepsilon} U f \;-\; \int_{D_\varepsilon} U\, \bigl(x \cdot \nabla f + N f\bigr).
\]
As $\varepsilon \downarrow 0$: the top-face terms converge to the full face integrals (integrable traces); the interior term converges by dominated convergence ($U$ bounded, $x \cdot \nabla f + Nf = \Phi_\rho f \in L^1$); and each lower-face term vanishes because, for a.e.\ $x_{-j}$, the map $x_j \mapsto x_j f$ is absolutely continuous on $(0, \bar v_j]$ with limit $0$ at $0$ (\Cref{def:adm-core}(i)--(ii)), so $\varepsilon \int_{\{x_j = \varepsilon\}} f \le \int_{\{x_j \le \varepsilon\}} \abs{\partial_j(x_j f)} \to 0$ by absolute continuity of the integral. The limit reads $\int (x \cdot \nabla U)\, f = \sum_j \bar v_j \int_{F_j} U f - \int U\, \Phi_\rho f$. \Cref{lem:revenue-invariance} writes revenue as $\int (a \cdot \nabla U - U)\, f$ with the tangency field $a(x) = x$; the extra $-\int U f$ turns $\Phi_\rho$ into $\Phi = \Phi_\rho + 1$, so revenue $= U(0) \cdot 1 + \sum_j \bar v_j \int_{F_j} U f - \int U\, \Phi f$, i.e.\ $\int U\, \d\mu$ with the stated $\mu$ (the atom from the $U(0)$ normalization term). This is the same $\mu$ and the same revenue representation as in the baseline case. Every downstream argument of the sufficiency chain consumes only the finite measures $\mu^\pm$, the cell geometry, and the admissibility clauses --- the saddle-point criterion and the couplings of \Cref{os:align-transfer} and the present section never evaluate $f$ pointwise on the closed box --- so each proof applies verbatim.
\end{proof}

\section{Proofs for the applications}\label{os:apps}

\subsection{Pure bundling}\label{os:apps-puresec}

This subsection proves \Cref{thm:pure-bundle} and records the converse pinch. Here $C = [N]$, $P = \varnothing$, the menu is the single option $([N], p^*)$ with crossing point $\hat s = p^*/\alpha$, the selling cell is $C_1 = \{x : t(x) \ge \hat s\}$ with $t(x) = \sum_i x_i$ (up to the tie convention), and $\rho := f$ is an admissible core density at $C = [N]$ (\Cref{def:adm-core}; the standing form qualifies, as does the Beta family of \Cref{sec:apps-pure}, via \Cref{os:sc-extension}). Write $\Phi_\rho = N + x \cdot \nabla \log f$, so the standing radial score is $\Phi = \Phi_\rho + 1$, and let $\gamma, \bar F, H$ be the core-sum objects of \Cref{os:single-cell} --- here the \emph{bundle-value} density, survivor, and virtual-value statistic. \emph{Strict Myersonian regularity} of the bundle-value distribution means $H(r) > 1$ for every $r \in (\hat s, \bar t_C)$.

\begin{proposition}[Pure bundling is strictly core-regular]\label{os:apps-pure-prop}
In the setting above, suppose the single selling cell is mass-balanced, the price is strictly inclusive ($\hat s < \min_i \bar v_i$), the radial score is positive ($\Phi = \Phi_\rho + 1 > 0$ on $C_1$), and the bundle-value distribution is strictly Myersonian regular. Then:
\textnormal{(i)} mass balance is equivalent to $H(\hat s) = 1$;
\textnormal{(ii)} $(f, M)$ is strictly core-regular: the dominance holds at slack $1 - \kappa_{\mathrm{pb}}(f, \hat s) > 0$, with $\kappa_{\mathrm{pb}}$ from \Cref{os:sc-cert};
\textnormal{(iii)} \Cref{thm:core-bundle} applies, and pure bundling at $p^* = \alpha \hat s$ is unrestricted-optimal for every $\alpha \ge \alpha^*_{\mathrm{pure}} := 1/\bigl(1 - \kappa_{\mathrm{pb}}(f, \hat s)\bigr)$.
\end{proposition}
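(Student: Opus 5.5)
The plan is to reduce everything to the single-cell machinery of \Cref{os:single-cell} with $C=[N]$, $P=\varnothing$, $q=\hat s$, and then quote \Cref{thm:core-bundle} in its mass-balanced form, as noted after \Cref{lem:mb-foc}.

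For (i), compute the two masses on $C_1=\{t\ge\hat s\}$. Strict inclusivity puts every top face inside $C_1$, so $\mu^+(C_1)=B(\hat s)$. The interior mass is $\mu^-(C_1)=\int_{\{t\ge \hat s\}}(\Phi_\rho+1)f=B(\hat s)-\hat s\gamma(\hat s)+\bar F(\hat s)$ by the divergence identity \Cref{os:sc-div}. Hence $\mu(C_1)=0$ if and only if $\hat s\gamma(\hat s)=\bar F(\hat s)$, that is, $H(\hat s)=1$.

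For (ii), I identify the pure-bundling objects with the certificate objects. Since $P=\varnothing$, the weight matrix $\mathsf{D}_1$ is empty and $s_1=t$. The sink measure on $C_1$ is $(\Phi_\rho+1)f=(\Phi_\rho+H(\hat s))f$ by (i), so $\mu^\pm_1=\nu^\pm_{\hat s}$. Next I check the hypotheses of \Cref{os:sc-cert} at $q=\hat s$:
\begin{itemize}
\item regularity above $\hat s$ is exactly strict Myersonian regularity, $H(r)>1=H(\hat s)$ for $r>\hat s$;
\item $\Phi_\rho+H(\hat s)=\Phi>0$ on $C_1$ is the positivity hypothesis.
\end{itemize}
\Cref{os:sc-cert} then gives a coupling concentrated on $W_\kappa$ for $\kappa=\kappa_{\mathrm{pb}}(f,\hat s)<1$.

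It remains to translate this into dominance. With $P=\varnothing$, the budget form of \Cref{rmk:budget-form} reads $\sum_i h_i^-\le\kappa\sum_i h_i^+$, which is $W_\kappa$. The sign clauses are vacuous, so $\mathcal{R}_{1,1-\kappa}=W_\kappa$. The converse direction of \Cref{os:align-transfer} then yields $\nu^-_{1,\varepsilon}\preceq_{\mathrm{FOSD}}\nu^+_{1,\varepsilon}$ at $\varepsilon=1-\kappa_{\mathrm{pb}}>0$. Weak regularity follows by the nesting in \Cref{os:align-gen}(ii).

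For (iii), inclusivity follows from $\hat s<\min_i\bar v_i$, since $C_0=\{t<\hat s\}$ misses every top face. Mass balance is given, and (ii) supplies strict core regularity with $\varepsilon_0=1-\kappa_{\mathrm{pb}}$. The verification in \Cref{app:core-L6} uses restricted optimality only through \eqref{eq:MB}, so \Cref{thm:core-bundle} applies with threshold $1/\varepsilon_0=\alpha^*_{\mathrm{pure}}$. For singular densities such as the Beta family, I invoke \Cref{os:sc-extension} to justify the decomposition of $\mu$.

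The main obstacle is the identification step in (ii). Specifically, I must confirm that the single-cell sink density $\Phi_\rho+H(q)$ matches the cell's $\mu^-$ exactly, which holds only because $H(\hat s)=1$ from (i). I also need the face masses restricted to $\{t\ge\hat s\}$ to be the full faces, which uses strict inclusivity.
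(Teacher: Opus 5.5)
Your proposal is correct and follows the paper's proof essentially step for step. It uses the divergence identity (\Cref{os:sc-div}) for (i), the identification $\mu^\pm_1=\nu^\pm_{\hat s}$ together with \Cref{os:sc-cert} and $\mathcal{R}_{1,1-\kappa}=W_\kappa$ for (ii), and the mass-balanced form of \Cref{thm:core-bundle} for (iii). Two routine points are left implicit: $\hat s<\min_i\bar v_i=\underline t_C$ is exactly the level hypothesis $q\in[0,\underline t_C)$ of \Cref{os:sc-cert}, and $\hat s>0$ (positive prices) keeps the origin atom out of $C_1$.
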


\begin{proof}
\emph{(i)} By inclusivity every top face lies inside the selling cell ($t \ge \bar v_j > \hat s$ on $F_j$) and the origin atom lies in $C_0$ (standing positive prices), so $\mu^+_1 = \sum_j \bar v_j\, f|_{F_j} = \nu^+_{\hat s}$ with total mass $B(\hat s)$. The interior part is $\mu^-_1 = \Phi f = (\Phi_\rho + 1) f$ on $\{t \ge \hat s\}$, so by the divergence identity (\Cref{os:sc-div}), $\mu^-_1(C_1) = B(\hat s) - \hat s\, \gamma(\hat s) + \bar F(\hat s)$. Mass balance $\mu^+_1(C_1) = \mu^-_1(C_1)$ is therefore $\hat s\,\gamma(\hat s) = \bar F(\hat s)$, i.e.\ $H(\hat s) = 1$.

\emph{(ii)} With $H(\hat s) = 1$, the certificate measures of \Cref{os:single-cell} at level $q = \hat s$ are the cell parts themselves: $\nu^+_{\hat s} = \mu^+_1$ and $\nu^-_{\hat s} = (\Phi_\rho + H(\hat s))\, f = \Phi f = \mu^-_1$ on $C_1$. Strict Myersonian regularity says $H(r) > 1 = H(\hat s)$ on $(\hat s, \bar t_C)$ --- exactly regularity above $\hat s$ --- the sink density is $\Phi_\rho + H(\hat s) = \Phi > 0$, and inclusivity places $\hat s \in [0, \underline t_C)$. \Cref{os:sc-cert} yields, for every $\kappa \ge \kappa_{\mathrm{pb}}(f, \hat s)$, a coupling of $(\mu^+_1, \mu^-_1)$ concentrated on wedge displacements. At $P = \varnothing$ the unbundling relation in budget form \emph{is} the core wedge, $\mathcal{R}_{1,1-\kappa} = W_\kappa$ (\Cref{def:budget-rel} and \Cref{rmk:budget-form}: no peripheral clauses remain, and the budget inequality is the wedge inequality), so the certification direction of \Cref{os:align-transfer} converts the coupling into the dominance at unbundling slack $1 - \kappa_{\mathrm{pb}}(f, \hat s) > 0$ (the same coupling, through the nesting of \Cref{os:align-gen}(ii), also certifies the weak $\varepsilon = 0$ order), and $(f, M)$ is strictly core-regular.

\emph{(iii)} The menu is finite, mass-balanced (its single price is the restricted optimum on $\mathcal{F}_{[N]}$, with mass balance its first-order condition by \Cref{lem:mb-foc}), inclusive, and strictly core-regular, so \Cref{thm:core-bundle} applies; its proof, at the slack certified in (ii), gives optimality at every $\alpha \ge 1/\bigl(1 - \kappa_{\mathrm{pb}}(f, \hat s)\bigr) = \alpha^*_{\mathrm{pure}}$.
\end{proof}

\begin{remark}[The converse pinch: weak regularity is necessary]\label{os:apps-sandwich}
If the mass-balanced, inclusive pure-bundling menu is weakly core-regular --- in particular, if it is strictly core-regular --- then the bundle-value distribution is weakly Myersonian regular on the active range: $H(r) \ge 1$ for $r \in [\hat s, \bar t_C)$. Indeed, at $C = [N]$ the option score is the scalar bundle value $t$ itself ($P = \varnothing$ leaves no weighted block), so the $\varepsilon = 0$ dominance is exactly the tail comparison $\mu^-_1(\{t \ge r\}) \le \mu^+_1(\{t \ge r\})$ for every $r$: for $r \in [\hat s, \bar t_C)$ this is $H(r) \ge 1$ --- by \Cref{os:sc-gap} and part (i), $\bar F(r)\,[H(r) - 1] = \Delta_{\hat s}(r)$ --- for $r < \hat s$ both tails equal the common cell mass \eqref{eq:MB}, and for $r \ge \bar t_C$ both tails vanish. Weak core regularity of the mass-balanced, inclusive pure-bundling menu is therefore \emph{equivalent} to weak Myersonian regularity on the active range; strict Myersonian regularity in turn implies strict core regularity under that proposition's hypotheses (\Cref{os:apps-pure-prop}), so core regularity is pinched between the classical forms, weak to weak and strict to strict.
\end{remark}

\subsection{Two-tier menus}\label{os:apps-twotier}

This subsection proves \Cref{thm:twotier}. Throughout, the two-tier setting (S1)--(S3) is in force on the unit box $D = [0,1]^N$, with core $C$, $|C| = N - 1 \ge 1$, single peripheral good $p$, and types $x = (\omega, z)$. Option $1$ is the bare-core tier $(C, \alpha \hat s_C)$ with $w_1 = \mathbf 1_C$, option $2$ the grand tier $(C \cup \{p\}, \alpha \hat s_G)$ with $w_2 = \mathbf 1_{C \cup \{p\}}$; both options are deterministic and contain the core, so the menu lives on $\mathcal{F}_C$; writing $K_{\ell,\kappa}$ for the displacements obeying cell $\ell$'s sign pattern and budget inequality at level $\kappa$ --- the unbundling relation $\mathcal{R}_{\ell,1-\kappa}$ in budget form (\Cref{def:budget-rel}, \Cref{rmk:budget-form}) --- the two cells' relations read
\[
K_{1,\kappa} = \Bigl\{ h :\ h_p \le 0,\ \ \sum_{i \in C} h_i^- \le \kappa \sum_{i \in C} h_i^+ \Bigr\},
\qquad
K_{2,\kappa} = \Bigl\{ h :\ h_p \ge 0,\ \ \sum_{i \in C} h_i^- \le \kappa \Bigl( h_p + \sum_{i \in C} h_i^+ \Bigr) \Bigr\}.
\]
The core-sum objects are those of \Cref{os:single-cell} at $\rho = f_C$ (admissibility is \Cref{os:tt-adm} below): $g_C$, $G_C$, $\bar G_C$ are the density, cdf, and survivor of $t(\omega) = \sum_{i \in C} \omega_i$ under $f_C$, with $\bar t_C = |C|$, lowest top $\underline t_C = 1$, statistic $H_C(q) = q\, g_C(q)/\bar G_C(q)$, face-mass total $\Gamma_C := \sum_{i \in C} f_i(1)$, and certificate measures $\nu^\pm_q$ on $\Omega_q = \{t \ge q\}$; since every core face has $t \ge 1$, the faces are uncut at every level $q < 1$, so $\nu^+_q = \nu^+_0 =: \nu^+$ is the full core-face measure with total $B(q) = \Gamma_C$ for every $q \in [0,1)$. Write $c_t$ for the conditional law of $\omega$ given $t(\omega) = t$ under $f_C$ (the canonical fiber disintegration, $f_C\, \d\omega = \int c_t(\cdot)\, g_C(t)\, \d t$).

\paragraph{Virtual density.} The radial score splits along the product:
\[
\Phi(x) \;=\; N + 1 + x \cdot \nabla \log f(x) \;=\; \underbrace{\bigl(|C| + \omega \cdot \nabla \log f_C(\omega)\bigr)}_{\Phi_C(\omega)} \;+\; \underbrace{\bigl(2 + z\, (\log f_p)'(z)\bigr)}_{\Phi_p \,\equiv\, 1 + \beta},
\]
using $z (\log f_p)'(z) = \beta - 1$ from (S2). So $\mu^- = (\Phi_C + 1 + \beta)\, f$, a genuine finite measure by \Cref{os:tt-adm}(i), bounded below by $(1+\beta) f$; the boundary part is $\mu^+ = \delta_0 + \sum_{i \in C} f|_{F_i} + f|_{F_p}$ with $F_i = \{\omega_i = 1\}$ and $F_p = \{z = 1\}$. The peripheral face carries total mass $\mu^+(F_p) = f_p(1) \int f_C\, \d\omega = \beta$, and write $\mu^+_p := \mu^+|_{F_p} = \beta\, f_C(\omega)\, \d\omega \otimes \delta_{z=1}$.

\paragraph{Geometry and inclusivity.} In normalized units the option utilities are $\tilde u_1 = t - \hat s_C$ and $\tilde u_2 = t + z - \hat s_G$, so $\tilde u_2 \ge \tilde u_1 \iff z \ge d$ and the cells are
\[
C_1 = \{\, t \ge \hat s_C,\ z \le d \,\}, \qquad C_2 = \{\, z \ge d,\ t \ge \tau(z) \,\}, \qquad \tau(z) = (\hat s_G - z)^+ ,
\]
up to the tie convention; the tie sets are Lebesgue-null, hence $\mu^-$-null, and carry no top-face mass (each face measure is $(N-1)$-dimensional, and the tie sets cut it in codimension one). Every core face satisfies $t \ge 1 > \hat s_C$, so $F_i \cap \{z \le d\} \subseteq C_1$ and $F_i \cap \{z \ge d\} \subseteq C_2$ (there $t \ge 1 > \hat s_C = \tau(d) \ge \tau(z)$); the peripheral face $F_p$ lies strictly inside $C_2$'s win region, since at $z = 1$ both $\tilde u_2 = t + 1 - \hat s_G > 0$ and $\tilde u_2 - \tilde u_1 = 1 - d > 0$; and the origin lies in the exclusion cell. Hence no top face meets $C_0$: \emph{the margins of \textnormal{(S3)} imply inclusivity} (\Cref{def:incl}). Write $\mu^\pm_\ell := \mu^\pm|_{C_\ell}$; so $\mu^+_1$ is the core-face mass over $z \le d$, with total $\Gamma_C F_p(d) = \Gamma_C\, d^\beta$, and $\mu^+_2 = \mu^+_{2,C} + \mu^+_p$, where $\mu^+_{2,C}$ is the core-face mass over $z \ge d$.

The proof runs through seven lemmas. \Cref{os:tt-adm,os:tt-ifr} are distributional groundwork; \Cref{os:tt-mb} converts mass balance into two price identities; \Cref{os:tt-partition} splits the grand cell's sinks into a stage-W part, served slice by slice at frozen $z$ by the single-cell certificate (\Cref{os:tt-W}), and a stage-G part, served from the peripheral face by an Efron lift of a sheared two-dimensional coupling (\Cref{os:tt-lift,os:tt-hall}); the assembly closes the argument through the certification direction of \Cref{os:align-transfer} and \Cref{thm:core-bundle}.

\begin{lemma}[Admissibility and the core statistic]\label{os:tt-adm}
Under \textnormal{(S1)}:
\textnormal{(i)} $f_C$ is an admissible core density (\Cref{def:adm-core}) with $\Phi_\rho = \Phi_C$; in particular $\Phi_C f_C \in L^1(D_C)$;
\textnormal{(ii)} $g_C$ is log-concave, continuous, bounded, and strictly positive on $(0, \bar t_C)$;
\textnormal{(iii)} $H_C$ is continuous on $[0, \bar t_C)$ with $H_C(0) = 0$, and strictly increasing on $(0, \bar t_C)$.
\end{lemma}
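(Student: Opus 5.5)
Under (S1) each $f_i$ is a log-concave, $C^1$ density on $[0,1]$, positive on $(0,1]$, and $f_C=\bigotimes_{i\in C}f_i$. The plan is to prove (ii) first, since the core-sum density $g_C$ feeds the admissibility clauses (iii)--(iv) of \Cref{def:adm-core}, and then obtain (iii) from the classical log-concave hazard argument.

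\emph{Part (ii).} The Prékopa--Leindler theorem (log-concavity is preserved under convolution) makes $g_C=f_{i_1}*\cdots*f_{i_{|C|}}$ log-concave on $[0,\bar t_C]$. Each $f_i$ is log-concave on a compact interval, so it is bounded; hence every convolution is bounded and continuous. For continuity of the convolution of an $L^1$ and an $L^\infty$ function I will use continuity of translation in $L^1$. Strict positivity on $(0,\bar t_C)$ goes by induction on $|C|$: the convolution of two functions positive on $(0,a]$ and $(0,b]$ is positive on $(0,a+b)$, because for every such $t$ the integrand is positive on an interval of positive length.

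\emph{Part (i).} I check \Cref{def:adm-core} clause by clause.
\begin{itemize}
\item For (i), log-concavity gives $f_i'/f_i$ nonincreasing on $(0,1]$. So $\omega_i f_i'$ is bounded above near $0$ and integrable near $1$. The potential trouble is $\omega_i f_i'\to-\infty$ at $0$, but $f_i$ is $C^1$ on the closed interval (S1), so $\omega_i f_i'$ is bounded outright.
\item This boundedness also gives the limit $\omega_c f_C\to0$ in clause (ii).
\item Continuity off the lower faces, with integrable traces, is immediate from $C^1$ regularity.
\item Clause (iii) is part (ii). The condition $q\,g_C(q)\to0$ follows from boundedness of $g_C$. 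Atomless face trace laws hold because the trace on $F_c$ is $f_c(1)\prod_{i\ne c}f_i$, whose core-sum law is absolutely continuous when $|C|\ge2$.
\item Clause (iv) uses $f_i(1)>0$ and continuity near the corner; $\Phi_C f_C$ is bounded there because each $f_i'$ is bounded.
\end{itemize}
Finally, $\Phi_\rho=|C|+\omega\cdot\nabla\log f_C=\Phi_C$ holds by definition.

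\emph{Part (iii).} Continuity of $H_C$ on $(0,\bar t_C)$ follows from continuity and positivity of $g_C$ and of $\bar G_C$. At $0$, boundedness of $g_C$ and $\bar G_C(0)=1$ give $H_C(q)\to0=H_C(0)$. For monotonicity I will write $H_C(q)=q\cdot h_C(q)$, where $h_C=g_C/\bar G_C$ is the hazard. For a log-concave density the survivor $\bar G_C$ is log-concave (Prékopa again, by integrating over the up-set), so $h_C=-(\log\bar G_C)'$ is nondecreasing. Then $H_C$ is the product of the strictly increasing positive factor $q$ and a positive nondecreasing factor $h_C>0$, hence strictly increasing on $(0,\bar t_C)$.

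The main obstacle is the integrability and boundary behaviour in clause (i)--(ii) near the lower faces. If "$C^1$ on $(0,1]$" is read as allowing $f_i'$ to blow up at $0$, the fallback is to use monotonicity of $f_i'/f_i$: $f_i'$ is eventually single-signed near $0$, so $\int_0^\epsilon \omega f_i'\,d\omega=\epsilon f_i(\epsilon)-\int_0^\epsilon f_i$ (integration by parts). This stays finite and tends to $0$ as long as $\omega f_i(\omega)\to0$, which holds because a log-concave density is bounded near $0$.
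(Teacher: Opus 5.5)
Your proposal is correct and follows essentially the paper's own route: Pr\'ekopa for log-concavity of $g_C$, induction on $|C|$ for positivity, boundedness of $g_C$ for $q\,g_C(q)\to 0$, and log-concavity of $\bar G_C$ making the hazard nondecreasing so that $H_C = q\,h_C$ is strictly increasing. The one difference is the integrability of $\omega_c f_c'$, which the paper gets in one line from unimodality, $\int_0^1 |f_i'| \le 2\sup f_i$, with no regularity needed at $0$. This also shows that your side remark has the sign reversed: since $(\log f_i)'$ is nonincreasing, $f_i'$ is bounded below near $0$ and can only blow up to $+\infty$. Your integration-by-parts fallback does not depend on the sign, so it still works.
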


\begin{proof}
Each marginal is bounded ($f_i = e^{-V_i}$ with $V_i$ convex, hence bounded below on a bounded interval) and, being unimodal, satisfies $\int_0^1 |f_i'| \le 2 \sup f_i < \infty$.

\emph{(i), clause by clause.} \textnormal{(i)}: $f_C \in C^1$ and $> 0$ on $\operatorname{int} D_C$ as a product, and $\omega_c\, \partial_c f_C = (\omega_c f_c') \prod_{i \ne c} f_i \in L^1$ by Fubini. \textnormal{(ii)}: each $f_i$ is continuous on $(0,1]$, so $f_C$ extends continuously off the lower faces with integrable top-face traces (total $f_c(1)$), and $\omega_c f_C \to 0$ as $\omega_c \downarrow 0$ by boundedness. \textnormal{(iii)}: $g_C$ is log-concave by Pr\'ekopa's theorem \citep{Prekopa1973}, continuous and strictly positive on $(0, \bar t_C)$ (induction on $|C|$: the convolution $\int f_1(s)\, g_{C \setminus 1}(q - s)\, \d s$ has positive integrand on a set of positive length), and bounded by $\sup f_1$, so $q\, g_C(q) \to 0$ as $q \downarrow 0$; for $|C| \ge 2$ each face trace law of $t$ is the law of $1 + \sum_{i \ne c} \omega_i$ under an absolutely continuous law, hence atomless. \textnormal{(iv)}: on a corner neighborhood every $f_i$ is continuous and positive with $f_i'$ continuous up to the endpoint, so $f_C$ is bounded away from zero and $\Phi_C f_C = \sum_c ( f_C + \omega_c f_c' \prod_{i \ne c} f_i )$ is bounded. Finally $\Phi_\rho = \Phi_C$ by definition, with $\Phi_C f_C = \sum_c \partial_c(\omega_c f_C) \in L^1$ by clause (i).

\emph{(ii)} was established along the way. \emph{(iii)} Continuity with $H_C(0) = 0$ is \Cref{def:adm-core}(iii); log-concavity of $g_C$ makes $\bar G_C$ log-concave and the failure rate $h_C := g_C/\bar G_C$ nondecreasing and strictly positive \citep{BagnoliBergstrom2005}, so $H_C(q) = q\, h_C(q)$ is strictly increasing on $(0, \bar t_C)$.
\end{proof}

\begin{lemma}[IFR superadditivity]\label{os:tt-ifr}
For all $a, b \ge 0$: $\bar G_C(a + b) \le \bar G_C(a)\, \bar G_C(b)$.
\end{lemma}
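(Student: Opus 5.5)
The plan is to reduce the inequality to a statement about the hazard rate of the core sum and then use the log-concavity already established. \Cref{os:tt-adm}(ii) shows $g_C$ is log-concave on $(0,\bar t_C)$. By the standard inheritance result \citep{BagnoliBergstrom2005}, the survivor $\bar G_C$ is then log-concave on $[0,\bar t_C]$. Extend $\bar G_C$ by $\bar G_C(r) := 0$ for $r \ge \bar t_C$, so that the statement makes sense for all $a,b \ge 0$. Set $\psi(r) := -\log \bar G_C(r) \in [0,\infty]$. Then $\psi$ is convex and nondecreasing on $[0,\infty)$, with value $+\infty$ past the top corner.

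Before applying convexity I will fix the normalization. $\bar G_C$ is the survivor of the core sum under $f_C$, and $f_C = \bigotimes_{i\in C} f_i$ is a product of marginal densities, so $\bar G_C(0) = \int f_C\,\d\omega = 1$, i.e.\ $\psi(0) = 0$. This normalization carries real content: if $\bar G_C(0)$ were some $m \ne 1$, the inequality would hold only after rescaling. I will therefore note explicitly that (S1) treats $f_C$ as a probability density, since $\int f_p = 1$ and $\int f = 1$ together force $\int f_C = 1$.

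The main step is the classical fact that a convex function with $\psi(0)=0$ is superadditive on $[0,\infty)$, namely $\psi(a+b) \ge \psi(a) + \psi(b)$.
\begin{itemize}
\item The case $a = 0$ or $b = 0$ is immediate.
\item If $a+b \ge \bar t_C$, the left side is $+\infty$ and the claim is trivial.
\item Otherwise, write $a = \tfrac{a}{a+b}(a+b) + \tfrac{b}{a+b}\cdot 0$. Convexity with $\psi(0)=0$ gives $\psi(a) \le \tfrac{a}{a+b}\psi(a+b)$, and symmetrically $\psi(b) \le \tfrac{b}{a+b}\psi(a+b)$. Adding the two bounds yields superadditivity.
\end{itemize}
Exponentiating $-\psi(a+b) \le -\psi(a) - \psi(b)$ gives $\bar G_C(a+b) \le \bar G_C(a)\bar G_C(b)$, which is the claim.

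An equivalent route is to integrate the nondecreasing hazard $h_C = g_C/\bar G_C$ from \Cref{os:tt-adm}(iii). Writing $\psi(a+b) - \psi(b) = \int_b^{b+a} h_C \ge \int_0^a h_C = \psi(a)$ gives the same inequality directly.

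The only delicate point, which I take to be the main obstacle, is the boundary behaviour: log-concavity of $\bar G_C$ on the closed interval, including near $0$ and near $\bar t_C$, together with the extended-value convention. I will handle it through continuity of $\bar G_C$, which follows from the absolutely continuous law in \Cref{def:adm-core}(iii). With continuity in hand, the convexity inequalities extend from the open interval to the endpoints by taking limits.
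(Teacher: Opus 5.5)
Your proof is correct and follows the paper's argument essentially verbatim. It inherits log-concavity of $\bar G_C$ from $g_C$ (Bagnoli--Bergstrom), uses convexity of $\psi = -\log \bar G_C$ through $\psi(0) = 0$ to get the two chord bounds and add them, and dispatches the degenerate case $\bar G_C(a+b) = 0$ separately. Your explicit check that $\int f_C\,\d\omega = 1$ (left implicit in the paper's ``$t \ge 0$ almost surely'') and the alternative route by integrating the nondecreasing hazard of \Cref{os:tt-adm}(iii) are sound, harmless additions.
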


\begin{proof}
If $\bar G_C(a+b) = 0$ the claim is trivial. Otherwise all three values are positive, and $\psi := -\log \bar G_C$ is convex on $[0, a+b]$ ($\bar G_C$ is log-concave by \citet{BagnoliBergstrom2005}, the survivor of the log-concave $g_C$ of \Cref{os:tt-adm}(ii)) with $\psi(0) = 0$ ($t \ge 0$ almost surely). Convexity through the origin gives $\psi(a) \le \frac{a}{a+b}\, \psi(a+b)$ and $\psi(b) \le \frac{b}{a+b}\, \psi(a+b)$; adding, $\psi(a) + \psi(b) \le \psi(a+b)$, which exponentiates to the claim.
\end{proof}

\begin{lemma}[Price identities]\label{os:tt-mb}
Under the two-tier setting (S1)--(S3):
\textnormal{(i)} mass balance on the bare-core cell is equivalent to
\begin{equation}\label{eq:tt-coreprice}
H_C(\hat s_C) \;=\; 1 + \beta .
\end{equation}
\textnormal{(ii)} Given \eqref{eq:tt-coreprice}, define $\varphi(z) := 1 + \beta - H_C(\tau(z))$ on $[d, 1]$. Then mass balance on the grand cell is equivalent to
\begin{equation}\label{eq:tt-gbooks}
\int_d^1 f_p(z)\, \varphi(z)\, \bar G_C(\tau(z))\, \d z \;=\; \beta .
\end{equation}
\end{lemma}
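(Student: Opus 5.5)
The plan is to compute both sides of the mass-balance identity \eqref{eq:MB} cell by cell. I will work slice by slice in the peripheral coordinate $z$, using the product structure $f = f_C \otimes f_p$ from (S1)--(S2). On each slice I will apply the divergence identity of \Cref{os:sc-div} to the core block $\rho = f_C$; it is available because $f_C$ is admissible by \Cref{os:tt-adm}(i). The ingredients already recorded in the paragraphs before the statement are these:
\begin{itemize}
\item the virtual density splits as $\mu^- = (\Phi_C + 1 + \beta) f$;
\item the cells are $C_1 = \{t \ge \hat s_C,\ z \le d\}$ and $C_2 = \{z \ge d,\ t \ge \tau(z)\}$, and the tie sets carry no $\mu^\pm$-mass;
\item every core face has $t \ge 1$, so at every level $q < 1$ the faces are uncut and $B(q) = \Gamma_C$.
\end{itemize}
The levels that occur are $\hat s_C$ and $\tau(z) \le \tau(d) = \hat s_C < 1$, so this last point applies throughout.

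\emph{Step 1 (bare-core cell).} The source side is $\mu^+_1(C_1) = \Gamma_C \int_0^d f_p = \Gamma_C d^\beta$, because the peripheral face and the origin atom lie outside $C_1$. For the sink side I will use Fubini on the product:
\[
\mu^-_1(C_1) = \int_0^d f_p(z)\Bigl[\int_{\{t \ge \hat s_C\}} \Phi_C f_C\,\d\omega + (1+\beta)\,\bar G_C(\hat s_C)\Bigr]\d z .
\]
By \Cref{os:sc-div} at $r = \hat s_C$, the inner integral equals $\Gamma_C - \hat s_C\, g_C(\hat s_C)$. Hence $\mu^-_1(C_1) = d^\beta\bigl[\Gamma_C - \hat s_C g_C(\hat s_C) + (1+\beta)\bar G_C(\hat s_C)\bigr]$. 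Since $d > 0$ by (S3), mass balance on $C_1$ is equivalent to $\hat s_C\, g_C(\hat s_C) = (1+\beta)\bar G_C(\hat s_C)$. Dividing by $\bar G_C(\hat s_C) > 0$ gives \eqref{eq:tt-coreprice}.

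\emph{Step 2 (grand cell).} On the source side, $\mu^+_2(C_2) = \Gamma_C(1 - d^\beta) + \beta$. This is the core-face mass over $z \ge d$ plus the full peripheral face, which lies inside $C_2$ as shown in the geometry paragraph. On the sink side, I will integrate over the slices $z \in [d,1]$ at the variable level $\tau(z)$. By \Cref{os:sc-div} at $r = \tau(z)$,
\[
\mu^-_2(C_2) = \int_d^1 f_p(z)\Bigl[\Gamma_C - \tau(z)\, g_C(\tau(z)) + (1+\beta)\,\bar G_C(\tau(z))\Bigr]\d z .
\]
When $z \ge \hat s_G$ we have $\tau(z) = 0$. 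There I will use the $r = 0$ version of the identity, in which the $r\gamma(r)$ term is replaced by $0$ and $\bar G_C(0) = 1$; this is consistent with the convention $H_C(0) = 0$ from \Cref{os:tt-adm}(iii). Next I will rewrite $\tau\, g_C(\tau) = H_C(\tau)\,\bar G_C(\tau)$, which holds because $\bar G_C > 0$ on $[0, \bar t_C)$. After cancelling $\Gamma_C(1 - d^\beta)$ from both sides, mass balance on $C_2$ reduces to $\int_d^1 f_p\,\bigl(1 + \beta - H_C(\tau)\bigr)\bar G_C(\tau)\,\d z = \beta$, which is \eqref{eq:tt-gbooks}. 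Identity \eqref{eq:tt-coreprice} only enters through the definition of $\varphi$, so the equivalence in (ii) does not actually need it.

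\emph{Main obstacle.} The computations are routine. The delicate point is justifying the slice-wise use of \Cref{os:sc-div} when the level $\tau(z)$ varies in $z$ and hits $0$ on $[\hat s_G, 1]$. This requires measurability and integrability of $z \mapsto \int_{\{t \ge \tau(z)\}} \Phi_C f_C$, which I will get from $\Phi_C f_C \in L^1$ (\Cref{os:tt-adm}(i)) and the continuity of $g_C$ and $\bar G_C$ (\Cref{os:tt-adm}(ii)). It also requires the $r = 0$ endpoint case of the identity, where $q\,g_C(q) \to 0$ as $q \downarrow 0$ makes the integrand continuous across $z = \hat s_G$. Finally, I will check that $f_p$ is integrable near $z = 0$ even when $\beta < 1$; this matters only for Step 1 and holds because $\int_0^d \beta z^{\beta-1}\,\d z = d^\beta$.
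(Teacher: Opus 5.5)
Your proposal is correct and follows the paper's proof essentially step for step. Both slice in $z$, apply the divergence identity for $f_C$ at level $\hat s_C$ (bare-core cell) or $\tau(z)$ (grand cell) with the core faces uncut, cancel the $\Gamma_C$ terms, and rewrite $\tau\, g_C(\tau) = H_C(\tau)\,\bar G_C(\tau)$. Your remark that the equivalence in (ii) does not use $H_C(\hat s_C) = 1+\beta$ is also right, and since $\Phi_C \ge 0$ under (S1), Tonelli alone settles the measurability concern you raise.
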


\begin{proof}
Both identities follow from the divergence identity (\Cref{os:sc-div}) at $\rho = f_C$, where $B(q) = \Gamma_C$ for every $q \in [0, 1)$ (faces uncut) and $q\, g_C(q) = H_C(q)\, \bar G_C(q)$:
\begin{equation}\label{eq:tt-div}
\int_{\{t \ge q\}} \Phi_C\, f_C\, \d\omega \;=\; \Gamma_C - q\, g_C(q) , \qquad q \in [0, \hat s_C] .
\end{equation}

\emph{(i)} The cell's boundary mass is the core-face mass over $z \le d$ (the peripheral face has $z = 1 > d$, and the origin atom is excluded): $\mu^+(C_1) = \Gamma_C\, F_p(d)$ with $F_p(d) = d^\beta$. The interior mass factorizes over the product:
\[
\mu^-(C_1) \;=\; \int_0^d f_p(z)\, \d z \int_{\{t \ge \hat s_C\}} \bigl(\Phi_C + 1 + \beta\bigr) f_C\, \d\omega
\;=\; F_p(d)\, \Bigl[ \Gamma_C - \hat s_C\, g_C(\hat s_C) + (1+\beta)\, \bar G_C(\hat s_C) \Bigr]
\]
by \eqref{eq:tt-div} at $q = \hat s_C$. Since $F_p(d) = d^\beta > 0$, mass balance $\mu^+(C_1) = \mu^-(C_1)$ is equivalent to $\hat s_C\, g_C(\hat s_C) = (1 + \beta)\, \bar G_C(\hat s_C)$, which is \eqref{eq:tt-coreprice}.

\emph{(ii)} The grand cell's boundary mass is $\mu^+(C_2) = \Gamma_C\bigl(1 - F_p(d)\bigr) + \beta$ (core faces over $z \ge d$, plus the peripheral face). Its interior mass, slicing at fixed $z$ and applying \eqref{eq:tt-div} at $q = \tau(z) \in [0, \hat s_C]$,
\[
\mu^-(C_2) \;=\; \int_d^1 f_p(z) \Bigl[ \Gamma_C - \tau(z)\, g_C(\tau(z)) + (1+\beta)\, \bar G_C(\tau(z)) \Bigr] \d z .
\]
Since $\int_d^1 f_p = 1 - F_p(d)$, the $\Gamma_C$ terms cancel from mass balance, which becomes $\beta = \int_d^1 f_p \bigl[ (1+\beta) \bar G_C(\tau) - \tau g_C(\tau) \bigr] \d z$. Substituting $(1+\beta)\, \bar G_C(\tau) = \varphi\, \bar G_C(\tau) + H_C(\tau)\, \bar G_C(\tau) = \varphi\, \bar G_C(\tau) + \tau\, g_C(\tau)$ turns this into \eqref{eq:tt-gbooks}.
\end{proof}

\begin{lemma}[Sink partition and books]\label{os:tt-partition}
Assume \eqref{eq:tt-coreprice} and define, on $C_2$,
\[
\mu^-_W := \bigl( \Phi_C(\omega) + H_C(\tau(z)) \bigr)\, f , \qquad
\mu^-_G := \varphi(z)\, f , \qquad \varphi(z) = 1 + \beta - H_C(\tau(z)) .
\]
Then:
\textnormal{(i)} $\mu^-_2 = \mu^-_W + \mu^-_G$ with both parts nonnegative, and $\varphi(d) = 0$;
\textnormal{(ii)} the stage-W books close slice by slice: for every $z \in [d, 1]$, with $q = \tau(z) \in [0, \hat s_C]$,
\[
\int_{\{t \ge q\}} \bigl( \Phi_C + H_C(q) \bigr)\, f_C\, \d\omega \;=\; \Gamma_C ,
\]
the slice's core-face mass;
\textnormal{(iii)} the bare-core cell needs no partition: $\mu^-_1 = (\Phi_C + H_C(\hat s_C))\, f$ on $\{t \ge \hat s_C\} \times [0, d]$, with the slice books of \textnormal{(ii)} at $q = \hat s_C$;
\textnormal{(iv)} the stage-G books close globally: $\mu^-_G(C_2) = \beta = \mu^+(F_p)$ whenever \eqref{eq:tt-gbooks} holds.
\end{lemma}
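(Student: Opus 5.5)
The plan is to check the four claims in order. Each is a direct computation from the product structure of $f$, the radial-score split $\Phi = \Phi_C + 1 + \beta$, and the divergence identity \eqref{eq:tt-div}.

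\emph{Part (i).} On $C_2$ the interior part is $\mu^-_2 = (\Phi_C + 1 + \beta) f$. Writing $1 + \beta = H_C(\tau(z)) + \varphi(z)$ gives the decomposition $\mu^-_2 = \mu^-_W + \mu^-_G$ at once. For the value at the left endpoint, note that at $z = d$ we have $\tau(d) = \hat s_G - d = \hat s_C$, so \eqref{eq:tt-coreprice} gives $\varphi(d) = 1 + \beta - H_C(\hat s_C) = 0$.

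Nonnegativity of $\mu^-_W$ follows from two facts: $\Phi_C \ge 0$ by (S1), and $H_C \ge 0$ on $[0, \bar t_C)$. For $\mu^-_G$ we argue by monotonicity. The map $z \mapsto \tau(z)$ is nonincreasing on $[d,1]$, with $\tau(z) \le \hat s_C$. By \Cref{os:tt-adm}(iii), $H_C$ is increasing, so $H_C(\tau(z)) \le H_C(\hat s_C) = 1 + \beta$, which gives $\varphi \ge 0$.

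\emph{Parts (ii) and (iii).} Fix a slice $z$ and set $q = \tau(z) \in [0, \hat s_C]$. Integrating \eqref{eq:tt-div} and using $q\, g_C(q) = H_C(q)\, \bar G_C(q)$ gives
\[
\int_{\{t \ge q\}} \bigl(\Phi_C + H_C(q)\bigr) f_C \, \d\omega \;=\; \Gamma_C - q\, g_C(q) + H_C(q)\,\bar G_C(q) \;=\; \Gamma_C .
\]
This total equals the slice's core-face mass. The reason is that every core face has $t \ge 1 > q$, so $B(q) = \Gamma_C$ and the faces are uncut within the slice; this is the inclusivity geometry established above.

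For (iii), on $C_1 = \{t \ge \hat s_C,\ z \le d\}$ we have $1 + \beta = H_C(\hat s_C)$ by \eqref{eq:tt-coreprice}. Hence $\mu^-_1 = (\Phi_C + H_C(\hat s_C)) f$ there, and the slice books are the $q = \hat s_C$ case of the display.

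\emph{Part (iv).} By Fubini over the product, $\mu^-_G(C_2) = \int_d^1 f_p(z)\,\varphi(z)\,\bar G_C(\tau(z))\,\d z$. This equals $\beta$ exactly by \eqref{eq:tt-gbooks}, and $\beta = \mu^+(F_p)$ was computed above.

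The only delicate points are bookkeeping ones. First, the tie sets between cells must be null for both $\mu^-$ and the face measures, which was noted in the geometry paragraph. Second, \eqref{eq:tt-div} must be valid at $q = 0$, which happens when $z \ge \hat s_G$; there $\tau = 0$ and the boundary term vanishes by admissibility.

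I expect the main obstacle to be confirming the sign of $\varphi$ on all of $[d,1]$, which rests on the monotonicity of $H_C$ from \Cref{os:tt-adm}. Everything else is substitution.
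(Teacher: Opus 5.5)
Your proposal is correct and follows the paper's proof essentially step for step. Part (i) uses the telescoping via $\Phi = \Phi_C + 1 + \beta$ and nonnegativity of $\varphi$ from $\tau(z) \le \tau(d) = \hat s_C$ together with monotonicity of $H_C$; part (ii) is the divergence/mass identity at level $q = \tau(z)$ with uncut faces, $B(q) = \Gamma_C$; part (iii) substitutes \eqref{eq:tt-coreprice}; and part (iv) is Fubini over the product followed by \eqref{eq:tt-gbooks}.
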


\begin{proof}
\emph{(i)} The sum telescopes by construction since $\Phi = \Phi_C + 1 + \beta$. Nonnegativity of $\mu^-_W$: $\Phi_C \ge 0$ by (S1) and $H_C \ge 0$. Nonnegativity of $\mu^-_G$: for $z \in [d,1]$, $\tau(z) \le \tau(d) = \hat s_C$, so $H_C(\tau(z)) \le H_C(\hat s_C) = 1 + \beta$ by monotonicity of $H_C$ (\Cref{os:tt-adm}(iii)) and \eqref{eq:tt-coreprice}; hence $\varphi \ge 0$, with equality at $z = d$.

\emph{(ii)} This is the mass identity of \Cref{os:sc-div} at $\rho = f_C$, level $q = \tau(z)$: $\nu^-_q(\Omega_q) = B(q) = \Gamma_C$, using $B(q) = \Gamma_C$ for $q \le \hat s_C < 1$.

\emph{(iii)} Immediate from \eqref{eq:tt-coreprice}: $\Phi_C + 1 + \beta = \Phi_C + H_C(\hat s_C)$, and $C_1$'s slices are $\{t \ge \hat s_C\}$ at every $z \in [0, d]$.

\emph{(iv)} $\mu^-_G(C_2) = \int_d^1 f_p(z)\, \varphi(z) \int_{\{t \ge \tau(z)\}} f_C\, \d\omega\, \d z = \int_d^1 f_p\, \varphi\, \bar G_C(\tau)\, \d z = \beta$ by \eqref{eq:tt-gbooks}, and $\mu^+(F_p) = \beta$ was computed above.
\end{proof}

\begin{lemma}[Stage W: slice couplings]\label{os:tt-W}
Let $\kappa_W := \sup_{q \in [0, \hat s_C]} \kappa_{\mathrm{pb}}(f_C, q)$, with $\kappa_{\mathrm{pb}}$ the single-cell threshold of \Cref{os:sc-cert}. Then $\kappa_W < 1$, and for every $\kappa \ge \kappa_W$ there exist:
\textnormal{(i)} a coupling $\gamma_1$ of $(\mu^+_1, \mu^-_1)$ with $y - x \in K_{1,\kappa}$ for $\gamma_1$-a.e.\ pair $(y, x)$;
\textnormal{(ii)} a coupling $\gamma_W$ of $(\mu^+_{2,C}, \mu^-_W)$ with $y - x \in K_{2,\kappa}$ for $\gamma_W$-a.e.\ pair.
\end{lemma}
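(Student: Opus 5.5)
The plan has three steps: get $\kappa_W<1$ from \Cref{os:sc-unif}, build slice couplings from \Cref{os:sc-cert}, and glue them measurably over the peripheral coordinate $z$ with $h_p=0$.

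\emph{Step 1: $\kappa_W<1$.} I would apply \Cref{os:sc-unif} with $\rho=f_C$ and $\bar q:=\hat s_C$. Its hypotheses are all in hand. Admissibility is \Cref{os:tt-adm}(i). $\Phi_C\ge 0$ is (S1). Strict monotonicity of $H_C$ on $(0,\bar t_C)$ is \Cref{os:tt-adm}(iii). Finally $\bar q=\hat s_C<1=\underline t_C$ by (S3). This gives $\kappa_W=\sup_{q\in[0,\hat s_C]}\kappa_{\mathrm{pb}}(f_C,q)<1$. Along the way it confirms, for every $q\in[0,\hat s_C]$, the hypotheses of \Cref{os:sc-cert}: $\Phi_C+H_C(q)\ge 0$ and regularity above $q$. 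So for each such $q$ and each $\kappa\ge\kappa_W\ge\kappa_{\mathrm{pb}}(f_C,q)$ there is a coupling $\pi_q$ of $(\nu^+_q,\nu^-_q)$ concentrated on core displacements in the wedge $W_\kappa$.

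\emph{Step 2: slice couplings and gluing.} Fix $\kappa\ge\kappa_W$. For (i), I take the slice level $q=\hat s_C$ for every $z\in[0,d]$. By \Cref{os:tt-partition}(iii), on each slice $\mu^-_1$ has density $f_p(z)\,(\Phi_C+H_C(\hat s_C))f_C$. The source $\mu^+_1$ is $f_p(z)\,\d z\otimes\nu^+$, because the core faces over $z\le d$ are uncut and $\nu^+_q=\nu^+$. Hence the slice measures are exactly $f_p(z)\,\nu^\pm_{\hat s_C}$, and I set
\[
\gamma_1:=\int_0^d f_p(z)\,\bigl(\pi_{\hat s_C}\otimes\delta_{(z,z)}\bigr)\,\d z ,
\]
pairing $(\omega,z)$ with $(\omega',z)$. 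Here $h_p=0$, which satisfies $h_p\le 0$, and the core part lies in $W_\kappa$. So $h\in K_{1,\kappa}$, and the marginals follow from Fubini.

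For (ii), the slice levels are $q=\tau(z)\in[0,\hat s_C]$ for $z\in[d,1]$. The slice of $\mu^-_W$ is $f_p(z)\,(\Phi_C+H_C(\tau(z)))f_C=f_p(z)\,\nu^-_{\tau(z)}$. The slice of $\mu^+_{2,C}$ is $f_p(z)\,\nu^+$. Their masses agree by \Cref{os:tt-partition}(ii). Again $h_p=0\ge 0$, and the $K_{2,\kappa}$ budget inequality with $h_p=0$ reduces to the core wedge. I then define $\gamma_W:=\int_d^1 f_p(z)\,(\pi_{\tau(z)}\otimes\delta_{(z,z)})\,\d z$.

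\emph{Step 3: measurable selection (main obstacle).} The gluing needs $q\mapsto\pi_q$ to be a measurable kernel, and \Cref{os:sc-cert} only gives existence for each $q$ separately. My first approach is to observe that the set-valued map $q\mapsto\{\text{couplings of }(\nu^+_q,\nu^-_q)\text{ on the closed wedge set}\}$ has nonempty weak-$*$ compact values. I would then show its graph is closed, using weak continuity of $q\mapsto\nu^-_q$ (from continuity of $H_C$ and dominated convergence) together with the Portmanteau argument used in \Cref{os:sc-cert}. The Kuratowski--Ryll-Nardzewski theorem would then supply a measurable selection. As a cleaner alternative, the proof of \Cref{os:sc-cert} builds its coupling on a fixed closed relation $G$ with a uniform $\lambda$ (uniform by \Cref{os:sc-unif}), so Strassen can instead be applied once on the joint space $[d,1]\times\overline\Omega$, with the relation $\{z=z'\}\cap G$ and the fibrewise Hall condition integrated over $z$. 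I would use this joint application if the selection argument proves awkward.
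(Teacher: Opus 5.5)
Your proposal is correct and follows essentially the paper's own route. You obtain $\kappa_W<1$ from \Cref{os:sc-unif}, use one $z$-independent slice coupling at level $\hat s_C$ for the bare-core cell, and for the grand cell take a Kuratowski--Ryll-Nardzewski selection over $z$, with closed graph from $L^1$/weak continuity of the sink slices plus portmanteau; $h_p=0$ then reduces both budget clauses to the core wedge.

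The only caveat concerns your fallback. As you describe it, with the $T_\lambda$-relation $G$ at the uniform $\lambda$ of \Cref{os:sc-unif}, a joint Strassen application certifies only wedge level $1-\lambda\ge\kappa_W$, so it yields couplings for $\kappa\ge 1-\lambda$, not for every $\kappa\ge\kappa_W$ as the lemma states. Running it with the fibrewise wedge relation $\{z=z'\}\cap\{y_\omega-x_\omega\in W_\kappa\}$ repairs this: the slicewise Hall condition follows from slicewise existence, and it integrates over $z$.
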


\begin{proof}
\emph{Hypotheses of the single-cell machinery.} By \Cref{os:tt-adm} $f_C$ is admissible with $\Phi_\rho = \Phi_C$ and $H_C$ strictly increasing on $(0, \bar t_C)$, and $\Phi_C \ge 0$ on $D_C$ by (S1). Hence for every $q \in [0, \hat s_C]$: $f_C$ is regular above $q$ ($H_C(r) > H_C(q)$ for $r > q$), $\Phi_C + H_C(q) \ge 0$ on $\Omega_q$, and $q < 1 = \underline t_C$, so \Cref{os:sc-cert} applies at $q$ and yields, for every $\kappa \ge \kappa_{\mathrm{pb}}(f_C, q)$, a coupling of $(\nu^+_q, \nu^-_q)$ concentrated on $\{(y_\omega, x_\omega) : y_\omega - x_\omega \in W_\kappa\}$; and \Cref{os:sc-unif} with $\bar q = \hat s_C < \underline t_C$ gives $\kappa_W < 1$. Fix $\kappa \ge \kappa_W$.

\emph{Slice structure.} Both cells' sources and stage-W sinks disintegrate over $z$ into certificate problems: for $z \in [0, d]$ (cell $1$) the slice problem is $(\nu^+_{\hat s_C}, \nu^-_{\hat s_C})$ by \Cref{os:tt-partition}(iii), and for $z \in [d, 1]$ (cell $2$) it is $(\nu^+_{\tau(z)}, \nu^-_{\tau(z)})$, balanced slice by slice with common total $\Gamma_C$ (\Cref{os:tt-partition}(ii)); the faces are uncut at every such level, so $\nu^+_q$ is the full core-face measure throughout. Explicitly, with $\iota_z(y_\omega, x_\omega) := \bigl( (y_\omega, z), (x_\omega, z) \bigr)$,
\[
\mu^+_1 = \int_0^d \bigl( \nu^+ \otimes \delta_z \bigr) f_p(z)\, \d z, \qquad
\mu^-_1 = \int_0^d \bigl( \nu^-_{\hat s_C} \otimes \delta_z \bigr) f_p(z)\, \d z ,
\]
and similarly on $[d, 1]$ with $\nu^-_{\tau(z)}$ and $\mu^+_{2,C}, \mu^-_W$.

\emph{Cell $1$.} The slice problem does not depend on $z$: pick one coupling $\gamma^{\hat s_C}$ of $(\nu^+_{\hat s_C}, \nu^-_{\hat s_C})$ on core-wedge displacements from \Cref{os:sc-cert} and set $\gamma_1 := \int_0^d (\iota_z)_\# \gamma^{\hat s_C}\, f_p(z)\, \d z$. The marginals are $\mu^+_1$ and $\mu^-_1$ by the displayed disintegrations, and every displacement is $h = (h_\omega, 0)$ with $h_\omega \in W_\kappa$.

\emph{Cell $2$, measurable selection.} Here the slice problem moves with $z$ through $q = \tau(z)$. In the weakly compact, metrizable space of nonnegative Borel measures of mass $\Gamma_C$ on $D_C \times D_C$, let $A(z)$, $z \in [d,1]$, be the set of couplings of $(\nu^+, \nu^-_{\tau(z)})$ giving full mass to the closed set $R_\kappa := \{ (y_\omega, x_\omega) : y_\omega - x_\omega \in W_\kappa \}$. Each $A(z)$ is nonempty (\Cref{os:sc-cert} at $q = \tau(z) \in [0, \hat s_C]$, since $\kappa \ge \kappa_W$), convex, and weakly compact, and the graph of $z \mapsto A(z)$ is closed: along $z_n \to z$ the sink densities $(\Phi_C + H_C(\tau(z_n))) f_C\, \mathbf 1_{\{t \ge \tau(z_n)\}}$ converge in $L^1$ (dominated convergence: $H_C$ continuous on $[0, \hat s_C]$, $\{t = \tau(z)\}$ null, $(\Phi_C + H_C(\hat s_C)) f_C \in L^1$), while the marginal identities and $R_\kappa$-concentration pass to weak limits by the portmanteau theorem. The correspondence is therefore measurable, and the Kuratowski--Ryll-Nardzewski selection theorem \citep{KuratowskiRyllNardzewski1965} provides a measurable selection $z \mapsto \gamma^W_z \in A(z)$. Set $\gamma_W := \int_d^1 (\iota_z)_\# \gamma^W_z\, f_p(z)\, \d z$; its marginals are $\mu^+_{2,C}$ and $\mu^-_W$, and every displacement is $h = (h_\omega, 0)$ with $h_\omega \in W_\kappa$.

\emph{Budget membership in both cells.} A stage-W displacement has $h_p = 0$ and $h_\omega \in W_\kappa$. In cell $1$ ($w_{1,p} = 0$): the sign clause asks $h_p \le 0$, satisfied with equality, and the budget inequality is $\sum_C h_i^- \le \kappa \sum_C h_i^+$ --- the wedge inequality itself. In cell $2$ ($w_{2,p} = 1$): the sign clause asks $h_p \ge 0$, again satisfied with equality, and the budget inequality $\sum_C h_i^- \le \kappa (h_p + \sum_C h_i^+)$ reduces to the wedge inequality at $h_p = 0$. So $\gamma_1$ is concentrated on $K_{1,\kappa}$ and $\gamma_W$ on $K_{2,\kappa}$.
\end{proof}

It remains to carry the peripheral face $\mu^+_p$ onto the residual sinks $\mu^-_G$. Both measures have the \emph{same} $\omega$-conditional given their $(t, z)$ coordinates --- the fiber law $c_t$: on the source side by independence, on the sink side because $\varphi(z)\, f_p(z)$ does not depend on $\omega$. This is what the partition was designed to achieve: the $\Phi_C$-tilt sits entirely in stage W. The stage-G problem therefore aggregates to the two coordinates $(t, z)$. Define
\[
m^+ := \beta\, g_C(t)\, \d t \otimes \delta_{z=1}, \qquad
m^- := \varphi(z)\, f_p(z)\, g_C(t)\, \d t\, \d z \quad \text{on } \tilde\Omega := \{ (t, z) : d \le z \le 1,\ \tau(z) \le t \le \bar t_C \} ,
\]
the pushforwards of $\mu^+_p$ and $\mu^-_G$ under $x \mapsto (t(\omega), z)$, with equal totals $\beta$ by \Cref{os:tt-partition}(iv). For $\kappa \in [0, 1)$ define the \emph{sheared relation} on $\tilde\Omega$:
\[
(t', z') \preceq_\kappa (t, z) \quad :\iff \quad z' \le z \ \text{ and } \ t' \le t + \kappa\,(z - z') ,
\]
a closed preorder, nondecreasing in $\kappa$. Descent in the peripheral coordinate buys core dips at rate $\kappa$.

\begin{lemma}[The Efron lift]\label{os:tt-lift}
Let $\kappa \in [0,1)$ and let $\tilde\pi$ be a coupling of $(m^+, m^-)$ with $(t', z') \preceq_\kappa (t, 1)$ for $\tilde\pi$-a.e.\ pair $\bigl((t,1), (t',z')\bigr)$. Then there is a coupling $\gamma_G$ of $(\mu^+_p, \mu^-_G)$ with $y - x \in K_{2,\kappa}$ for $\gamma_G$-a.e.\ pair $(y, x)$.
\end{lemma}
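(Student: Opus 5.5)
The plan is to lift $\tilde\pi$ fiber by fiber. Over each aggregated pair $\bigl((t,1),(t',z')\bigr)$ we couple the fiber laws $c_t$ (source) and $c_{t'}$ (sink) so that the $\omega$-displacement is \emph{coordinatewise signed}. The budget inequality of $K_{2,\kappa}$ then reduces to the sheared relation.

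The key input is Efron's theorem. The core coordinates are independent with log-concave marginals by (S1), so the conditional law $c_t$ of $\omega$ given $t(\omega)=t$ is stochastically nondecreasing in $t$ in the coordinatewise (multivariate first-order) order. By the monotone-coupling form of \Cref{thm:rel-strassen}, for $t' \le t$ there is a coupling $\kappa_{t,t'}$ of $(c_t, c_{t'})$ concentrated on $\{\omega \ge \omega'\}$. For $t' > t$ there is one concentrated on $\{\omega \le \omega'\}$.

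First I would obtain a jointly measurable family $(t,t') \mapsto \kappa_{t,t'}$. I would reuse the argument of \Cref{os:tt-W}: the set-valued map assigning to $(t,t')$ the weakly compact convex set of ordered couplings has closed graph. This uses weak continuity of $t \mapsto c_t$ on the open range, via continuity and positivity of $g_C$ (\Cref{os:tt-adm}(ii)) and portmanteau for the closed order relation. Kuratowski--Ryll-Nardzewski then gives a measurable selection.

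Next set
\[
\gamma_G(dy,dx) := \int \bigl(c_t \otimes \delta_{1}\bigr)\otimes\bigl(c_{t'}\otimes\delta_{z'}\bigr)\text{ glued by }\kappa_{t,t'} \;\tilde\pi\bigl(d(t,1),d(t',z')\bigr),
\]
that is, the pushforward of $\kappa_{t,t'}$ under $(\omega,\omega') \mapsto ((\omega,1),(\omega',z'))$, integrated against $\tilde\pi$. The marginals are checked via the disintegrations $\mu^+_p = \beta \int c_t\, g_C(t)\,dt \otimes \delta_{z=1}$ and $\mu^-_G = \int\!\!\int c_{t}\,\varphi(z) f_p(z) g_C(t)\,dt\,dz$. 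The second holds because the sink density $\varphi(z) f_p(z) f_C(\omega)$ has $\omega$-conditional $c_t$ given $(t,z)$, which is exactly the point of the partition in \Cref{os:tt-partition}. Fubini together with the marginals $m^\pm$ of $\tilde\pi$ then gives the marginals $\mu^+_p$ and $\mu^-_G$.

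Then I would verify cone membership for a typical pair, with $h = (\omega-\omega', 1-z')$. The sign clause holds since $h_p = 1 - z' \ge 0$. If $t' \le t$, then $h_\omega \ge 0$, so $\sum_C h_i^- = 0$ and the budget holds trivially. If $t' > t$, then $h_\omega \le 0$ and
\[
\sum_C h_i^- = t' - t \le \kappa(1-z') = \kappa\, h_p \le \kappa\Bigl(h_p + \sum_C h_i^+\Bigr)
\]
by the sheared relation $(t',z') \preceq_\kappa (t,1)$. So $y-x \in K_{2,\kappa}$ almost surely.

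The main obstacle is the Efron step together with measurability. Efron's theorem needs each marginal $f_i$ to be log-concave (PF$_2$), which (S1) provides. It must be applied to the fiber laws of $f_C\,d\omega$ restricted to the box $[0,1]^{|C|}$, so I would check that the truncated marginals remain log-concave, which they do since indicators of intervals are log-concave. Degenerate fibers at $t \in \{0, \bar t_C\}$ are null. If weak continuity of $c_t$ proves delicate, the fallback is to construct the monotone couplings explicitly by a sequential conditional-quantile (Knothe-type) transport, which is measurable in $(t,t')$ by construction.
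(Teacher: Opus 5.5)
Your proposal is correct and follows the paper's proof essentially step for step: Efron's theorem makes the fiber laws $c_t$ stochastically monotone. The monotone-coupling form of \Cref{thm:rel-strassen} then couples $c_t$ and $c_{t'}$ on $\{\omega \ge \omega'\}$ or $\{\omega \le \omega'\}$ according to the sign of $t'-t$, and a Kuratowski--Ryll-Nardzewski measurable selection, gluing through $\tilde\pi$, and the same two-case displacement check via the sheared relation at $z=1$ finish the argument. (One cosmetic point: your kernel name $\kappa_{t,t'}$ clashes with the budget level $\kappa$; the paper calls these couplings $Q_{t,t'}$.)
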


\begin{proof}
\emph{Monotone fiber kernels.} The coordinates of $\omega$ are independent with log-concave densities, so by Efron's theorem \citep{Efron1965} the map $t \mapsto \int \phi\, \d c_t$ is nondecreasing for every bounded coordinatewise nondecreasing $\phi$: the fiber laws satisfy $c_{t'} \preceq_{\mathrm{FOSD}} c_t$ whenever $t' \le t$. By the monotone-coupling form of \Cref{thm:rel-strassen} (the ``in particular'' clause), for each pair $t' \le t$ there is a coupling of $(c_t, c_{t'})$ concentrated on the closed set $O := \{(\omega, \omega') : \omega \ge \omega'\}$, and for $t' > t$ one of $(c_t, c_{t'})$ concentrated on $O' := \{(\omega, \omega') : \omega \le \omega'\}$.

\emph{Measurable selection.} As in the selection step of \Cref{os:tt-W} --- convex, weakly compact values; closed graph by the portmanteau theorem on the closed sets $O$, $O'$; the disintegration kernel $(t, t') \mapsto (c_t, c_{t'})$ Borel and defined for a.e.\ coordinate, which suffices since both marginals of $\tilde\pi$ are absolutely continuous in $t$ --- the Kuratowski--Ryll-Nardzewski theorem \citep{KuratowskiRyllNardzewski1965} yields, on each of the Borel pieces $\{t' \le t\}$ and $\{t' > t\}$, a measurable family $(t, t') \mapsto Q_{t, t'}$ of couplings of $(c_t, c_{t'})$, concentrated on $O$ when $t' \le t$ and on $O'$ when $t' > t$.

\emph{Gluing.} Let $\jmath_{z'}(\omega, \omega') := \bigl( (\omega, 1), (\omega', z') \bigr)$ and set
\[
\gamma_G \;:=\; \int (\jmath_{z'})_\#\, Q_{t, t'}\;\; \tilde\pi\bigl( \d(t, 1),\, \d(t', z') \bigr) .
\]
The first marginal is $\int c_t\, m^+(\d(t,1)) \otimes \delta_{z=1} = \beta f_C\, \d\omega \otimes \delta_{z=1} = \mu^+_p$, and the second is $\int c_{t'}\, m^-(\d(t', z')) = \varphi(z') f_p(z') f_C(\omega')\, \d\omega'\, \d z' = \mu^-_G$, both by the disintegration $f_C\, \d\omega = \int c_t\, g_C(t)\, \d t$ and Fubini.

\emph{Displacements.} A $\gamma_G$-typical pair has $h = (\omega - \omega',\ 1 - z')$ with $h_p = 1 - z' \ge 0$, satisfying $K_{2,\kappa}$'s sign clause. If $t' \le t$: $\omega \ge \omega'$, so there are no core dips, $\sum_C h_i^- = 0$, and the budget inequality holds with room to spare. If $t' > t$: $\omega' \ge \omega$, so every core coordinate weakly falls, $\sum_C h_i^+ = 0$ and
\[
\sum_{i \in C} h_i^- \;=\; \sum_{i \in C} (\omega'_i - \omega_i) \;=\; t' - t \;\le\; \kappa\,(1 - z') \;=\; \kappa \Bigl( h_p + \sum_{i \in C} h_i^+ \Bigr) ,
\]
using the sheared inequality at $z = 1$. Either way $h \in K_{2,\kappa}$.
\end{proof}

\begin{lemma}[The sheared Hall condition and the grand threshold]\label{os:tt-hall}
For $\kappa \in [0, 1)$ define the score $\theta := t + \kappa z$ on $\tilde\Omega$ and the excess
\[
\mathcal{E}(c; \kappa) \;:=\; m^-\bigl( \{ \theta \le c \} \bigr) \;-\; m^+\bigl( \{ \theta \le c \} \bigr), \qquad c \in \R ,
\]
and let $V := \{ \kappa \in [0,1) : \mathcal{E}(\cdot\,; \kappa) \ge 0 \}$. Then:
\textnormal{(i)} a coupling $\tilde\pi$ as in \Cref{os:tt-lift} exists if and only if $\kappa \in V$;
\textnormal{(ii)} $V$ is a closed up-interval of $[0, 1)$ containing $[\hat\kappa, 1)$, where $\hat\kappa = \hat s_C/(1 - d)$; in particular $\kappa_G := \min V$ exists and $\kappa_G \le \hat\kappa$, with $\hat\kappa < 1$ if and only if $\hat s_G < 1$.
\end{lemma}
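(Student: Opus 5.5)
The plan reduces the sheared relation to a one-dimensional order on the score $\theta = t + \kappa z$, and then checks the resulting tail inequality with the IFR superadditivity of \Cref{os:tt-ifr}.

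\emph{Part (i).} Every source of $m^+$ sits at $z = 1$. For a source $(t,1)$ and a sink $(t',z') \in \tilde\Omega$, the clause $z' \le 1$ is automatic. So $(t',z') \preceq_\kappa (t,1)$ reduces to $t' + \kappa z' \le t + \kappa$, that is, $\theta(\text{sink}) \le \theta(\text{source})$. The relation $G := \{((t,1),(t',z')) : \theta(t',z') \le \theta(t,1)\}$ is closed. The two measures have equal totals $\beta$ by \Cref{os:tt-partition}(iv), so I would apply \Cref{thm:rel-strassen} after normalizing. For a closed set $B$ of sinks, let $c_B := \min_B \theta$. Then $G^{-1}(B) = \{\theta \ge c_B\}$ on the source side, and the Hall condition $m^-(B) \le m^+(\{\theta \ge c_B\})$ is implied by the same inequality for the larger set $B' = \{\theta \ge c_B\}$. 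Hence the Hall condition is equivalent to the upper-tail comparison $m^-(\{\theta \ge c\}) \le m^+(\{\theta \ge c\})$ for all $c$. Since both measures are absolutely continuous in $\theta$ and have equal totals, this is exactly $\mathcal{E}(\cdot\,;\kappa) \ge 0$. The converse direction is immediate from integrating indicators against a $G$-concentrated coupling.

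\emph{Part (ii), structure of $V$.} $V$ is an up-set because $\preceq_\kappa$ is nondecreasing in $\kappa$. A coupling that works at $\kappa$ still works at $\kappa' \ge \kappa$, so by (i) membership passes upward. For closedness I would note that $(c,\kappa) \mapsto \mathcal{E}(c;\kappa)$ is continuous. Both $m^\pm$ are absolutely continuous, so level sets of $\theta$ are null, and dominated convergence applies. A limit of $\kappa_n \in V$ that stays in $[0,1)$ therefore keeps $\mathcal{E} \ge 0$, and $\kappa_G = \min V$ exists once $V \ne \varnothing$.

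\emph{Part (ii), the inclusion $\hat\kappa \in V$.} This is the main step. I would verify the upper-tail form directly. The source tail is $m^+(\{\theta > c\}) = \beta\,\bar G_C(c - \kappa)$. The sink tail is $\int_d^1 \varphi f_p\, \bar G_C\bigl(\max\{\tau(z),\, c - \kappa z\}\bigr)\,\d z$. By the books \eqref{eq:tt-gbooks}, $\beta = \int_d^1 \varphi f_p\, \bar G_C(\tau)\,\d z$, and $\varphi \ge 0$ by \Cref{os:tt-partition}(i). It therefore suffices to prove the pointwise bound
\[
\bar G_C\bigl(\max\{\tau(z),\, c-\kappa z\}\bigr) \;\le\; \bar G_C(c-\kappa)\,\bar G_C(\tau(z)), \qquad z \in [d,1].
\]
If $c - \kappa \le 0$, the right side is $\bar G_C(\tau(z))$, which dominates since $\bar G_C$ is nonincreasing. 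Otherwise, \Cref{os:tt-ifr} gives $\bar G_C(c - \kappa + \tau) \le \bar G_C(c-\kappa)\,\bar G_C(\tau)$. Monotonicity of $\bar G_C$ then reduces the claim to $c - \kappa z \ge c - \kappa + \tau(z)$, that is, $\kappa(1-z) \ge \tau(z)$ on $[d,1]$.

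The difference $\kappa(1-z) - (\hat s_G - z)^+$ is nondecreasing on $[d,1]$. Where $\tau > 0$ its slope is $1 - \kappa > 0$, and beyond $z = \hat s_G$ it equals $\kappa(1-z) \ge 0$. The binding point is therefore $z = d$, where $\tau(d) = \hat s_C$ and the condition reads $\kappa(1-d) \ge \hat s_C$, i.e.\ $\kappa \ge \hat\kappa$. Combined with the up-set property, this gives $[\hat\kappa, 1) \subseteq V$ and $\kappa_G \le \hat\kappa$. Finally, $\hat\kappa < 1$ is equivalent to $\hat s_C < 1 - \hat s_G + \hat s_C$, i.e.\ to $\hat s_G < 1$.
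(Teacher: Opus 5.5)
Your proof is correct and follows the paper's route. Part (i) is Strassen on the score order $\theta$, using that every source sits on $z = 1$; the up-interval comes from monotonicity and continuity in $\kappa$; and the inclusion allocates the source mass with weights $\bar G_C(\tau(z))$ via \eqref{eq:tt-gbooks} and reduces slice by slice to IFR superadditivity binding at $z = d$. Your upper-tail (survivor) bookkeeping simply folds the paper's separate activity and truncation cases into $\max\{\tau(z), c - \kappa z\}$.

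One wording slip: $\kappa(1-z) - \tau(z)$ is not nondecreasing on all of $[d,1]$, since it decreases past $\hat s_G$. The piecewise argument in your next sentence is what you actually use, and it is correct.
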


\begin{proof}
\emph{(i), sufficiency (completeness).} Suppose $\mathcal{E}(\cdot\,;\kappa) \ge 0$. Apply \Cref{thm:rel-strassen} after normalizing both measures by their common total $\beta$ (\Cref{os:tt-partition}(iv)), with $X := \tilde\Omega$ carrying $m^-$, $Y := \tilde\Omega \cap \{z = 1\}$ carrying $m^+$, and the closed relation $G := \{ ((t', z'), (t, 1)) : (t', z') \preceq_\kappa (t, 1) \}$. Let $B \subseteq Y$ be closed and nonempty (the empty case is trivial), and let $t_B := \max\{ t : (t, 1) \in B \}$, attained by compactness. All sources lie on the single layer $z = 1$ --- this is what makes the score family complete --- and within the layer the relation is monotone in the source's $t$, so
\[
G^{-1}(B) \;=\; \bigl\{ (t', z') \in \tilde\Omega :\ t' \le t_B + \kappa\,(1 - z') \bigr\} \;=\; \{ \theta \le c \} \cap \tilde\Omega , \qquad c := t_B + \kappa ,
\]
since membership for some $(t,1) \in B$ is implied by membership for $(t_B, 1)$. On the layer $\theta = t + \kappa$, so $m^+(B) \le m^+(\{t \le t_B\} \times \{1\}) = m^+(\{\theta \le c\}) \le m^-(\{\theta \le c\}) = m^-(G^{-1}(B))$, the middle inequality being $\mathcal{E}(c; \kappa) \ge 0$. The Hall criterion of \Cref{thm:rel-strassen} holds, a coupling of $(m^-, m^+)$ concentrated on $G$ exists, and flipping it to the order (source, sink) gives $\tilde\pi$.

\emph{(i), necessity.} The score is $\preceq_\kappa$-monotone: if $(t', z') \preceq_\kappa (t, z)$ then $\theta' = t' + \kappa z' \le t + \kappa(z - z') + \kappa z' = \theta$. So if $\tilde\pi$ exists, every source with $\theta \le c$ has its sink partner in $\{\theta \le c\}$, whence $m^+(\{\theta \le c\}) = \tilde\pi\bigl( \{\theta \le c\} \times \tilde\Omega \bigr) \le \tilde\pi\bigl( \tilde\Omega \times \{\theta \le c\} \bigr) = m^-(\{\theta \le c\})$ for every $c$: $\kappa \in V$.

\emph{(ii), up-interval and closedness.} For $\kappa \le \kappa'$ the relation grows, $\preceq_\kappa\, \subseteq\, \preceq_{\kappa'}$ (the dip budget $\kappa(z - z')$ only relaxes), so a coupling at $\kappa$ is one at $\kappa'$ and necessity in (i) gives $\kappa' \in V$: $V$ is an up-set. It is closed in $[0,1)$ because $\kappa \mapsto \mathcal{E}(c; \kappa)$ is continuous for each fixed $c$: the $m^-$ term by dominated convergence (the boundary $\{t + \kappa z = c\}$ is $m^-$-null, $m^-$ being absolutely continuous on $\tilde\Omega$), and the $m^+$ term equals $\beta\, G_C\bigl( (c - \kappa)^+ \wedge \bar t_C \bigr)$ with $G_C$ continuous. A closed nonempty up-set of $[0,1)$ is an up-interval containing its infimum, so $\kappa_G := \min V$ exists once $V \ne \varnothing$, which the next step shows; and then $\kappa_G \le \hat\kappa$.

\emph{(ii), score positivity at $\kappa \ge \hat\kappa$.} Fix $\kappa \in [\hat\kappa, 1)$ and $c \in \R$; write $a_z := c - \kappa z$ for the slice cut, so that the layer cut is $a_1 = c - \kappa$. Slicing,
\[
m^-(\{\theta \le c\}) \;=\; \int_d^1 f_p(z)\, \varphi(z)\, \bigl[ G_C(a_z \wedge \bar t_C) - G_C(\tau(z)) \bigr]^+ \d z , \qquad
m^+(\{\theta \le c\}) \;=\; \beta\, G_C\bigl( a_1^+ \wedge \bar t_C \bigr) .
\]
Allocate the source mass across slices with weights $\bar G_C(\tau(z))$ via the stage-G books \eqref{eq:tt-gbooks}, $\beta = \int_d^1 f_p\, \varphi\, \bar G_C(\tau)\, \d z$:
\[
\mathcal{E}(c; \kappa) \;=\; \int_d^1 f_p(z)\, \varphi(z)\;b(z, c)\, \d z , \qquad
b(z, c) := \bigl[ G_C(a_z \wedge \bar t_C) - G_C(\tau(z)) \bigr]^+ \;-\; \bar G_C(\tau(z))\, G_C\bigl( a_1^+ \wedge \bar t_C \bigr) .
\]
Since $\varphi \ge 0$ (\Cref{os:tt-partition}(i)), it suffices that $b \ge 0$ pointwise on $[d, 1] \times \R$.

If $c \le \kappa$ then $a_1^+ = 0$ and $G_C(0) = 0$ ($f_C$ is absolutely continuous, so $t > 0$ almost surely), giving $b \ge 0$. Let $c > \kappa$, so $a_1 > 0$.

\emph{Activity: $a_z \ge \tau(z)$ for every $z \in [d, 1]$.} For $z \ge \hat s_G$: $\tau(z) = 0$ and $a_z = c - \kappa z > \kappa(1 - z) \ge 0$. For $z < \hat s_G$: $a_z \ge \tau(z)$ rearranges to $c \ge \hat s_G - (1 - \kappa) z$, and
\[
c \;>\; \kappa \;\ge\; \hat s_G - (1 - \kappa)\, d \;\ge\; \hat s_G - (1 - \kappa)\, z ,
\]
where the middle inequality is $\kappa (1 - d) \ge \hat s_G - d = \hat s_C$, i.e.\ $\kappa \ge \hat\kappa$. So the positive part in $b$'s first bracket can be dropped.

If $a_z \ge \bar t_C$: $b = \bar G_C(\tau) - \bar G_C(\tau)\, G_C(a_1 \wedge \bar t_C) = \bar G_C(\tau)\, \bar G_C(a_1 \wedge \bar t_C) \ge 0$.

If $a_z < \bar t_C$: then also $a_1 < \bar t_C$, since $a_1 \le a_z$ ($z \le 1$), and writing $G_C = 1 - \bar G_C$ throughout,
\[
b \;=\; G_C(a_z) - G_C(\tau) - \bar G_C(\tau)\, G_C(a_1) \;=\; \bar G_C(\tau)\, \bar G_C(a_1) - \bar G_C(a_z) ,
\]
so $b \ge 0$ if and only if $\bar G_C(a_z) \le \bar G_C(\tau)\, \bar G_C(a_1)$. For $z \ge \hat s_G$: $\tau = 0$, $\bar G_C(0) = 1$, and $a_z \ge a_1$ gives the inequality by monotonicity of $\bar G_C$. For $z \in [d, \hat s_G)$:
\[
a_z - \tau(z) - a_1 \;=\; (c - \kappa z) - (\hat s_G - z) - (c - \kappa) \;=\; \kappa\,(1 - z) - (\hat s_G - z) ,
\]
and $\kappa (1 - z) \ge \hat s_G - z$ if and only if $\kappa \ge \frac{\hat s_G - z}{1 - z}$. The map $z \mapsto \frac{\hat s_G - z}{1 - z}$ has derivative $\frac{\hat s_G - 1}{(1 - z)^2} < 0$ --- this is where the grand slack $1 - \hat s_G > 0$ enters --- so its maximum over $[d, \hat s_G)$ is at $z = d$, with value $\frac{\hat s_G - d}{1 - d} = \hat\kappa \le \kappa$. Hence $a_z \ge \tau(z) + a_1$ with $\tau(z), a_1 \ge 0$, and the IFR superadditivity of \Cref{os:tt-ifr} gives
\[
\bar G_C(a_z) \;\le\; \bar G_C\bigl( \tau(z) + a_1 \bigr) \;\le\; \bar G_C(\tau(z))\, \bar G_C(a_1) .
\]
Note that the activity requirement and the IFR requirement are the \emph{same} inequality $\kappa \ge \frac{\hat s_G - z}{1 - z}$, binding at the entry slice $z = d$. This proves $\mathcal{E}(\cdot\,; \kappa) \ge 0$ for every $\kappa \ge \hat\kappa$, i.e.\ $[\hat\kappa, 1) \subseteq V$. Finally $\hat\kappa = \hat s_C / (1 - d) < 1 \iff \hat s_C < 1 - d \iff \hat s_G < 1$.
\end{proof}

\begin{proof}[Proof of \Cref{thm:twotier}]\label{os:tt-assembly}
Fix $\kappa \ge \kappa^* = \max(\kappa_W, \kappa_G)$, with $\kappa_W < 1$ from \Cref{os:tt-W} and $\kappa_G \le \hat\kappa < 1$ from \Cref{os:tt-hall}, so that $\kappa^* < 1$ and the level range is nonempty.

\emph{Cell couplings.} On the bare-core cell, \Cref{os:tt-W}(i) supplies a coupling $\gamma_1$ of $(\mu^+_1, \mu^-_1)$ concentrated on $K_{1,\kappa}$ displacements. On the grand cell: since $\kappa \ge \kappa_G$ and $V$ is an up-interval, $\kappa \in V$, so \Cref{os:tt-hall}(i) produces the sheared coupling $\tilde\pi$ of $(m^+, m^-)$, which \Cref{os:tt-lift} lifts to a coupling $\gamma_G$ of $(\mu^+_p, \mu^-_G)$ concentrated on $K_{2,\kappa}$ displacements; and \Cref{os:tt-W}(ii) supplies $\gamma_W$ of $(\mu^+_{2,C}, \mu^-_W)$, likewise concentrated on $K_{2,\kappa}$. Set $\gamma_2 := \gamma_W + \gamma_G$: its first marginal is $\mu^+_{2,C} + \mu^+_p = \mu^+_2$ (the origin atom lies in the exclusion cell, so the grand cell's boundary mass is exactly the core faces over $z \ge d$ plus the peripheral face), its second is $\mu^-_W + \mu^-_G = \mu^-_2$ (\Cref{os:tt-partition}(i)), and it is concentrated on $K_{2,\kappa}$ displacements.

\emph{Certification.} By the certification direction of \Cref{os:align-transfer} in budget form (\Cref{rmk:budget-form}), exhibiting a coupling of $(\mu^+_\ell, \mu^-_\ell)$ concentrated on $K_{\ell,\kappa}$ displacements certifies the cell's dominance at unbundling slack $1 - \kappa$. Both selling cells are so equipped at $\kappa = \kappa^*$ (the same couplings, through the nesting of \Cref{os:align-gen}(ii), also certify the weak $\varepsilon = 0$ order): $(f, M)$ is strictly core-regular, with the dominance holding at slack $1 - \kappa^*$ (\Cref{def:align-reg}), as the theorem asserts.

\emph{Standing form.} The two-tier setting (S1)--(S3) relaxes the $C^2$/closed-box-positivity of the standing form, so the sufficiency chain is invoked through its admissible-density extension (\Cref{os:sc-extension}). The hypothesis is met: $f = f_C \otimes f_p$ is an admissible core density at $C = [N]$ --- $\Phi = \Phi_C + 1 + \beta \ge 1 + \beta > 0$ on $\operatorname{int} D$ (virtual-density paragraph); the full-sum density of $t + z$ is continuous, positive on $(0, N)$, and bounded (convolution of $g_C$, continuous and bounded by \Cref{os:tt-adm}(ii), with $f_p \in L^1$, and the analogue of \Cref{os:tt-adm}'s induction extends positivity across the $z$-block), with $r$ times it vanishing at $0$; $z f_p(z) = \beta z^\beta \to 0$ and the core lower faces are handled by \Cref{os:tt-adm}(i)--(ii); and the corner clause holds since $f_p \in C^1$ with $f_p(1) = \beta > 0$ and the core corner is \Cref{os:tt-adm}(iv). \emph{Conclusion.} The menu is finite on $\mathcal{F}_C$, its selling cells are mass-balanced by (S3), it is inclusive by the margins (geometry paragraph above), and $(f, M)$ is strictly core-regular, with the dominance holding at slack $1 - \kappa^*$. Since the proof of \Cref{thm:core-bundle} consumes restricted optimality only through cell mass balance (\Cref{lem:mb-foc}), it applies to the mass-balanced two-tier menu --- extended to the admissible density $f$ by \Cref{os:sc-extension}, and its proof, at slack $1 - \kappa^*$, gives optimality of the two-tier menu in the unrestricted problem for every $\alpha \ge 1/(1 - \kappa^*)$. The equivalence $\hat\kappa < 1 \iff \hat s_G < 1$ is \Cref{os:tt-hall}(ii).
\end{proof}

\section{Proof of \texorpdfstring{\Cref{thm:grand-bundle-nec}}{the grand-bundle necessity theorem}}\label{app:necessity-grand}

Throughout, $M_{\mathcal{F}} = \{(\sigma_\ell, p_\ell^*)\}_{\ell=1}^{L^*}$ is a candidate mechanism on $\mathcal{F} \not\ni [N]$ with a finite menu and the top-cell wedge; $\ell^*$ is a top-priced cell, $p_{\ell^*}^* = \bar p$; $V^*$ is the witness set of \Cref{def:topcell-wedge}; and $M_{\mathcal{F}}^\delta := M_{\mathcal{F}} \cup \{([N], p_{\ell^*}^* + \delta)\}$. For a lottery option $\ell$, $\bar v_\ell(x) := \mathbb{E}_{B \sim \sigma_\ell}[v_B(x)]$ is the buyer's expected valuation --- linear in $x$ by \eqref{eq:standing-v}; the proof uses only this linearity, not that the options are $\alpha$-scaled, while the augmenting offer $v_{[N]}(x) = \alpha \sum_i x_i$ carries the premium since $C \subseteq [N]$. Without loss no two options share a surplus function $\bar v_\ell - p^*_\ell$: among duplicates the seller-favorable tie convention selects the highest-priced, and deleting the others changes neither $u_{M_{\mathcal{F}}}$ nor $t_{M_{\mathcal{F}}}$.

\begin{proof}[Proof of \Cref{thm:grand-bundle-nec}]
\emph{Step 1: a witness ball with uniform margins.} Each set $\{v_{[N]} > \bar v_\ell\}$ and $\{v_{[N]} > 0\}$ is an open half-space (or empty), so $\Omega := \operatorname{int} D \cap U_{[N]}^{M_{\mathcal{F}}} \cap \R^N_{>0}$ is open. By piecewise linearity the cell $C_{\ell^*}$ is, up to a $\lambda_D$-null tie set, a closed polytope, so $\lambda_D(C_{\ell^*} \setminus \operatorname{int} C_{\ell^*}) = 0$ and $V^{**} := V^* \cap \operatorname{int} C_{\ell^*}$ has $\lambda_D(V^{**}) = \lambda_D(V^*) > 0$. Pick any $\hat x \in V^{**}$: it lies in the open set $\Omega \cap \operatorname{int} C_{\ell^*}$, so some closed ball $\overline{W}$, $W := B(\hat x, r)$, fits inside. On the compact $\overline{W}$ the extreme value theorem yields three strictly positive constants,
\[
\eta := \inf_{\overline{W}}\bigl(v_{[N]} - \bar v_{\ell^*}\bigr),
\qquad
m := \inf_{\overline{W}}\Bigl[(\bar v_{\ell^*} - p_{\ell^*}^*) - \max\bigl\{0,\ {\textstyle\max_{\ell' \ne \ell^*}}(\bar v_{\ell'} - p_{\ell'}^*)\bigr\}\Bigr],
\qquad
F(W) := \int_W f\, \d\lambda_D,
\]
the first because $\overline{W} \subseteq U_{[N]}^{M_{\mathcal{F}}}$, the second because every cell-membership inequality is strict on $\operatorname{int} C_{\ell^*}$, the third because $f > 0$ on $D$.

\emph{Step 2: bulk gain $\delta F(W)$.} Fix $\delta \in (0, \min\{\eta, m\}/2)$. For $x \in W$ the maximizer under $M_{\mathcal{F}}$ is uniquely $\ell^*$, with transfer $p_{\ell^*}^*$, and the new offer's surplus is $u_{M_{\mathcal{F}}}(x) + \bigl(v_{[N]}(x) - \bar v_{\ell^*}(x)\bigr) - \delta$: it strictly beats $\ell^*$ (by at least $\eta - \delta > 0$), every $\ell' \ne \ell^*$ (by at least $\eta - \delta + m$), and the outside option, so the buyer switches and pays $p_{\ell^*}^* + \delta$. The transfer thus rises by exactly $\delta$ on $W$; on the rest of $C_{\ell^*}$ it either stays (no switch) or rises by $\delta$ (strict switch), the tie locus $\{v_{[N]} - \bar v_{\ell^*} = \delta\}$ being a level set of a nonzero linear form --- positive on $\overline{W}$ --- and hence $\lambda_D$-null. Therefore
\[
\int_{C_{\ell^*}} \bigl(t_{M_{\mathcal{F}}^\delta} - t_{M_{\mathcal{F}}}\bigr)\, f\, \d\lambda_D \;\ge\; \delta\, F(W).
\]

\emph{Step 3: no cannibalization.} On $C_0$, $t_{M_{\mathcal{F}}} = 0$ and a switcher pays $p_{\ell^*}^* + \delta > 0$, so the integrand is nonnegative $\lambda_D$-a.e.\ (the tie set $\{v_{[N]} = p_{\ell^*}^* + \delta\}$ is again a null hyperplane). On a selling cell $\ell \ne \ell^*$, write $\tau_\ell := p_\ell^* - p_{\ell^*}^* \le 0$ (the top-priced anchor): a buyer switches iff $v_{[N]}(x) - \bar v_\ell(x) \ge \delta - \tau_\ell$, and then the transfer changes by $\delta - \tau_\ell \ge \delta > 0$; on the tie locus the change is $0$ or $\delta - \tau_\ell$, both nonnegative, so no nullity is needed. Every cell's contribution is nonnegative, and summing over the partition $\{C_\ell\}_{\ell=0}^{L^*}$,
\[
R(M_{\mathcal{F}}^\delta) - R(M_{\mathcal{F}}) \;\ge\; \delta\, F(W) \;>\; 0
\qquad \text{for every } \delta \in \bigl(0, \min\{\eta, m\}/2\bigr):
\]
the wedge-satisfying candidate is strictly improved by the grand-bundle deviation, so it is not optimal in the unrestricted problem. In particular, if the restricted-optimal mechanism on $\mathcal{F}$ satisfies the wedge, $[N] \in \mathcal{F}$ is necessary for unrestricted optimality.
\end{proof}

\section{Proof of \texorpdfstring{\Cref{thm:incl-necessity}}{the inclusivity necessity theorem}}\label{app:necessity-incl}

\begin{proof}[Proof of \Cref{thm:incl-necessity}]
Fix the excluded item $i$ and a complementarity level $\alpha > 0$, and write $\kappa := \mathbf 1\{C \subseteq \{i\}\}$, so the singleton's valuation is $v_{\{i\}}(x) = \alpha^\kappa x_i$ with standalone value $V_i = \alpha^\kappa \bar v_i$. By \Cref{def:excluded} the base
\[
G := \bigl\{x_{-i} \in {\textstyle\prod_{j \ne i}}[0, \bar v_j] : (\bar v_i, x_{-i}) \in C_0\bigr\}
\]
has $\lambda_{N-1}(G) =: c_G > 0$. Each option's expected valuation $\bar v_\ell(x) = \mathbb E_{B \sim \sigma_\ell}[v_B(x)]$ has nonnegative gradient, so $u_{M_{\mathcal{F}}} = \max_\ell\{\bar v_\ell - p_\ell, 0\}$ is nondecreasing in $x_i$; hence for $x_{-i} \in G$ the entire column $\{(t, x_{-i}) : t \le \bar v_i\}$ lies in $C_0$, where $u_{M_{\mathcal{F}}} = 0$. Set
\[
r_\varepsilon := V_i - \varepsilon, \qquad \varepsilon' := \varepsilon/\alpha^\kappa, \qquad M_{\mathcal{F}}^\varepsilon := M_{\mathcal{F}} \cup \{(\{i\}, r_\varepsilon)\},
\]
and $\bar p := \max_\ell p_\ell$, $m_0 := \inf_D f > 0$, and the local bound
$M_0 := \sup_{T_i} f$ over the top slab $T_i := \{x \in D : x_i \ge \bar v_i/2\}$, which contains every switcher on the range $\varepsilon \le V_i/2$ to which the argument confines itself --- only the density's size on the slab enters the loss bound.

\emph{Switchers lie in the top slab.} A type $x$ strictly prefers the new offer iff $v_{\{i\}}(x) - r_\varepsilon > u_{M_{\mathcal{F}}}(x) \ge 0$, i.e.\ $\alpha^\kappa x_i > V_i - \varepsilon$, which forces $x_i > \bar v_i - \varepsilon'$. So every switcher has $x_i \in (\bar v_i - \varepsilon', \bar v_i]$.

\emph{Gain.} For $x_{-i} \in G$ and $x_i \in (\bar v_i - \varepsilon', \bar v_i]$ the type lies in $C_0$, so $u_{M_{\mathcal{F}}}(x) = 0 < v_{\{i\}}(x) - r_\varepsilon$: it switches and pays $r_\varepsilon$, where before it paid $0$ ($\lambda_D$-a.e.\ on $A_\varepsilon$: a type in $C_0$ transacts only on the $\lambda_D$-null union of option-indifference hyperplanes $\{\bar v_\ell = p_\ell\}$, where the seller-favorable tie-break is immaterial). The gain region $A_\varepsilon := G \times (\bar v_i - \varepsilon', \bar v_i]$ has $\lambda_D(A_\varepsilon) = c_G\, \varepsilon'$, so for $\varepsilon \le V_i/2$ the gross gain is at least
\[
r_\varepsilon \cdot m_0 \cdot \lambda_D(A_\varepsilon) \;\ge\; \tfrac{1}{2} V_i\, m_0\, c_G\, \varepsilon' \;=\; \tfrac{1}{2}\bar v_i\, m_0\, c_G\, \varepsilon \;=\; \Theta(\varepsilon),
\]
using $V_i\,\varepsilon' = \bar v_i\,\varepsilon$.

\emph{Loss.} The only revenue at risk is from types who bought some option under $M_{\mathcal{F}}$ and switch. Let such a type buy option $\ell$ ($\ell \ge 1$), so on its cell $u_{M_{\mathcal{F}}}(x) = \bar v_\ell(x) - p_\ell$. Switching gives $v_{\{i\}}(x) - r_\varepsilon \ge \bar v_\ell(x) - p_\ell \ge 0$, and since $v_{\{i\}}(x) \le V_i = r_\varepsilon + \varepsilon$ this forces
\[
x_i \in (\bar v_i - \varepsilon', \bar v_i] \quad\text{and}\quad \bar v_\ell(x) \in [p_\ell, p_\ell + \varepsilon].
\]
The gradient $\nabla \bar v_\ell = \mathbb E_{\sigma_\ell}[\alpha^{\mathbf 1\{C \subseteq B\}} \mathbf 1_B]$ is parallel to $e_i$ only if every bundle in $\operatorname{supp}\sigma_\ell$ is contained in $\{i\}$, i.e.\ $\sigma_\ell = \delta_{\{i\}}$ (handled last). Otherwise $\nabla\bar v_\ell$ has a nonzero entry in some $j_0 \ne i$; applying Fubini in $(x_i, x_{j_0}, x_{\mathrm{rest}})$, $x_i$ ranges over an interval of length $\varepsilon'$ and, for each fixed $(x_i, x_{\mathrm{rest}})$, $x_{j_0}$ over an interval of length $\le \varepsilon/(\nabla\bar v_\ell)_{j_0}$. Hence
\[
\lambda_D(\{\text{switchers}\} \cap C_\ell) \;\le\; \varepsilon' \cdot \frac{\varepsilon}{(\nabla\bar v_\ell)_{j_0}} \cdot \mathrm{diam}(D)^{N-2} \;=\; O(\varepsilon^2),
\]
and the cell's lost revenue is at most $\bar p \cdot M_0 \cdot O(\varepsilon^2) = O(\varepsilon^2)$.

\emph{The parallel case $\sigma_\ell = \delta_{\{i\}}$ (item $i$ already offered) contributes no loss.} Here $\bar v_\ell = v_{\{i\}}$, and an excluded $F_i$ forces $p_\ell \ge V_i$: a near-top type $(\bar v_i, x_{-i})$ with $x_{-i} \in G$ is excluded, so its surplus from $\ell$ satisfies $V_i - p_\ell \le 0$. Since $\max_x v_{\{i\}}(x) = V_i \le p_\ell$, the cell $\{x : v_{\{i\}}(x) \ge p_\ell\} \subseteq \{x_i \ge \bar v_i\}$ on which $\ell$ is bought is $\lambda_D$-null, so switching from $\ell$ changes revenue only on a null set.

\emph{Net.} Summing the $O(\varepsilon^2)$ losses over the finitely many cells (at most $L + 1$),
\[
R(M_{\mathcal{F}}^\varepsilon) - R(M_{\mathcal{F}}) \;\ge\; \tfrac{1}{2}\bar v_i\, m_0\, c_G\, \varepsilon - C\, \varepsilon^2
\]
for a constant $C = C(M_{\mathcal{F}}, f, D, \alpha)$. With $\varepsilon_0 := \min\{V_i/2,\ \bar v_i\, m_0\, c_G/(2C)\}$ the right-hand side is positive on $(0, \varepsilon_0)$. What is uniform in $\alpha$ is the gross gain $\tfrac{1}{2}\bar v_i\, m_0\, c_G\, \varepsilon$; the constant $C$, and with it the profitable window $(0, \varepsilon_0)$, may shrink as $\alpha$ grows. At every fixed $\alpha > 0$ the window is nonetheless nonempty and the deviation profitable; hence no mechanism leaving a top face excluded is optimal at any $\alpha > 0$.
\end{proof}

\section{An explicit upper-threshold bound for the two-good uniform: dual-transport line construction}\label{os:dual-transport}

This section derives the explicit constant $\alpha^*_{\mathrm{unif}} = (9+4\sqrt 6)/15$ as a sufficient upper-threshold bound for pure bundling in the 2-good iid uniform benchmark.

\paragraph{Setup.} For $N=2$, $f \equiv 1$ on $D = [0,1]^2$, and $v(x) = (x_1, x_2, \alpha(x_1+x_2))$, the selling cell is $C_1 = \{x_1+x_2 \ge \hat s\}$ ($\hat s = \sqrt{2/3}$, area $2/3$) and the exclusion cell is $C_0 = \{x_1+x_2 < \hat s\}$. The dual measure carries the origin atom $\delta_0 \in C_0$, unit-density top-face traces on $F_1, F_2 \subset C_1$ (so $\mu^+(F_j) = 1$), and interior density $\mu^- = 3\,\lambda_D$. On $C_0$ the atom ($\mu^+(C_0) = 1$) couples to the interior sinks ($\mu^-(C_0) = 3 \cdot \tfrac13 = 1$) by displacements $-x' \le 0$, which lie in the no-purchase cone $\{h : c_\alpha(h) = 0\}$; it remains to couple $F_1, F_2$ to the interior of $C_1$ ($\mu^-(C_1) = 2$).

\paragraph{Diagonal split and cone constraint.} The main diagonal $\{x_2 = x_1\}$ splits $C_1$ into congruent halves of area $\tfrac13$; couple $F_1$ to the lower half $T := C_1 \cap \{x_2 \le x_1\}$ and $F_2$ to the upper half by reflection, so each half's sink mass $3 \cdot \tfrac13 = 1$ matches its top face. For a source $(1, t) \in F_1$, the grand-bundle cone $\alpha(h_1+h_2) \ge \max\{0, h_1, h_2\}$ admits a sink $x^2$ exactly when $x^2$ lies below the \emph{cone-extremal line} $\ell_\alpha(t) : x_2 = t + m_\alpha(x_1 - 1)$, $m_\alpha := -(\alpha-1)/\alpha$. These lines share the slope $m_\alpha$, so they are parallel and the reachable regions $R(t) := T \cap \{\text{below } \ell_\alpha(t)\}$ are nested increasing in $t$. A cone-respecting coupling of $F_1$ (uniform in $t$) onto $T$ therefore exists iff the Hall condition $3\,A(t) \ge t$ holds for every $t \in (0,1)$, where $A(t) := \lambda_D(R(t))$ --- a nested-reachability instance of Strassen's theorem \citep{Strassen1965, Kellerer1984}, with equality forced at $t = 1$ by total mass. (Nesting collapses the criterion to the down-set cuts $[0,t]$: the sinks reachable only from sources $\ge \theta$ form the up-set of $\nu$-mass $1 - 3A(\theta)$, and feasibility there is the same inequality $3A(\theta) \ge \theta$.)

\paragraph{Threshold.} Where the extremal line meets the anti-diagonal below the main diagonal --- $a(t) \ge \hat s/2$, with $a(t) := (\hat s - t + m_\alpha)/(1 + m_\alpha)$ the $x_1$-coordinate of $\ell_\alpha(t) \cap \{x_1 + x_2 = \hat s\}$ --- the diagonal does not bind and direct integration (splitting at $x_1 = \hat s$) gives
\begin{equation}
A(t) = t(1-a) - \tfrac{m_\alpha}{2}(1-a)^2 - \tfrac{1}{2}(\hat s - a)^2 .
\end{equation}
The gap $g(t) := 3A(t) - t$ has $g'(t) = 3(1-a) - 1$, vanishing at $a(t^\star) = 2/3$, with $g''(t) = 3/(1+m_\alpha) > 0$; so its global minimum is $t^\star = \hat s - \tfrac23 + \tfrac{m_\alpha}{3} \approx 0.083$, which lies in this regime ($a(t^\star) = 2/3 > \hat s/2$). Setting $g(t^\star) = 0$ gives $m_{\alpha^*} = -9(\hat s - 2/3)^2 = 4\sqrt 6 - 10$, hence
\begin{equation}
\alpha^*_{\mathrm{unif}} = \frac{1}{1 + m_{\alpha^*}} = \frac{1}{4\sqrt 6 - 9} = \frac{9 + 4\sqrt 6}{15} \approx 1.2532 .
\end{equation}

\paragraph{The Hall condition holds for every $t$.} At $\alpha = \alpha^*_{\mathrm{unif}}$ the minimum $t^\star$ sits in the below-diagonal regime, where $g(t) \ge g(t^\star) = 0$. For larger $t$ --- once $a(t) < \hat s/2$, i.e. $t > t_\times$, where at $t_\times$ the line passes through the corner $(\hat s/2, \hat s/2)$ --- the diagonal trims $R(t)$: the extremal line then bounds $R(t)$ only over $x_1 \in [y_D(t), 1]$, where $y_D(t) := (t - m_\alpha)/(1 - m_\alpha)$ is its diagonal crossing (the moving-endpoint boundary terms cancel because $\ell_\alpha(t)$ meets the diagonal there), so $A'(t) = 1 - y_D(t)$ and $g'(t) = 2 - 3\,y_D(t)$. Since $y_D$ is increasing, $g'$ has a single sign change on $(t_\times, 1)$, so $g$ is unimodal there; its endpoint values are $g(t_\times) > 0$ (inherited from $g' = 2 - 3a > 0$ on $(t^\star, t_\times)$) and $g(1) = 3 \cdot \tfrac13 - 1 = 0$, so $g \ge 0$ on $(t_\times, 1)$. Combined with $g \ge g(t^\star) = 0$ on the below-diagonal regime, $g \ge 0$ for every $t$, with equality only at $t^\star$ and $t = 1$. The $F_1 \to T$ coupling therefore exists; reflection supplies the $F_2$ coupling, and with the exclusion-cell leg the result is a within-cell, cone-feasible plan $\gamma$ with $\gamma_1 - \gamma_2 = \mu$, certifying pure bundling by \Cref{prop:saddle-D}. Raising $\alpha$ lifts each line $\ell_\alpha(t)$ pointwise while fixing $A(1) = \tfrac13$, so $g$ increases in $\alpha$ at every $t < 1$; the certificate hence persists, and pure bundling is optimal for all $\alpha \ge \alpha^*_{\mathrm{unif}}$.

\paragraph{Necessity.} For $\alpha < \alpha^*_{\mathrm{unif}}$ the minimizer $t^\star_\alpha$ still lies in the below-diagonal regime, where $A$ equals the unconfined below-line area; $g(t^\star_\alpha) < 0$ then forces $3A(t^\star_\alpha) < t^\star_\alpha$, so even the largest cone-respecting reachable set fails the Hall condition and no such transport exists. Hence $\alpha^*_{\mathrm{unif}}$ is the exact threshold of the construction.

\end{document}